\newif\ifarxiv\arxivtrue

\newif\iflong
\longtrue

\newif\ifcut
\cuttrue

\documentclass[USenglish, a4paper, thm-restate,numberwithinsect, cleveref]{lipics-v2021}
\hideLIPIcs
\nolinenumbers

\usepackage{graphicx} 
\usepackage{xcolor}

\definecolor{pcol}{rgb}{0, 0.75, .8}

\newcommand\hd[1][]{\frac{\Delta #1}{2}}
\newcommand\none[1][e]{\overline{#1}}
\newcommand{\chro}{chronological cycle\xspace}

\newcommand\leng{3}
\newcommand\half{2}

\ifcut
\newcommand{\even}[1]{{#1}}
\newcommand{\odd}[1]{{#1}}
\newcommand{\constant}[1]{{#1}}
\else
\newcommand{\even}[1]{{\color{red} #1}}
\newcommand{\odd}[1]{{\color{blue} #1}}
\newcommand{\constant}[1]{{\color{orange} #1}}
\fi

\usepackage{tikz}


\usetikzlibrary{decorations.pathreplacing, decorations.pathmorphing, calc, shapes.arrows, positioning, matrix}
\usepackage{tkz-graph}
\tikzstyle{vertex}=[circle, draw, inner sep=0pt, minimum size=6pt]
\newcommand{\vertex}{\node[vertex]}

\usepackage{amsthm}
\usepackage{amsmath}
\usepackage{amsfonts}
\usepackage{xspace}
\usepackage{hyperref}
\usepackage{tabularx}
\usepackage{cite}

\newcommand{\rex}[1]{\the\numexpr #1 \relax}

\sloppy

\newcommand{\Oh}{\mathcal{O}}
\newcommand{\NP}{NP}

\newcommand{\conn}{\operatorname{conn}}
\newcommand{\conngraph}{\operatorname{conngraph}}
\newcommand{\ispath}{\operatorname{path}}
\newcommand{\isspath}{\operatorname{spath}}
\newcommand{\dist}{\operatorname{dist}}
\newcommand{\partition}{\operatorname{partition}}
\newcommand{\duration}{\operatorname{duration}}

\DeclareMathOperator{\tw}{tw}
\DeclareMathOperator{\diam}{diam}
\DeclareMathOperator{\rad}{rad}
\DeclareMathOperator{\nd}{nd}

\newcommand{\COL}[1][3]{\textsc{#1-Coloring}\xspace}

\crefname{claim}{Claim}{Claims}

\newcommand{\HS}{\textsc{Hitting Set}\xspace}
\newcommand{\mf}{\mathcal{F}}

\newcommand{\probnameshort}{\textsc{STGR}\xspace}
\newcommand{\probnamelong}{\textsc{Stretched Temporal Graph Realization}\xspace}

\newcommand{\STGR}{\probnameshort}
\newcommand{\LS}{\textsc{LS \probnameshort}\xspace}

\newcommand{\problemdef}[3]{
	\begin{center}\fbox{
	\begin{minipage}{0.95\textwidth}
        \vspace{3pt}
		\noindent
		#1
		\vspace{5pt}\\
		\setlength{\tabcolsep}{3pt}
		\begin{tabularx}{\textwidth}{@{}lX@{}}
			\text{Input:}     & #2 \\
			\text{Question:}  & #3
		\end{tabularx}
	\end{minipage}}
	\end{center}
}

\title{Temporal Graph Realization With Bounded Stretch}

\author{George B. Mertzios}{Department of Computer Science, Durham University, UK}{george.mertzios@durham.ac.uk}{https://orcid.org/0000-0001-7182-585X}{Supported by the EPSRC grant EP/P020372/1.}

\author{Hendrik~Molter}{Department of Computer Science, Ben-Gurion~University~of~the~Negev, 
	Beer-Sheva, 
	Israel}{molterh@post.bgu.ac.il}{https://orcid.org/0000-0002-4590-798X}{Supported by the Israel Science Foundation, grant nr.~1470/24, by the European Union's Horizon Europe research and innovation programme under grant agreement 949707, and by the European Research Council, grant nr.~101039913 (PARAPATH).}
	
\author{Nils Morawietz}{Institute of Computer Science, Friedrich Schiller University Jena,  Germany\\ LaBRI, Université de Bordeaux, France}{nils.morawietz@uni-jena.de}{https://orcid.org/0000-0002-7283-4982}{Supported by the French ANR, project ANR-22-CE48-0001 (TEMPOGRAL)}

\author{Paul G. Spirakis}{Department of Computer Science, University of Liverpool, UK}{p.spirakis@liverpool.ac.uk}{https://orcid.org/0000-0001-5396-3749}{Supported by the EPSRC grant EP/P02002X/1.}

\authorrunning{George B. Mertzios, Hendrik Molter, Nils Morawietz, and Paul G. Spirakis} 

\Copyright{George B. Mertzios, Hendrik Molter, Nils Morawietz, and Paul G. Spirakis}

\ccsdesc[500]{Theory of computation~Graph algorithms analysis}
\ccsdesc[500]{Mathematics of computing~Discrete mathematics}

\keywords{Temporal graph, periodic temporal labeling, fastest temporal path, graph realization, temporal connectivity, stretch.}

\begin{document}

\maketitle
\begin{abstract}
A periodic temporal graph, in its simplest form, is a graph in which every edge appears exactly once in the first $\Delta$ time steps, and then it reappears recurrently every $\Delta$ time steps, where $\Delta$ is a given period length. 
This model offers a natural abstraction of transportation networks where each transportation link connects two destinations periodically. 
From a network design perspective, a crucial task is to assign the time-labels on the edges in a way that optimizes some criterion. 
In this paper we introduce a very natural optimality criterion that captures how the temporal distances of all vertex pairs are ``stretched'', compared to their physical distances, i.e.~their distances in the underlying static (non-temporal) graph. 
Given a static graph $G$, the task is to assign to each edge one time-label between 1 and $\Delta$ such that, in the resulting periodic temporal graph with period~$\Delta$, the duration of the fastest temporal path from any vertex $u$ to any other vertex $v$ is at most $\alpha$ times the distance between $u$ and $v$ in $G$. 
Here, the value of $\alpha$ measures how much the shortest paths are allowed to be \textit{stretched} once we assign the periodic time-labels.

Our results span three different directions: 
First, we provide a series of approximation and NP-hardness results. Second, we provide approximation and fixed-parameter algorithms. Among them, we provide a simple polynomial-time algorithm (the \textit{radius-algorithm}) which always guarantees an approximation strictly smaller than $\Delta$, and which also computes the optimum stretch in some cases.
Third, we consider a parameterized local search extension of the problem where we are given the temporal labeling of the graph, but we are allowed to change the time-labels of at most $k$ edges; for this problem we prove that it is W[2]-hard but admits an XP algorithm with respect to $k$.
\end{abstract}

\section{Introduction}
Graph realization is a classical problem where one is asked to decide whether a graph with a certain property, such as prescribed distances between vertex pairs or prescribed degrees for vertices,  exists~\cite{chen1966realization,erdos1960graphs,hakimi1962realizability,hakimi1965distance}.
These types of problems have a natural analog in the temporal graph setting\footnote{A \emph{temporal graph} is a graph where one or more time-labels are assigned to every edge, indicating at which times this edge is active. We give a formal definition in \cref{sec:prelims}.}. Here we are given a static graph, and the task is to assign time-labels to each edge in order to create a temporal graph with some desired property.
Temporal graph realization is an emerging topic in temporal graph algorithmics, and several connectivity-related properties have been considered. 
Previous work considered the property of temporal connectivity between every vertex pair or subsets thereof~\cite{akrida2017complexity,KlobasMMS22,MertziosMS19}, the size of so-called reachability sets~\cite{enright2021assigning}, exact reachability graphs~\cite{EMM25}, fastest temporal paths\footnote{A \emph{temporal path} uses edges with strictly increasing time-labels. A \emph{fastest} temporal path is a temporal path that minimizes the difference between the arrival time and the starting time. We give a formal definition in \cref{sec:prelims}.} of a prescribed (exact) duration between every vertex pair~\cite{EMW24,KlobasMMS24}, and fastest temporal paths with a prescribed upper bound on the duration for every vertex pair~\cite{MMMS24,meusel2025directedtemporaltreerealization}. The latter two problems have been studied in the \emph{periodic setting}, where every edge reappears after some given period length $\Delta$.

We focus on the last among the mentioned problem settings, where we assume that the input consists of a graph and an upper bound for every (ordered) vertex pair that shall be respected by a fastest temporal path connecting the vertices. This problem is naturally motivated in the area of network design: we are given a transportation infrastructure and are tasked to develop a (periodic) schedule that guarantees fast delivery.
In previous work~\cite{MMMS24}, we showed that this problem is NP-hard even if the input graph is a star and the period length~$\Delta$ is constant. This suggests that a lot of complexity is ``hidden'' in the prescribed upper bounds. However, it is, arguably, unnatural to have arbitrary upper bounds in the input that can be completely unrelated to the topology of the input graph. This leads us to the following problem.

We investigate the setting where we wish that the durations of all fastest temporal paths to be by at most some factor $\alpha$ times the distances of the respective vertex pairs in the (static) input graph. We then say that the realization has a bounded (multiplicative) \emph{stretch}. This is a well-motivated and well-established concept in network design, especially in the context of spanning trees~\cite{AbrahamBN08,AbrahamN19,Cohen98,ElkinEST08,EmekP08} or temporal spanners~\cite{BiloDG0R22}.
It allows the duration of connections between entities of the network to be related to the physical distance of those entities in a natural way.
We formally define the problem as follows. All used terminology is formally defined in \cref{sec:prelims}.

\problemdef{\probnamelong (\probnameshort)}{A graph $G$ (static, undirected), $\Delta\in \mathbb{N}$, and a rational number~$\alpha\geq 1$.}{Let $D_G$ be the distance matrix of $G$. Can we assign one label in $\{1,\ldots,\Delta\}$ to every edge such that in the resulting~$\Delta$-periodic temporal graph, for every vertex pair $u,v$ the duration of a fastest path from $u$ to $v$ is at most $\alpha\cdot D_G(u,v)$?}

\subparagraph{Our Contribution.}
\probnameshort is a natural and well-motivated special case of the problem that we investigated in previous work~\cite{MMMS24}, where the upper bounds can be arbitrary.
In this work, we show that \probnameshort is NP-hard and hard to approximate, where we consider the goal of the optimization version of the problem to minimize the stretch~$\alpha$. 
Formally, we prove the following in \cref{sec:approxhardness}.
\begin{itemize}
    \item For every $0<\varepsilon<1$ and every $c>1$, \probnameshort is NP-hard to approximate within a factor of $\Delta^{1-\varepsilon}$ or within a factor of $2^{n^c}$.
\end{itemize}
Note that a stretch of $\Delta$ can trivially be achieved by assigning the same label to each edge. To complement these results, we show that we can achieve a strictly better approximation ratio than $\Delta$, especially on graphs with small radius or diameter. Formally, we prove the following in \cref{sec:radius}. 
\begin{itemize}
    \item We can compute in polynomial time a solution for a \probnameshort instance with stretch at most $\Delta - \frac{\Delta-1}{\min(\rad+1,\diam)}$, where $\diam$ and~$\rad$ denote the diameter and the radius of~$G$, respectively.
\end{itemize}
Next, we focus on the decision version of the problem. We show that \probnameshort is NP-hard even of $\Delta$ and $\alpha$ are small constants and in cases where the input graphs have a constant diameter. Formally, we prove the following in \cref{sec:hardness}.
\begin{itemize}
    \item \probnameshort is NP-hard if $\Delta= 3$, $\alpha = 1$, and the input graph has diameter 2.
    This answers an open question by Erlebach et al.~\cite{EMW24}.
    \item \probnameshort is NP-hard for each $\Delta\ge 3$, even if the diameter is in $\Oh(\Delta)$. 
\end{itemize}
Recall that we can assume that $\alpha$ is at most $\Delta$. 
We also show that \probnameshort is NP-hard for each constant value of~$\alpha \geq 1$.
To complement these hardness results, we give the following algorithmic results in \cref{sec:mso}.
We use standard terminology from parameterized complexity theory~\cite{C+15}.
\begin{itemize}
    \item \probnameshort is fixed-parameter tractable when parameterized by the neighborhood diversity of the input graph and $\Delta$.
    \item \probnameshort is fixed-parameter tractable when parameterized by the treewidth of the input graph, the diameter of the input graph, and $\Delta$. 
\end{itemize}
These results are obtained by using extensions of monadic second order logic (MSO)~\cite{courcelle1990monadic,courcelle2012graph,KnopKMT19} to formulate \probnameshort, and then applying (extension of) Courcelle's theorem~\cite{courcelle1990monadic,courcelle2012graph,KnopKMT19}.

Finally, we develop a local search algorithm for our problem. This algorithm, given a labeling, checks whether the stretch can be improved by changing at most a given number $k$ of labels. We call $k$ the search radius. We show the following in \cref{sec:localsearch}.
\begin{itemize}
    \item Given a periodic labeling for a graph $G$ and some constant $k$, we can compute the best possible stretch that can be obtained by changing at most $k$ labels in polynomial time; that is, we present an algorithm that is in XP with respect to the parameter $k$. 
    \item We show that the above-described local search problem is W[2]-hard when parameterized by the search radius $k$.
\end{itemize}

\subparagraph{Related Work.} Since the 1960s, graph realization problems have been studied in many different settings. A complete literature review is beyond the scope of this work. While in the static setting, many different properties for realization have been studied, in particular, the realization of degree sequences, the previous work in the temporal setting considers mostly connectivity-related properties. We review the relevant work in the introduction.

The local search variant of our problem can also be seen as a temporal graph modification problem: we may modify up to $k$ labels to manipulate the graph's connectivity properties. Many problems in this setting have been studies, where the modifications are typically edge removal, edge delay~\cite{deligkas2022optimizing,enright2021deleting,Meeks22,MolterRZ21}, or vertex removal~\cite{Flu+19a,KKK00,Zsc+19}.
Finally, \emph{periodic} temporal graphs also have been investigated in other problem settings, most prominently in the context of cop and robber games~\cite{DBLP:conf/sirocco/CarufelFSS23,DBLP:journals/corr/abs-2310-13616,Arrighi2023Multi,ErlebachMSW24,morawietz2020timecop,Morawietz021}.

\section{Preliminaries}\label{sec:prelims}
An undirected graph~$G=(V,E)$ consists of a set~$V$ of vertices 
and a set~$E \subseteq \binom{V}{2}$ of edges.
We denote by~$V(G)$ and~$E(G)$ the vertex and edge set of~$G$, respectively. Whenever no confusion arises, we denote $V(G)$ and~$E(G)$ by just $V$ and $E$, respectively. We use standard concepts and terminology from graph theory like~\emph{diameter}, \emph{radius}, and~\emph{eccentricity}~\cite{Die16}.

Let $G=(V,E)$ and $\Delta\in \mathbb{N}$, and let $\lambda: E \rightarrow \{1,\ldots,\Delta\}$ be an edge-labeling function that assigns to every edge of $G$ exactly one of the labels from $\{1,\ldots,\Delta\}$. 
Then we denote by $(G,\lambda,\Delta)$ the \emph{$\Delta$-periodic temporal graph} $(G,L)$, where for every edge $e\in E$ we have $L(e)=\{\lambda(e) + i\Delta \mid i\geq 0\}$. 
In this case, we call $\lambda$ a \emph{$\Delta$-periodic labeling} of~$G$. 
When it is clear from the context, we drop $\Delta$ and denote the ($\Delta$-periodic) temporal graph by $(G,\lambda)$. 
We assume that~$\Delta$ is encoded in binary in instances of~\probnameshort.
Hence, the size of an instance is linear in $n$, $m$, $\log \Delta$, and the encoding length of~$\alpha$.

A  \emph{temporal $(s,z)$-walk} (or \emph{temporal walk}) of length~$k$ from vertex $s=v_0$ to vertex $z=v_k$ in a $\Delta$-periodic temporal graph~$(G,L)$ is a sequence $P = \left(\left(v_{i-1},v_i,t_i\right)\right)_{i=1}^k$ of triples that we call \emph{transitions},  such that for all $i\in[k]$ we have that $t_i\in L(\{v_{i-1},v_i\})$ and for all $i\in [k-1]$ we have that $t_i < t_{i+1}$.
Moreover, we call $P$ a \emph{temporal $(s,z)$-path} (or \emph{temporal path})  
of length~$k$ if~$v_i\neq v_j$ for all~$i, j\in \{0,\ldots,k\}$ with $i\neq j$.
Given a temporal path $P=\left(\left(v_{i-1},v_i,t_i\right) \right)_{i=1}^k$, we denote the set of vertices of $P$ by $V(P)=\{v_0,v_1,\ldots,v_k\}$.
A temporal $(s,z)$-path $P=\left(\left(v_{i-1},v_i,t_i\right)\right)_{i=1}^k$ is \emph{fastest} if for all temporal $(s,z)$-path $P'=\left(\left(v'_{i-1},v'_i,t'_i\right)\right)_{i=1}^{k'}$ we have that $t_k-t_0\le t'_{k'}-t'_0$. We say that the \emph{duration} of $P$ is $d(P)=t_k-t_0+1$. 

Let $P$ be a temporal path on the edges $e_1,\ldots,e_k$ in a $\Delta$-periodic labeling $\lambda$ of $G$. For every $i=1,\ldots,k-1$ let $\lambda(e_i)\in \{1,\ldots,\Delta\}$ be the label assigned to edge $e_i$, and let $v_i$ be the common vertex of the edges $e_i$ and $e_{i+1}$. The \emph{waiting time} $\text{wait}(v_i)$ of $P$ on vertex $v_i$ is 
$\lambda(e_{i+1})-\lambda(e_{i})$ (if $\lambda(e_{i+1})>\lambda(e_{i})$), 
or $\Delta +\lambda(e_{i+1})-\lambda(e_{i})$ (if $\lambda(e_{i+1})<\lambda(e_{i})$), 
or $\Delta$ (if $\lambda(e_{i+1})=\lambda(e_{i})$). 
Then, the duration of $P$ is $\sum_{i=1}^{k} \text{wait}(v_i) + 1$.

The next observation follows easily from the fact that, the worst-case for the stretch for $(u,v)$ is realized when all edges of the graph have the same time-label.

\begin{observation}
\label{worst case durations}
Let~$\lambda$ be a~$\Delta$-labeling for a graph~$G$. 
Then for any two vertices~$u$ and~$v$, the stretch for~$(u,v)$ under~$\lambda$ is at most~$\Delta - \frac{\Delta-1}{\dist(u,v)}$.
\end{observation}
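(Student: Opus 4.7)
The plan is to exhibit, for any pair $u, v$, an explicit temporal $(u,v)$-path along a shortest static $(u,v)$-path in $G$ whose duration is at most $(d-1)\Delta + 1$, where $d := \dist(u,v)$. Since the fastest temporal $(u,v)$-path can only be shorter, any upper bound on the duration of this particular temporal path yields an upper bound on the stretch for $(u,v)$, from which the claim follows by a direct arithmetic simplification.

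First, I would fix a shortest path $u = w_0, w_1, \ldots, w_d = v$ in $G$ with edges $e_i = \{w_{i-1}, w_i\}$ for $i \in \{1, \ldots, d\}$. To turn this static path into a temporal path, I would start by traversing $e_1$ at time $t_1 := \lambda(e_1)$ and then, inductively at each intermediate vertex $w_i$, take the smallest time $t_{i+1} > t_i$ at which $e_{i+1}$ is active. Because $\lambda$ is $\Delta$-periodic, every edge is active at least once in every window of $\Delta$ consecutive time steps, so $t_{i+1} - t_i \le \Delta$, i.e.\ $\text{wait}(w_i) \le \Delta$ in the notation of the preliminaries; this matches the case-distinction in the definition of the waiting time (the bound $\Delta$ being attained precisely when $\lambda(e_i) = \lambda(e_{i+1})$).

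Summing these waiting times, the duration of the temporal $(u,v)$-path so constructed is
\[
\sum_{i=1}^{d-1} \text{wait}(w_i) \; + \; 1 \;\le\; (d-1)\Delta + 1.
\]
Since the fastest temporal $(u,v)$-path has duration at most this value, the stretch for $(u,v)$ is at most
\[
\frac{(d-1)\Delta + 1}{d} \;=\; \Delta - \frac{\Delta-1}{d} \;=\; \Delta - \frac{\Delta-1}{\dist(u,v)},
\]
as required. No real obstacle is anticipated: the argument is a short worst-case computation, and it matches the intuition recorded in the statement, namely that tightness occurs exactly when all edges of $G$ receive the same time-label, forcing every one of the $d-1$ intermediate waits to equal $\Delta$.
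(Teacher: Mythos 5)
Your proof is correct and is essentially the paper's own argument made explicit: the paper only remarks that the observation ``follows easily'' from the fact that the worst case is all edges sharing a label, and your construction along a shortest static path with each waiting time bounded by $\Delta$, giving duration at most $(d-1)\Delta+1$ and stretch $\frac{(d-1)\Delta+1}{d}=\Delta-\frac{\Delta-1}{d}$, is exactly that calculation spelled out.
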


Finally, we give a brief argument that we can use polynomially many calls to a decision oracle for \probnameshort so find the optimal stretch for a given input graph. Note that na\"ively trying out all possible values for $\alpha$ does not yield polynomial time.

\begin{lemma}\label{lem:decisionvsoptimization}
    Given a graph $G$ and some $\Delta$, one can compute the smallest $\alpha$ such that $(G,\Delta,\alpha)$ is a yes-instance of \probnameshort with $\Oh(\diam(G)\cdot \log(\diam(G)\cdot \Delta))$ calls to a decision oracle for \probnameshort.
\end{lemma}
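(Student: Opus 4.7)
The plan is to exploit the fact that the optimal stretch $\alpha^*$ is not an arbitrary rational: it must have the form $a/b$ where $b = \dist(u,v)$ for some vertex pair $(u,v)$ in $G$ and $a$ is the integer duration of a fastest temporal $(u,v)$-path in an optimal labeling. Consequently $b \in \{1,\ldots,\diam(G)\}$, and by \cref{worst case durations} we also have $a \le b \cdot \Delta \le \diam(G) \cdot \Delta$. Thus every ``interesting'' value of $\alpha$ lies on a two-dimensional grid indexed by $(a,b)$, and although this grid has size $\Omega(\diam(G)^2 \cdot \Delta)$, its axis-aligned structure will allow a nested binary search.

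Concretely, for each denominator $b \in \{1, \ldots, \diam(G)\}$, I would binary-search over numerators $a \in \{b, \ldots, b \cdot \Delta\}$ for the smallest $a_b$ such that the decision oracle reports $(G, \Delta, a_b/b)$ to be a yes-instance. This is valid because feasibility is monotone in $\alpha$: any labeling witnessing stretch at most $\alpha$ also witnesses stretch at most any $\alpha' \ge \alpha$. Each binary search uses $\Oh(\log(b \cdot \Delta)) = \Oh(\log(\diam(G) \cdot \Delta))$ oracle calls, so the total is $\Oh(\diam(G) \cdot \log(\diam(G) \cdot \Delta))$ calls over all~$b$, matching the claimed bound.

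I would then output $\alpha^{\mathrm{out}} := \min_{b} a_b/b$. For correctness, on the one hand each $a_b/b$ is realized by a concrete labeling (the one witnessing the oracle's yes-answer), so $\alpha^* \le \alpha^{\mathrm{out}}$. On the other hand, writing the true optimum as $\alpha^* = a^*/b^*$ with $b^* \le \diam(G)$ and $a^* \le b^* \cdot \Delta$, the binary search for $b = b^*$ returns some $a_{b^*} \le a^*$, so $\alpha^{\mathrm{out}} \le a_{b^*}/b^* \le \alpha^*$. Together, these give $\alpha^{\mathrm{out}} = \alpha^*$. The only real subtlety is spotting the two-dimensional structure of the feasible ratios; a direct binary search on the continuous interval $[1,\Delta]$ does not work because the candidate ratios are not uniformly spaced, which is precisely why we peel off the denominator first.
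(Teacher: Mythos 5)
Your proposal is correct and takes essentially the same approach as the paper's proof: both identify that the optimum must be a ratio $a/b$ whose denominator $b$ is a distance in $\{1,\dots,\diam(G)\}$ and whose numerator is an integer at most $b\cdot\Delta$, and then run one binary search per denominator, relying on the monotonicity of feasibility in $\alpha$. Your write-up is, if anything, slightly more explicit about why the optimum has this form (it is the ratio attained at the bottleneck pair of an optimal labeling) and about the final $\min_b a_b/b$ correctness argument.
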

\begin{proof}
    Assume we are given a graph $G$ and some $\Delta$. Let $\alpha^\star$ denote the smallest (rational) number such that $(G,\Delta,\alpha^\star)$ is a yes-instance of \probnameshort. We know that $1\le\alpha^\star\le\Delta$.
    We will argue that there are at most $\diam(G)\cdot \Delta$ possible values for $\alpha^\star$. 
    
    Let $D$ denote the distance matrix of $G$. Note that the entries in $D$ are between (and including) 1 and $\diam(G)$. Each value for $\alpha^\star$ yields a matrix of upper bounds that the fastest temporal paths in the periodic temporal graph (that we try to find) have to obey. If a value in the distance matrix is $d$, then the corresponding value in the upper bounds matrix is $\lfloor \alpha^\star\cdot d\rfloor$. Hence, for each value $d$, there are at most $d\cdot \Delta$ different possible upper bounds. As we argued earlier, there are $\diam(G)$ possible values for $d$. 
    This yields $\diam(G)^2\cdot\Delta$ possible values for $\alpha^\star$ such that $\lfloor \alpha^\star\cdot d\rfloor=\alpha^\star\cdot d$ for some value $d$ between 1 and $\diam(G)$. 

    Now, in order to perform binary search on these values, we fix some $d$ and find the smallest $\alpha^\star$ such that $\lfloor \alpha^\star\cdot d\rfloor=\alpha^\star\cdot d$ and $(G,\Delta,\alpha^\star)$ is a yes-instance of \probnameshort. Note that we can analytically compute each relevant value for $\alpha^\star$ in constant time, and hence need $\Oh(\log(\diam(G)\cdot \Delta))$ decision oracle calls. By doing this for each possible value $d$, we can find the overall smallest $\alpha^\star$ with the claimed number of $\Oh(\diam(G)\cdot \log(\diam(G)\cdot \Delta))$ calls to a decision oracle for \probnameshort.
\end{proof}

\section{Approximation Hardness}\label{sec:approxhardness}

In this section, we show that \probnameshort is NP-hard to approximate. In particular, we rule out constant factor approximations and even approximation algorithms with approximation factors that are sublinear in $\Delta$ or single exponential in $n$. Formally, we show the following.

\begin{theorem}
Assume that P $\neq$ NP. Then, for all constants $0<\varepsilon <1$ and $c\geq 1$:
\begin{itemize}
    \item there is no polynomial-time $\Delta^{1-\varepsilon}$-approximation algorithm for \probnameshort;
    \item there is no polynomial-time $2^{n^c}$-approximation algorithm for \probnameshort.
\end{itemize}
\end{theorem}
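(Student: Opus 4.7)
My plan is to give a polynomial-time gap-producing reduction from \textsc{3-Coloring} to \probnameshort whose output, for any freely chosen period $\Delta\geq 3$, has optimum stretch $1$ when the input graph $H$ is $3$-colorable and optimum stretch $\Omega(\Delta)$ otherwise. The construction takes an instance $H$ of \textsc{3-Coloring} together with the target period $\Delta$ and builds a graph $G$ of size $\mathrm{poly}(|V(H)|)$ containing a distinguished ``critical'' pair $(u^\star,v^\star)$ at some small constant distance $d$ (say $d=2$). It uses \emph{anchor} gadgets, realized as short stretch-$1$-constrained paths whose only valid labelings are the monotone ones $\ell,\ell+1,\ldots$, which pin selected edges to a constant-sized window of consecutive labels independent of $\Delta$; and \emph{coloring} gadgets, in which each vertex of $H$ is represented by an edge of $G$ whose label (confined to that window by the anchors) plays the role of a $3$-color of that vertex. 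The wiring is chosen so that a proper $3$-coloring of $H$ yields a length-$d$ $(u^\star,v^\star)$-path with strictly increasing consecutive labels, hence stretch $1$; whereas if $H$ is not $3$-colorable, on every length-$d$ $(u^\star,v^\star)$-path some two consecutive edges are forced to have equal labels or to go ``backwards'', causing a full-period wait and pushing the $(u^\star,v^\star)$-stretch to $\Omega(\Delta)$, while a routine check shows that every other pair has stretch $\Oh(1)$.

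Given the $\Omega(\Delta)$ gap, the two inapproximability bounds fall out by tuning $\Delta$. For the $\Delta^{1-\varepsilon}$ bound I would set $\Delta := |V(G)|^{\lceil 2/\varepsilon\rceil}$, so that $\Delta^{1-\varepsilon}$ is strictly below the gap for sufficiently large $|V(H)|$; any $\Delta^{1-\varepsilon}$-approximation algorithm would then decide $3$-colorability of $H$ in polynomial time. For the $2^{n^c}$ bound I would instead exploit that $\Delta$ is encoded in binary and take $\Delta := 2^{|V(G)|^{c+1}}$: the input size grows by only $\Oh(|V(G)|^{c+1})$ bits, while the gap remains $\Omega(\Delta) \gg 2^{n^c}$, so any $2^{n^c}$-approximation likewise resolves $H$, contradicting $\mathrm{P}\neq\mathrm{NP}$ in both cases.

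The main obstacle is engineering the anchor gadgets so that their effect is \emph{uniform in $\Delta$}: a priori, the exponentially large label alphabet might be exploited to spread the ``coloring'' labels far apart and thereby dodge the forced full-period wait in no-instances. I expect the bulk of the technical work to lie in verifying that the anchors really collapse the relevant labels into an $\Oh(1)$-sized interval regardless of $\Delta$ and that the $3$-colorability obstruction is robust to cyclic rotations of this interval; once this is established, the rest of the argument reduces to standard accounting of temporal distances along the remaining vertex pairs.
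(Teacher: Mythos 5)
Your parameter-tuning step (choosing $\Delta$ polynomial resp.\ exponential in the instance size, exploiting the binary encoding of $\Delta$) matches the paper and is fine, but the reduction itself has a genuine gap in the no-case, and it is exactly the obstacle you flag without resolving: there is no way to make the anchor gadgets ``collapse the relevant labels into an $\Oh(1)$-sized interval'' when $\Delta$ is super-polynomial in $n$. The anchors only constrain labels in the sense that violating them costs stretch, and the cost of a violation is tiny: shifting a label by a constant $c$ turns a duration-$2$ connection into a duration-$(c+1)$ connection, i.e., costs only constant stretch. In a no-instance you must lower-bound the stretch of \emph{every} labeling, and a labeling is free to pay $\Oh(1)$ stretch at each of the polynomially many anchors, thereby spreading the ``color'' labels of the vertices of $H$ over a super-constant range. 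Once two adjacent vertices of $H$ with the same color can carry labels differing by, say, $5$, the coloring-check gadget sees strictly increasing labels and yields a duration-$\Oh(1)$ path, so the optimum stretch of a no-instance can be $\Oh(1)$ and the gap collapses. (The paper's own reduction for general $\Delta$ pins labels to a $3$-element window only by attaching $\Delta-3$ pendant vertices, which is not polynomial once $\Delta$ is exponential.) A secondary overreach: even where a wrap-around can be forced, the guaranteed waiting time is only the largest gap in the set of used labels, which pigeonhole bounds by $\Delta/\mathrm{poly}(n)$, not $\Omega(\Delta)$; an adversary using evenly spaced labels makes every gap that small.

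The paper circumvents all of this by reducing from the \emph{gossip} (temporal connectivity) problem rather than from \textsc{3-Coloring}. There the yes/no distinction is order-theoretic: in a no-instance, \emph{no} assignment of one label per edge makes the graph temporally connected within a single period, regardless of the actual label values, so for every labeling some pair's fastest path must wrap around the period, incurring duration at least the largest gap $\geq \Delta/\binom{n}{2}$ and hence stretch $\geq \Delta/(n\binom{n}{2})$; yes-instances give stretch $\leq \frac12\binom{n}{2}$ independently of $\Delta$. That obstruction is robust to arbitrary spreading of labels, which is precisely the robustness your equality-of-labels encoding lacks. To repair your approach you would essentially have to show that your graph $G$ is a gossip no-instance whenever $H$ is not $3$-colorable, at which point you have reproved the hardness of gossip rather than used your anchor mechanism.
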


\begin{proof}
We present a straightforward gap-introducing reduction from the gossip problem~\cite{gobel1991label}. Here, we are given a graph $G$ and asked whether we can assign exactly one label to each edge, such that the resulting (non-periodic) temporal graph is temporally connected.

Given an instance $G$ of the gossiping problem, we produce a \probnameshort instance as follows. We use the same graph $G$ and specify how to set $\Delta$ later to get the two approximation hardness results.

Intuitively, if $G$ is a yes-instance of the gossiping problem, then there exists a labeling where it is not necessary to use any label from the second $\Delta$-period and hence the stretch is independent from $\Delta$. Whereas if $G$ is a no-instance of the gossiping problem, then there exists no such labeling. Then, infomally speaking, there must be a path that crosses the period and has a duration that depends on $\Delta$ and hence also the stretch depends on $\Delta$. Now we can set $\Delta$ to a very large value to create a gap.

Formally, assume that $G$ is a yes-instance of the gossiping problem and let $\lambda$ be a labeling such that the non-periodic temporal graph $(G,\lambda)$ is temporally connected. We can assume w.l.o.g.\ that the largest label in $\lambda$ is $\binom{n}{2}$. Now we use this labeling for our \probnameshort instance. A very na\"ive estimation yields that the stretch is at most $\frac{1}{2}\cdot\binom{n}{2}$: Consider a vertex pair $u,v$ of distance $2$ in $G$. Then the temporal path from $u$ to $v$ in $(G,\lambda)$ has duration at most $\binom{n}{2}$.

Now assume that $G$ is a no-instance of the gossiping problem and consider the instance $(G,\Delta)$ (for some $\Delta > \binom{n}{2}$ that we specify later) of \probnameshort.
 Let $\lambda$ be a $\Delta$-periodic labeling for $G$ that minimizes the stretch. Note that we can obtain an equivalent labeling by adding a constant (modulo $\Delta$) to every label. Hence, assume that $\delta=\Delta-\max_{e\in E(G)}\lambda(e)$ is maximized. Then we have that $\delta\ge \frac{\Delta}{\binom{n}{2}}$. 
Since $G$ is a no-instance of the gossiping problem, there is a vertex pair $u,v$ in $G$ such that the temporal path from $u$ to $v$ in the $\Delta$-periodic temporal graph $(G,\lambda)$ uses labels from different periods and hence has duration at least $\delta$ and hence a stretch of at least $\frac{\delta}{n}$.

Let $\alpha^\star$ denote the optimal stretch of $(G,\Delta)$. Summarizing, we have the following.
\begin{itemize}
    \item If $G$ is a yes-instance of the gossiping problem, then we have that $\alpha^\star\le\frac{1}{2}\cdot\binom{n}{2}=\alpha_{\text{yes}}$.
    \item If $G$ is a no-instance of the gossiping problem, then we have that $\alpha^\star\ge\frac{\Delta}{n\cdot \binom{n}{2}}=\alpha_{\text{no}}$.
\end{itemize}
If follows that if we set $\Delta>\frac{n}{2}\cdot\binom{n}{2}^2$, then we create a gap that allows us to distinguish yes-instance of the gossiping problem from no-instances.

In the remainder, we show how to set $\Delta$ to obtain the approximation hardness results in the theorem statement. 
Note that in particular, we can distinguish yes-instance from no-instances whenever $\Delta \ge n^5$, since $n^5 > \frac{n}{2}\cdot\binom{n}{2}^2$. This will make the calculations easier.

For the first statement, let $0<\varepsilon<1$ and let $\varepsilon'=\frac{5(1-\varepsilon)}{\varepsilon}$. Note that $1-\varepsilon = \frac{\varepsilon'}{5+\varepsilon'}$. Now we set $\Delta = n^{5+\varepsilon'}$. 
 Then we have that 
 $\Delta^{1-\varepsilon}=\Delta^{\frac{\varepsilon'}{5+\varepsilon'}}=n^{\varepsilon'}$. It follows that $\frac{\Delta}{\Delta^{1-\varepsilon}}=n^5$ and hence $\Delta^{1-\varepsilon}\cdot \alpha_{\text{yes}} < \alpha_{\text{no}}$.
 We can conclude that there is no $\Delta^{1-\varepsilon}$-approximation algorithm.

 For the second statement, let $c\ge 1$ and set $\Delta = n^5 2^{n^c}$. Note that the encoding of $\Delta$ is polynomial in $n$, as $\log \Delta = n^c + 5\log n$. We have that $\frac{\Delta}{2^{n^c}}=n^5$ and hence $2^{n^c}\cdot \alpha_{\text{yes}} < \alpha_{\text{no}}$.
 We can conclude that there is no $2^{n^{c}}$-approximation algorithm. 
\end{proof}

\section{Approximation Algorithm}\label{sec:radius}
In this section, we give an approximation algorithm for \probnameshort that runs in polynomial time and achieves an approximation ratio of $\Delta - \frac{\Delta-1}{\min(\rad+1,\diam)}$, where $\diam$ and~$\rad$ denote the diameter and the radius of the input graph~$G$, respectively. Formally, we show the following.
\begin{theorem}\label{thm:approxalgo}
    A solution for \probnameshort with stretch at most $\Delta - \frac{\Delta-1}{\min(\rad+1,\diam)}$ can be computed in polynomial time.
\end{theorem}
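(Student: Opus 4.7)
The plan is to pick a center $c$ of $G$, build a BFS tree $T$ rooted at $c$, and assign to every tree edge $e$ at depth $i$ (i.e., joining levels $i-1$ and $i$) the label $\lambda(e) := ((i-1)\bmod\Delta) + 1$; non-tree edges may receive an arbitrary label in $\{1,\ldots,\Delta\}$. This clearly runs in polynomial time, so the work is entirely in the stretch analysis, which splits naturally on $\dist(u,v)$.

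Pairs with $\dist(u,v) \le \min(\rad+1,\diam)$ are dealt with immediately by \Cref{worst case durations}, which yields stretch at most $\Delta - \tfrac{\Delta-1}{\dist(u,v)} \le \Delta - \tfrac{\Delta-1}{\min(\rad+1,\diam)}$. This already covers \emph{every} pair whenever $\diam \le \rad+1$. In the complementary case $\rad+1 < \diam$, pairs with $\dist(u,v) \ge \rad+2$ can still exist, and for those I would route along the tree path from $u$ to $v$ through their lowest common ancestor $\ell$ in $T$. The key step is to bound the duration of this path. Writing $k_1 \le \rad$ and $k_2 \le \rad$ for the lengths of the up- and down-portions, two consecutive edges on the up-portion have labels that either decrease by $1$ or wrap from $1$ to $\Delta$ when the depth crosses a multiple of $\Delta$, and the waiting-time formula evaluates to $\Delta-1$ in both cases. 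Symmetrically, every waiting time on the down-portion is exactly $1$. The last up-edge and the first down-edge share the same label, so the transition at $\ell$ costs a full period $\Delta$. Summing yields duration at most $(k_1-1)(\Delta-1) + \Delta + k_2 \le \rad(\Delta-1) + \rad + 1 = \rad\cdot\Delta + 1$.

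It then remains to verify $\rad\cdot\Delta + 1 \le \bigl(\Delta - \tfrac{\Delta-1}{\rad+1}\bigr)(\rad+2)$; clearing the denominator, both sides factor through $(\rad\cdot\Delta + 1)$ and the inequality collapses to $\rad+1 \le \rad+2$. Since $\dist(u,v) \ge \rad+2$, this delivers the desired stretch bound on these pairs as well. I expect the main technical obstacle to be verifying the uniform waiting time on the up-portion, which needs a small case analysis to handle the label wrap-around whenever $\rad > \Delta$; everything else is elementary case analysis and arithmetic.
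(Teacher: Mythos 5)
Your proof is correct and follows essentially the same route as the paper's: a level-based labeling of a BFS tree rooted at a center, \cref{worst case durations} for pairs at distance at most $\min(\rad+1,\diam)$, and an explicit tree-path bound of $\rad\cdot\Delta+1$ on the duration for pairs at distance at least $\rad+2$, followed by the same arithmetic. The only differences are cosmetic — you label level $i$ with $((i-1)\bmod\Delta)+1$ rather than alternating $\lceil\Delta/2\rceil$ and $\Delta$, and you route through the lowest common ancestor rather than the root — but both choices yield the identical worst-case duration and hence the same stretch bound.
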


Note that this is strictly better than the ``trivial'' stretch obtained by \cref{worst case durations} if the radius and diameter differ.
To show \cref{thm:approxalgo}, we give the following algorithm, which we will refer to as \emph{the radius algorithm}. Assume we are given an instance $(G,\Delta)$ of (the optimization version of) \probnameshort. We perform the following steps.
\begin{itemize}
    \item Take an arbitrary~\emph{root}, that is, a vertex~$v_x\in V(G)$ of eccentricity equal to the radius.
    \item For each~$i \in [1,\rad]$, label all edges of~$E(N^{i-1}(v_x),N^i(v_x))$ with label~$\lceil\frac{\Delta}{2} \rceil$ if~$i$ is odd, and with label~$\Delta$, otherwise.
    \item All other edges are labeled arbitrarily. 
\end{itemize}
For an illustration see \cref{fig:radiusalgorithm}. We show that this algorithm achieves the following stretch.

\begin{figure}
    \centering
    \ifarxiv
    \includegraphics[scale=1]{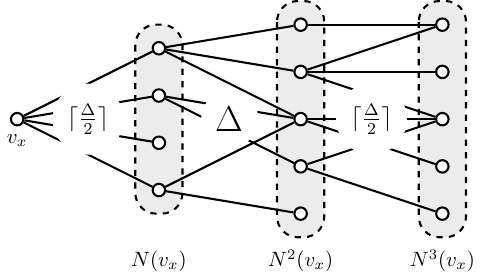}
    \else
    \begin{tikzpicture}[line width=1pt,scale=.8]
        \node[vertex,label=below:$v_x$] (VX) at (0,0) {};

        \draw[rounded corners=10pt,fill=lightgray!30!white,dashed] (2.5, 2) rectangle (3.5, -2);
        \node at (3,-3) {$N(v_x)$};
        \draw[rounded corners=10pt,fill=lightgray!30!white,dashed] (5.5, 2.5) rectangle (6.5, -2.5);
        \node at (6,-3) {$N^2(v_x)$};
        \draw[rounded corners=10pt,fill=lightgray!30!white,dashed] (8.5, 2.5) rectangle (9.5, -2.5);
        \node at (9,-3) {$N^3(v_x)$};
        
        \node[vertex,fill=white] (U1) at (3,1.5) {};
        \node[vertex,fill=white] (U2) at (3,.5) {};
        \node[vertex,fill=white] (U3) at (3,-.5) {};
        \node[vertex,fill=white] (U4) at (3,-1.5) {};
        
        \node[vertex,fill=white] (V1) at (6,2) {};
        \node[vertex,fill=white] (V2) at (6,1) {};
        \node[vertex,fill=white] (V3) at (6,0) {};
        \node[vertex,fill=white] (V4) at (6,-1) {};
        \node[vertex,fill=white] (V5) at (6,-2) {};
        
        \node[vertex,fill=white] (W1) at (9,2) {};
        \node[vertex,fill=white] (W2) at (9,1) {};
        \node[vertex,fill=white] (W3) at (9,0) {};
        \node[vertex,fill=white] (W4) at (9,-1) {};
        \node[vertex,fill=white] (W5) at (9,-2) {};
        
        \draw (VX) -- (U1);
        \draw (VX) -- (U2);
        \draw (VX) -- (U3);
        \draw (VX) -- (U4);

        \draw (U1) -- (V1);
        \draw (U1) -- (V2);
        \draw (U1) -- (V3);
        \draw (U2) -- (V3);
        \draw (U2) -- (V4);
        \draw (U4) -- (V3);
        \draw (U4) -- (V5);

        \draw (V1) -- (W1);
        \draw (V2) -- (W1);
        \draw (V2) -- (W2);
        \draw (V2) -- (W3);
        \draw (V3) -- (W3);
        \draw (V3) -- (W4);
        \draw (V4) -- (W3);
        \draw (V4) -- (W5);
        
        \node[circle,fill=white] at (1.5,0) {\Large $\lceil\frac{\Delta}{2}\rceil$};
        \node[circle,fill=white] at (4.5,0) {\LARGE $\Delta$};
        \node[circle,fill=white] at (7.5,0) {\Large $\lceil\frac{\Delta}{2}\rceil$};
    \end{tikzpicture}
        \fi
    \caption{Example of a graph $G$ with radius $3$, where $v_x\in V(G)$ is a vertex of eccentricity equal to the radius. The gray areas depict the distance 1-3 neighborhoods of $v_x$. The labels given by the radius algorithm are illustrated. Edges between vertices in the same neighborhood are not depicted and are given arbitrary labels by the algorithm.}
    \label{fig:radiusalgorithm}
\end{figure}

\begin{lemma}\label{trivial bound for r algorithm}
Let~$(G=(V,E),\Delta)$ be an instance of~\probnameshort.
Then, the radius algorithm computes a labeling with stretch at most~$\Delta - \frac{\Delta-1}{\min(\rad+1,\diam)}$.
\end{lemma}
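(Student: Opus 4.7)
The plan is a case split for each vertex pair $(u,v)$ based on how $\dist(u,v)$ compares to $M := \min(\rad+1, \diam)$. If $\dist(u,v) \leq M$, then \cref{worst case durations} immediately gives that the stretch for $(u,v)$ under any $\Delta$-labeling (in particular, the one produced by the radius algorithm) is at most $\Delta - \frac{\Delta-1}{\dist(u,v)} \leq \Delta - \frac{\Delta-1}{M}$, so there is nothing further to prove. The remaining case is $\dist(u,v) > M$; since $\dist(u,v) \leq \diam$, this forces $M = \rad + 1 < \diam$ and $\dist(u,v) \geq \rad + 2$, so the layered labeling chosen by the algorithm is genuinely needed.

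In this regime I will exhibit a fast temporal walk from $u$ to $v$ routed through the root $v_x$. Set $i := \dist(u,v_x) \leq \rad$ and $j := \dist(v,v_x) \leq \rad$, and fix any shortest paths $P_u$ from $u$ to $v_x$ and $P_v$ from $v_x$ to $v$. By the BFS structure these paths cross layers monotonically, so they only traverse edges whose labels were set deterministically by the algorithm: the edge between layers $k-1$ and $k$ carries $c_k := \lceil\Delta/2\rceil$ for $k$ odd and $c_k := \Delta$ for $k$ even. Thus the label sequence along $P_u$ is $c_i, c_{i-1}, \ldots, c_1$ and along $P_v$ it is $c_1, c_2, \ldots, c_j$, each strictly alternating between $\lceil\Delta/2\rceil$ and $\Delta$.

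The crux is to bound the duration of the concatenation $P_uP_v$. Within each of $P_u$ and $P_v$, every two consecutive edges carry the label pair $\{\lceil\Delta/2\rceil, \Delta\}$, so the waiting times alternate between $\lfloor\Delta/2\rfloor$ and $\lceil\Delta/2\rceil$ and average exactly $\Delta/2$ per edge. The single costly spot is $v_x$ itself, where the incoming and outgoing edges both carry label $c_1 = \lceil\Delta/2\rceil$ and thus force a wait of exactly $\Delta$; this also witnesses that $P_uP_v$ is a valid temporal walk, since strict time increase at $v_x$ is automatic. A short case analysis on the parities of $i$, $j$, and $\Delta$ then yields the uniform bound
\[
d(P_uP_v) \;\leq\; \rad\cdot \Delta + 1.
\]
Any temporal walk contains a temporal path of no greater duration, obtained by shortcutting any repeated vertex while preserving the visit-times of the retained edges, so $\rad\Delta + 1$ is also an upper bound on the duration of a fastest temporal $(u,v)$-path. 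Combining with $\dist(u,v) \geq \rad + 2 > \rad + 1$ yields
\[
\frac{d(P_uP_v)}{\dist(u,v)} \;\leq\; \frac{\rad\Delta + 1}{\rad + 1} \;=\; \Delta - \frac{\Delta - 1}{\rad + 1} \;=\; \Delta - \frac{\Delta - 1}{M},
\]
which is exactly the required bound.

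The main obstacle will be the duration bound for $P_uP_v$. The qualitative picture (two edges per period along the layered path, plus one forced $\Delta$-wait at $v_x$) is clean, but the rigorous derivation has to check the four parity combinations of $(i,j)$ and handle the mild asymmetry that appears when $\Delta$ is odd and $i,j$ have different parities, to confirm that $\rad\Delta + 1$ really is an upper bound in every scenario.
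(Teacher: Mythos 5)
Your proof is correct, and it uses the same core construction as the paper: for pairs at distance at most $\min(\rad+1,\diam)$ you invoke \cref{worst case durations}, and for the remaining pairs you route a temporal walk through an eccentricity-$\rad$ root $v_x$ along two shortest paths, bound its duration by $\rad\Delta+1$ (one full-period wait at $v_x$, alternating $\lfloor\Delta/2\rfloor$/$\lceil\Delta/2\rceil$ waits elsewhere), and pass from the walk to a path. Where you genuinely diverge is in how the final bound is assembled. The paper computes a per-distance bound $\alpha_\ell$ for every $\ell\in[2,\diam]$ and then spends the bulk of its proof on an algebraic maximization showing $\alpha_{\rad}\geq\alpha_{\rad+2}\geq\alpha_\ell$ for all $\ell\geq\rad+2$ and comparing $\alpha_{\rad}$ with $\alpha_{\rad+1}$. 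You sidestep all of that: in your second case you have $\dist(u,v)\geq\rad+2>\rad+1=M$, so dividing $\rad\Delta+1$ by $\dist(u,v)$ immediately gives at most $\frac{\rad\Delta+1}{\rad+1}=\Delta-\frac{\Delta-1}{\rad+1}$, which is exactly the target. This is shorter and less error-prone; the paper's version has the side benefit of producing the explicit family $\{\alpha_\ell\}$, which it reuses verbatim in the proof of \cref{better bound for r algorithm}. The one place where you are (like the paper) slightly terse is the parity case analysis establishing $d(P_uP_v)\leq\rad\Delta+1$; you correctly flag it as the main thing to verify, and it does check out in all four parity combinations of $(\dist(u,v_x),\dist(v,v_x))$, using that when the two distances have different parities their sum is at most $2\rad-1$.
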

\begin{proof}
Let~$v_x$ be an arbitrary vertex of eccentricity equal to~$\rad$ and let~$\lambda$ be the~$\Delta$-labeling produced by the radius algorithm for root~$v_x$.
We give for each distance~$\ell \in [2,\diam]$ an upper bound~$\alpha_\ell$ for the stretch of distance-$\ell$ vertex pairs.

For~$\ell \leq \rad+1$, let~$(u,v)$ be a pair of vertices of distance exactly~$\ell$ in~$G$.
Due to~\Cref{worst case durations}, $\alpha_\ell := \Delta - \frac{\Delta - 1}{\ell}$ is an upper bound for the stretch of distance-$\ell$ vertex pairs.

For~$\ell \geq \rad+2$, let~$(u,v)$ be a pair of vertices of distance exactly~$\ell$ in~$G$.
Consider the journey~$J$ that starts in~$u$, goes over any shortest path~$P_u$ to~$v_x$ and then goes over any shortest path~$P_v$ to~$v$.
The edges of~$P_u$ ($P_v$) are alternatively labeled with~$\Delta$ and~$\lceil\frac{\Delta}{2}\rceil$ by the algorithms.
Moreover, both path each have a length of at most~$\rad$.
Hence, in the worst case, $J$ traverses~$2\rad$ edges.
This implies that the duration of~$J$ is at most~$\rad\cdot \Delta + 1$, since~$J$ only waits a full period at vertex~$v_x$ and otherwise alternates between waiting~$\lfloor\frac{\Delta}{2}\rfloor$ and~$\lceil\frac{\Delta}{2}\rceil$.
Note that this also implies that there is a path of at most that duration from~$u$ to~$v$.
Consequently, $\alpha_\ell := \frac{\rad\cdot \Delta + 1}{\ell} = \Delta + \frac{(\rad-\ell) \cdot \Delta + 1}{\ell} = \Delta - \frac{\Delta - 1}{\ell} - \frac{(\ell-\rad-1) \cdot \Delta}{\ell}$ is an upper bound for the stretch of distance-$\ell$ vertex pairs. 

Note that the stretch achieved by~$\lambda$ is thus at most~$\max \{\alpha_\ell \mid 2 \leq \ell \leq \diam\}$, if~$\diam \geq 2$. (For~$\diam = 1$, the stretch is always~$1$.)
We show that the maximum is achieved for~$\ell = \rad+1$ if~$\rad<\diam$, and for~$\ell = \rad$ if~$\rad=\diam$.

To this end, first note that~$\alpha_{\rad} = \Delta - \frac{\Delta-1}{\rad} \geq \max \{\alpha_\ell = \Delta - \frac{\Delta-1}{\rad} \mid \ell \in [2,\rad]\}$.
This proves the statement if~$\rad=\diam$.
Otherwise, assume that~$\rad < \diam$.
If~$\diam = \rad + 1$, then the statement also holds, since~$\alpha_{\rad+1} = \Delta - \frac{\Delta - 1}{\rad+1} - \frac{(\rad+1-\rad-1) \cdot \Delta}{\rad+1} = \Delta - \frac{\Delta - 1}{\rad+1} > \Delta - \frac{\Delta - 1}{\rad} = \alpha_{\rad}$.
Thus, consider~$\diam \geq \rad + 2$.
Note that this implies that~$\rad \geq 2$, since~$\rad \geq \frac{\diam}{2}$.
We show that~$\alpha_{\rad} \geq \alpha_{\rad+2} \geq \max \{\alpha_\ell\mid \rad+2\leq \ell \leq \diam\}$.

First, we show~$\alpha_{\rad} \geq \alpha_{\rad+2}$.
\begin{align*}
\alpha_{\rad} = \Delta - \frac{\Delta - 1}{\rad} &\geq\Delta - \frac{\Delta - 1}{\rad+2} - \frac{(\rad+2-\rad-1) \cdot \Delta}{\rad+2} = \alpha_{\rad+2}\\
\Leftrightarrow 
\frac{\Delta - 1}{\rad+2} + \frac{\Delta}{\rad+2}  &\geq\frac{\Delta - 1}{\rad} 
\Leftrightarrow 
\rad(\Delta - 1) + \rad\Delta  \geq (\rad+2)(\Delta - 1) \\
\Leftrightarrow 
\rad\Delta  &\geq 2(\Delta - 1) 
\Leftrightarrow 
2 + (\rad-2)\Delta  \geq 0
\end{align*}
This holds true, since~$\rad\geq 2$.

It remains to show that~$\alpha_{\rad+2} \geq \max \{\alpha_\ell\mid \rad+2\leq \ell \leq \diam\}$.
To this end, let~$\ell\in [\rad+2,\diam-1]$.
We show that~$\alpha_\ell \geq \alpha_{\ell+1}$.
Recall that~$\alpha_\ell = \Delta + \frac{(\rad-\ell) \cdot \Delta + 1}{\ell}$.
\begin{align*}
\alpha_\ell = \Delta + \frac{(\rad-\ell) \cdot \Delta + 1}{\ell} &\geq \Delta + \frac{(\rad-\ell-1) \cdot \Delta + 1}{\ell+1} = \alpha_{\ell+1}\\
\Leftrightarrow \frac{(\rad-\ell) \cdot \Delta + 1}{\ell} &\geq \frac{(\rad-\ell-1) \cdot \Delta + 1}{\ell+1}\\
\Leftrightarrow (\ell+1)(\rad-\ell) \cdot \Delta + \ell+1 &\geq \ell(\rad-\ell-1) \cdot \Delta + \ell\\
\Leftrightarrow (\ell)(\rad-\ell) \cdot \Delta + (\rad-\ell) \cdot \Delta + \ell+1 &\geq \ell(\rad-\ell) \cdot \Delta - \ell\Delta + \ell\\
\Leftrightarrow  (\rad-\ell) \cdot \Delta + 1 &\geq - \ell\Delta
\Leftrightarrow  \rad \cdot \Delta + 1 \geq 0
\end{align*}
Consequently, for~$\diam \geq \rad + 2$, $\alpha_{\rad+1} = \Delta - \frac{\Delta-1}{\rad+1} \geq \max \{\alpha_\ell \mid \ell \in [2,\diam]\}$.

This thus proves that the stretch achieved by~$\lambda$ is at most~$\Delta-\frac{\Delta-1}{\min(\rad+1,\diam)}$.
\end{proof}

\Cref{thm:approxalgo} follows directly from \cref{trivial bound for r algorithm} and the straightforward observation that the radius algorithm runs in polynomial time. However, we can show that in several restricted cases, the algorithm is guaranteed to achieve a better stretch.

\begin{lemma}\label{better bound for r algorithm}
Let~$(G=(V,E),\Delta)$ be an instance of~\probnameshort.
If~$2 \leq {\rad} < \diam$ and there is a root~$v_x$ such that for each distance-$({\rad}+1)$ vertex pair~$(u,v)$, $\dist(v_x,u) + \dist(v_x,v) < 2\rad$, then the radius algorithm (for root~$v_x$) computes a labeling with stretch at most~$\Delta - \frac{\Delta-1}{{\rad}}$.
\end{lemma}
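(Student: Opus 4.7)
The plan is to mirror the proof of \cref{trivial bound for r algorithm} and refine only the case of distance-$(\rad+1)$ vertex pairs, since that is the unique case where the earlier analysis produces the weaker value $\Delta - \frac{\Delta-1}{\rad+1}$. Let $\lambda$ be the labeling produced by the radius algorithm for root $v_x$. As in that proof, I would give an upper bound $\alpha_\ell$ on the stretch of distance-$\ell$ pairs for each $\ell \in [2, \diam]$ and verify $\alpha_\ell \leq \Delta - \frac{\Delta - 1}{\rad}$ in every case.

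For $\ell \in [2, \rad]$, \cref{worst case durations} immediately yields $\alpha_\ell \leq \Delta - \frac{\Delta - 1}{\ell} \leq \Delta - \frac{\Delta - 1}{\rad}$. For $\ell \geq \rad + 2$, the through-$v_x$ argument from the proof of \cref{trivial bound for r algorithm} still gives $\alpha_\ell \leq \alpha_{\rad+2} = \frac{\rad \Delta + 1}{\rad + 2}$, and a short calculation shows that this quantity is at most $\Delta - \frac{\Delta - 1}{\rad}$ whenever $\rad \geq 2$. Only the case $\ell = \rad + 1$ requires new work.

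For $\ell = \rad + 1$, fix a distance-$(\rad+1)$ pair $(u,v)$ and write $d_u := \dist(v_x, u)$ and $d_v := \dist(v_x, v)$; the hypothesis gives $d_u + d_v \leq 2\rad - 1$. I would analyze the journey $J$ obtained by concatenating a shortest $u$-$v_x$ path with a shortest $v_x$-$v$ path, of length $d_u + d_v$. Along $J$, the sequence of edge labels is $L(d_u), L(d_u - 1), \ldots, L(1), L(1), L(2), \ldots, L(d_v)$, where $L(i) = \lceil\Delta/2\rceil$ for odd $i$ and $L(i) = \Delta$ for even $i$. The two edges incident to $v_x$ both carry label $\lceil\Delta/2\rceil$, forcing a full-period waiting $\Delta$ at $v_x$, while every other consecutive pair of labels differs in parity and contributes a waiting in $\{\lfloor\Delta/2\rfloor, \lceil\Delta/2\rceil\}$.

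The main obstacle is the careful accounting of the remaining $d_u + d_v - 2$ non-$v_x$ waitings. Within each side of $v_x$, consecutive waitings alternate and therefore pair up to exactly $\Delta$; moreover, the waiting immediately before $v_x$ equals $\lceil\Delta/2\rceil$ and the one immediately after equals $\lfloor\Delta/2\rfloor$, so they can also be paired across $v_x$. A brief case analysis on the parities of $d_u$ and $d_v$ then shows that at most one waiting remains unpaired, contributing at most $\lceil\Delta/2\rceil$. Combining this with the $\Delta$ at $v_x$ and the bound $d_u + d_v \leq 2\rad - 1$ yields a duration for $J$ of at most $(\rad - 1)\Delta + \lceil\Delta/2\rceil + 1$. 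Dividing by $\rad + 1$ and comparing with $\Delta - \frac{\Delta - 1}{\rad} = \frac{(\rad - 1)\Delta + 1}{\rad}$, the required inequality reduces to $\rad \lfloor\Delta/2\rfloor \geq \Delta - 1$, which holds for every $\rad \geq 2$ and completes the proof.
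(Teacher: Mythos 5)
Your proposal is correct and follows essentially the same route as the paper's proof: it reuses the $\alpha_\ell$ bounds from \cref{trivial bound for r algorithm} for $\ell\neq\rad+1$, and for $\ell=\rad+1$ bounds the through-$v_x$ journey of length at most $2\rad-1$ by duration $\rad\Delta-\lfloor\frac{\Delta}{2}\rfloor+1$ (your $(\rad-1)\Delta+\lceil\frac{\Delta}{2}\rceil+1$ is the same quantity), reducing the comparison with $\alpha_{\rad}$ to $\rad\lfloor\frac{\Delta}{2}\rfloor\geq\Delta-1$ exactly as the paper does. Your pairing argument for the waiting times is just a more explicit version of the paper's "alternates between $\lfloor\frac{\Delta}{2}\rfloor$ and $\lceil\frac{\Delta}{2}\rceil$" bookkeeping.
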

\begin{proof}
Similar to the proof of~\Cref{trivial bound for r algorithm}, we give for each distance~$\ell \in [2,\diam]$ an upper bound~$\alpha_\ell$ for the stretch of distance~$\ell$ vertex pairs.

Due to the proof of~\Cref{trivial bound for r algorithm}, $\alpha_\ell := \Delta - \frac{\Delta - 1}{\ell}$ is an upper bound for the stretch of distance-$\ell$ vertex pairs with~$\ell \leq {\rad}+1$ and $\alpha_\ell :=  \Delta - \frac{\Delta - 1}{\ell} - \frac{(\ell-{\rad}-1) \cdot \Delta}{\ell}$ is an upper bound for the stretch of distance-$\ell$ vertex pairs with~$\ell \geq {\rad}+2$.
Moreover, the maximum over all these stretches was achieved for~$\ell = {\rad}+1$.
We now show that under the assumption that for each distance-$({\rad}+1)$ vertex pair~$(u,v)$, $\dist(v_x,u) + \dist(v_x,v) < 2\rad$, $\beta_{{\rad}+1} := \Delta - \frac{\Delta-1}{{\rad}+1} - \frac{\lfloor\frac{\Delta}{2}\rfloor}{{\rad}+1}$ is an upper bound for the stretch of distance-$({\rad}+1)$ vertex pairs.
Additionally, we will show that~$\alpha_{\rad} \geq \max \{\alpha_\ell \mid 2 \leq \ell \leq \diam, \ell \neq {\rad}+1\}\cup \{\beta_{{\rad}+1}\}$.
This then implies that the total stretch of~$\lambda$ is at most~$\alpha_{{\rad}} = \Delta - \frac{\Delta-1}{{\rad}}$.

Let~$(u,v)$ be a pair of vertices of distance exactly~$\ell$ in~$G$.
Consider the journey~$J$ that starts in~$u$, goes over any shortest path~$P_u$ to~$v_x$, and then goes over any shortest path~$P_v$ to~$v$.
The edges of~$P_u$ ($P_v$) are alternatively labeled with~$\Delta$ and~$\lceil\frac{\Delta}{2}\rceil$ by the algorithm.
Moreover, by our assumption, $J$ traverses at most~$2{\rad}-1$ edges.
This implies that the duration of~$J$ is at most~${\rad}\cdot \Delta - \lfloor\frac{\Delta}{2}\rfloor + 1$, since~$J$ only waits a full period at vertex~$v_x$ and otherwise alternates between waiting~$\lfloor\frac{\Delta}{2}\rfloor$ and~$\lceil\frac{\Delta}{2}\rceil$.
Note that this also implies that there is a path of at most that duration from~$u$ to~$v$.
Consequently~$\frac{{\rad}\cdot \Delta - \lfloor\frac{\Delta}{2}\rfloor + 1}{{\rad}+1} =  \Delta - \frac{\Delta-1}{{\rad}+1} - \frac{\lfloor\frac{\Delta}{2}\rfloor}{{\rad}+1} = \beta_{{\rad}+1}$ is an upper bound for the stretch of distance-$({\rad}+1)$ vertex pairs. 

It remains to show that~$\alpha_{\rad} \geq \max \{\alpha_\ell \mid 2 \leq \ell \leq \diam, \ell \neq {\rad}+1\} \cup \{\beta_{{\rad}+1}\}$.

To this end, first note that~$\alpha_{{\rad}} = \Delta - \frac{\Delta-1}{{\rad}} \geq \max \{\alpha_\ell = \Delta - \frac{\Delta-1}{{\rad}} \mid \ell \in [2,{\rad}]\}$.
We show that~$\beta_{{\rad}+1} \leq \alpha_{\rad}$.

\begin{align*}
\beta_{{\rad}+1} = \Delta - \frac{\Delta-1}{{\rad}+1} - \frac{\lfloor\frac{\Delta}{2}\rfloor}{{\rad}+1} &\leq \Delta - \frac{\Delta-1}{{\rad}} = \alpha_{\rad}\\
\Leftrightarrow \frac{\Delta-1}{{\rad}} &\leq   \frac{\Delta-1}{{\rad}+1} + \frac{\lfloor\frac{\Delta}{2}\rfloor}{{\rad}+1}\\
\Leftrightarrow ({\rad}+1) \cdot (\Delta-1) &\leq   {\rad}(\Delta-1) + {\rad}\lfloor\frac{\Delta}{2}\rfloor\\
\Leftrightarrow \Delta-1 &\leq {\rad}\lfloor\frac{\Delta}{2}\rfloor
\end{align*}
This holds true, since~${\rad} \geq 2$.

Hence, if~$\diam = {\rad}+1$, the statement holds.
Otherwise, we still have to show that~$\alpha_{{\rad}} \geq \max \{\alpha_\ell \mid {\rad}+2 \leq \ell \leq \diam\}$.
As shown in the proof of~\Cref{trivial bound for r algorithm}, $\alpha_{{\rad}+2} \geq \max \{\alpha_\ell \mid {\rad}+2 \leq \ell \leq \diam\}$, and~$\alpha_{\rad} \geq \alpha_{{\rad}+2}$.
Hence, the total stretch is at most $\Delta - \frac{\Delta - 1}{{\rad}} = \alpha_{\rad} \geq \max \{\alpha_\ell \mid 2 \leq \ell \leq \diam, \ell \neq {\rad}+1\} \cup \{\beta_{{\rad}+1}\}$.
\end{proof}
As we will show in~\Cref{sec:hardness}, there are instances, where this stretch is achieved by the radius algorithm and it is NP-hard to decide whether a better stretch is possible.
This then shows that this simple algorithm produces the best possible stretch for some instances in polynomial time, unless P $=$ NP.

Finally, we can show that the radius algorithm performs well if the input graph is a tree.

\begin{lemma}
\label{radius-optimal-lem}
Let~$(G=(V,E),\Delta)$ be an instance of~\probnameshort where~$G$ is a tree.
Then the radius algorithm computes a labeling with stretch at most~$\frac{\Delta+1}{2}$.
\end{lemma}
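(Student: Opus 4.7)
The plan is to show that, for every pair $(u,v)$ of vertices, the unique tree path from $u$ to $v$ realizes a temporal path whose duration is at most $\frac{\Delta+1}{2}\cdot\dist(u,v)$. Since $G$ is a tree, this is the only underlying path of any temporal $(u,v)$-path, so the fastest temporal $(u,v)$-path has duration at most this bound, and hence the stretch for $(u,v)$ is at most $\frac{\Delta+1}{2}$.

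I would root $G$ at the vertex $v_x$ picked by the radius algorithm, let $w$ be the lowest common ancestor of $u$ and $v$, and decompose the $u$-$v$ path into an upward segment of length $p\geq 0$ from $u$ to $w$ followed by a downward segment of length $q\geq 0$ from $w$ to $v$, so that $k:=\dist(u,v)=p+q$. The key structural observation is that two consecutive edges of the path carry the same label under the labeling $\lambda$ produced by the radius algorithm if and only if they are the two edges of the path incident to~$w$ (which only occurs when $p,q\geq 1$): the algorithm assigns label $\lceil\frac{\Delta}{2}\rceil$ to an edge between levels $i-1$ and $i$ when $i$ is odd, and label $\Delta$ when $i$ is even, so consecutive edges inside a monotone segment connect levels of opposite parity and thus receive opposite labels, whereas the two edges of the path incident to~$w$ both lie between levels $d_w$ and $d_w+1$ and therefore share a label.

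As a consequence, each of the $(p-1)+(q-1)=k-2$ transitions internal to the two monotone segments contributes a waiting time of $\lfloor\frac{\Delta}{2}\rfloor$ or $\lceil\frac{\Delta}{2}\rceil$, while the single ``turning'' transition at~$w$ (when present) contributes a full wait of~$\Delta$. Summing, the duration of this temporal path is at most
\[
(k-2)\left\lceil\frac{\Delta}{2}\right\rceil + \Delta + 1
\]
in the non-monotone case $p,q\geq 1$, and at most $(k-1)\lceil\frac{\Delta}{2}\rceil+1$ in the monotone case where $u$ is an ancestor of $v$ (or vice versa).

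The final step is to divide each bound by $k$ and use $2\lceil\frac{\Delta}{2}\rceil\leq\Delta+1$: the non-monotone bound then collapses to exactly $\frac{\Delta+1}{2}\cdot k$ after cancellation, while the monotone bound reduces to the immediate inequality $1\leq\frac{\Delta+1}{2}$; the boundary case $k\leq 1$ trivially gives stretch at most~$1$. I do not anticipate a serious obstacle; the only mild care in the write-up is separating the monotone and non-monotone cases and handling the degeneracies where $u$ or $v$ coincides with~$w$. The near-tightness of the elementary inequality in the non-monotone case explains why the bound $\frac{\Delta+1}{2}$ cannot be improved by this argument alone.
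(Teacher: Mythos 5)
Your proof is correct and follows essentially the same route as the paper's: root the tree at $v_x$, split into the ancestor/descendant case and the LCA case, observe that the only repeated label along the tree path occurs at the two edges incident to the turning vertex $w$ (costing a full wait of $\Delta$ there and at most $\lceil\frac{\Delta}{2}\rceil$ at every other internal vertex), and divide by $\dist(u,v)\geq 2$. The only difference is cosmetic bookkeeping — you bound every non-turn wait uniformly by $\lceil\frac{\Delta}{2}\rceil$ instead of tracking the exact $\lfloor\frac{\Delta}{2}\rfloor/\lceil\frac{\Delta}{2}\rceil$ alternation and the parity of $\ell$ as the paper does — and this still yields the claimed bound.
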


\begin{proof}
    Similar to the previous proofs, we give for each distance~$\ell \in [2,\diam]$ an upper bound~$\alpha_\ell$ for the stretch of distance~$\ell$ vertex pairs.

    Let $u,v$ be a vertex pair of distance-$\ell$. We make the following case distinction.
    Consider the case where $u$ is an ancestor of $v$ if the graph $G$ is rooted at $v_x$ or $v$ is an ancestor of $u$. 
    In both cases, the edges of the fastest paths from $u$ to $v$ are alternatively labeled with~$\Delta$ and~$\lceil\frac{\Delta}{2}\rceil$ by the algorithm.
    Hence, we have that the duration of the fastest path from $u$ to $v$ is at most $\frac{\ell-2}{2}\cdot \Delta+\lfloor\frac{\Delta}{2}\rfloor+1$ if $\ell$ is even, and at most $\frac{\ell-1}{2}\cdot \Delta+1$ if $\ell$ is odd. Hence, the stretch is $\alpha\le\frac{\Delta}{2}+\frac{1}{\ell}$. Since $\ell\ge 2$, we have that $\alpha\le \frac{\Delta+1}{2}$.

    Now consider the case that neither $u$ is an ancestor of $v$, nor is $v$ an ancestor of $u$. Let $w$ be the closest common ancestor of $v$ and $u$. Then we have that a fastest temporal path from $u$ to $v$ can be decomposed into a fastest temporal path from $u$ to $w$ and a fastest temporal path from $w$ to $v$. Note that the waiting time at $w$ is $\Delta$, since all edges from $w$ to its children have the same label. All other waiting times are $\lfloor\frac{\Delta}{2}\rfloor$ and $\lceil\frac{\Delta}{2}\rceil$, alternatingly. It follows that the duration of a fastest temporal path from $u$ to $v$ is at most $\frac{\ell-1}{2}\cdot \Delta+\lfloor\frac{\Delta}{2}\rfloor+1$ if $\ell$ is even, and at most $\frac{\ell}{2}\cdot \Delta+1$ if $\ell$ is odd. Then, again, we have that the stretch is $\alpha=\frac{\Delta}{2}+\frac{1}{\ell}$. Since $\ell\ge 2$, we have that $\alpha\le \frac{\Delta+1}{2}$.
\end{proof}

On trees with a large maximum degree, we can show that our algorithm is optimal.
\begin{lemma}
    The radius algorithm computes the optimum stretch for trees with maximum degree at least~$\Delta+1$ and it is a 2-approximation algorithm on general trees.
\end{lemma}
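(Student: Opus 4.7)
My plan is to use Lemma~\ref{radius-optimal-lem} as a universal upper bound of $(\Delta+1)/2$ on the stretch produced by the radius algorithm on any tree, and to pair it with matching (for the first claim) or nearly matching (for the second claim) lower bounds on the optimum stretch. The common ingredient in both lower bounds is that, because $G$ is a tree, the unique shortest $(u,w)$-path between two distinct neighbors $u,w$ of a vertex $v$ has length $2$ and is forced to use both incident edges, so no rerouting can shorten the duration.

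For the first statement, I would pick a vertex $v$ of degree at least $\Delta+1$ and invoke the pigeonhole principle: any $\Delta$-periodic labeling must assign the same label to two of the edges incident to $v$. Their other endpoints $u,w$ form a distance-$2$ pair whose fastest temporal path is forced to traverse both of those equally labeled edges, hence must wait a full period at $v$ and has duration at least $\Delta+1$. This gives a lower bound of $(\Delta+1)/2$ on the optimum stretch, which coincides with the upper bound from Lemma~\ref{radius-optimal-lem}, proving optimality.

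For the second statement, I would first dispose of the trivial case of trees with at most two vertices (where both the optimum and the algorithm yield stretch~$1$). Otherwise, any tree contains a vertex $v$ with two distinct neighbors $u,w$. Under any $\Delta$-periodic labeling, I would let $a,b$ be the labels of the two incident edges and analyze the durations $d_1$ of the fastest temporal $(u,w)$-path and $d_2$ of the fastest temporal $(w,u)$-path directly from the waiting-time description in \cref{sec:prelims}. The key arithmetic fact is $d_1+d_2=\Delta+2$ when $a\neq b$ and $d_1+d_2=2(\Delta+1)$ when $a=b$; in either case $\max(d_1,d_2)\geq (\Delta+2)/2$, so the stretch of one of the ordered pairs $(u,w),(w,u)$ is at least $(\Delta+2)/4$. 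Combining with the $(\Delta+1)/2$ upper bound gives an approximation ratio of at most
\[
\frac{(\Delta+1)/2}{(\Delta+2)/4} = \frac{2(\Delta+1)}{\Delta+2} < 2,
\]
establishing the $2$-approximation.

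The only subtlety I anticipate is the case analysis for $d_1,d_2$: I need to handle $a<b$, $a>b$, and $a=b$ uniformly and check that the equal-label case still satisfies the $(\Delta+2)/2$ bound on the maximum (indeed it exceeds it with room to spare). Once this routine bookkeeping with periodic arithmetic is settled, both claims fall out from comparing the resulting lower bounds against Lemma~\ref{radius-optimal-lem}.
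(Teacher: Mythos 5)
Your proposal is correct and follows essentially the same route as the paper: both parts derive a lower bound on the optimum stretch from a distance-$2$ pair of neighbors of a common vertex (pigeonhole for the degree-$(\Delta+1)$ case, and an analysis of the two labels $a,b$ for the general case) and compare it against the $\frac{\Delta+1}{2}$ upper bound of \cref{radius-optimal-lem}. Your observation that $d_1+d_2=\Delta+2$ when $a\neq b$ is a slightly cleaner way to obtain the $\frac{\Delta+2}{2}$ bound on $\max(d_1,d_2)$ than the paper's explicit minimization over $\ell_2-\ell_1$, but it is the same argument in substance.
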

\begin{proof}
    Let $G$ be a tree with maximum degree at least $\Delta+1$, and let $v$ be a vertex of $G$ where $\deg(v) \geq \Delta+1$. 
    Then, since every edge of $G$ gets one label between 1 and $\Delta$, in any labeling there must be at least two neighbors $u_1,u_2$ of $v$ such that the edges $u_1v$ and $u_2v$ get the same label. Then the fastest path from $u_1$ to $u_2$ in this labeling has duration $\Delta+1$, while the shortest path between $u_1,u_2$ has length 2. Therefore, the stretch of $G$ is least $\frac{\Delta+1}{2}$, and thus the radius algorithm provides the optimum stretch by \cref{radius-optimal-lem}.

    Now let $G$ be an arbitrary tree. Let $v$ be an arbitrary non-leaf vertex of $G$, and let $u_1,u_2$ be any two of its neighbors. Let $\lambda_{OPT}$ be a labeling that achieves an optimum stretch for $G$, and let $\ell_1$ (resp.~$\ell_2$) be the label of the edge $u_1v$ (resp.~$u_2v$) in $\lambda_{OPT}$. 
    We will prove a lower bound for the stretch of $\lambda_{OPT}$. 
    First note that, if $\ell_1\neq\ell_2$ then the durations of the fastest temporal paths from $u_1$ to $u_2$ and from $u_2$ to $u_1$ are both $\Delta+1$, while both these durations are strictly less than $\Delta+1$ if $\ell_1 \neq \ell_2$. Thus let us assume that $\ell_1 \neq \ell_2$, and let without loss of generality $\ell_1<\ell_2$. Then the fastest temporal path from $u_1$ to $u_2$ (resp.~from $u_2$ to $u_1$) is $\ell_2-\ell_1+1$ (resp.~$\Delta + \ell_1-\ell_2+1$), 
    while the distance of the shortest path between $u_1$ and $u_2$ is 2. Then $\max\{\ell_2-\ell_1+1, \Delta + \ell_1-\ell_2+1\}$ is minimized when 
    $\ell_2 = \ell_1 + \frac{\Delta}{2}$ (when $\Delta$ is even) or when $\ell_2 = \ell_1 + \frac{\Delta+1}{2}$ (when $\Delta$ is odd), 
    and in this case the duration of the fastest temporal paths both from $u_1$ to $u_2$ and from $u_2$ to $u_1$ become at least $\frac{\Delta}{2}+1=\frac{\Delta+2}{2}$. 
    Therefore, the optimum stretch in $\lambda_{OPT}$ is at least $\frac{\Delta+2}{4}$. Thus, by \cref{radius-optimal-lem}, the radius algorithm returns an approximation of the stretch with ratio at most $\frac{\frac{\Delta+1}{2}}{\frac{\Delta+2}{4}} < 2$.
\end{proof}

\section{General Hardness Results}\label{sec:hardness}
In this section, we present NP-hardness results for~\STGR for (i)~all constant values of~$\Delta\geq 3$ and (ii)~all constant values of~$\alpha\geq 1$.
All of our results are achieves by reductions from~\COL[3], which is NP-hard~\cite{Karp1972Reducibility}.

\problemdef{\COL[3]}{A graph $G = (V,E)$.}{Is there a~\emph{proper~$3$-coloring~$\chi$} of~$G$, that is, a function~$\chi \colon V \to \{1,2,3\}$, such that for each edge~$\{u,v\}\in E$, $\chi(u) \neq \chi(v)$?}

First, we will present two hardness results for~$\Delta = 3$: one for~$\alpha\in [1,1.5)$ and one for~$\alpha \in [1.5,2)$.
The first reduction answers an open question by Erlebach~et~al.~\cite{EMW24} and the second reduction proves that our presented radius-algorithm is tight on some hard instances.
That is, we show that on the build instances, the radius-algorithm is guaranteed to achieve a stretch of~$2$ and it is NP-hard to decide whether a better stretch is possible on these instances.
Afterwards, we present hardness results for all other constant values of~$\Delta \geq 4$ and~$\alpha\geq 2$.
To this end, we will adapt the latter reduction for~$\Delta=3$ by replacing parts of the instance by some gadgets that we define for each~$\Delta >3$.

\subsection{Hardness for~$\Delta=3$}
We now start by presenting our first hardness result.

\begin{lemma}\label{thm:NPhalphaone}
    For each~$\alpha\in[1,1.5)$, \probnameshort is NP-hard even if $\Delta=3$ and $G$ has diameter~$2$.
\end{lemma}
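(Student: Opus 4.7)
The plan is to reduce from \COL[3], which is NP-hard. Given a graph $G$, I would build a graph $G'$ of diameter $2$ such that $G$ is $3$-colorable iff $(G',\Delta=3,\alpha)$ is a yes-instance of \probnameshort for every $\alpha\in[1,1.5)$.

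My first step is to extract a clean combinatorial form of the stretch constraint. Since every vertex pair in $G'$ lies at distance at most $2$, and since $\alpha<1.5$ together with integrality of durations forces the fastest temporal path of every distance-$2$ pair to have duration at most $2$ in both directions, the problem collapses to a local label condition. A path of length $\geq 3$ in a $\Delta=3$ labeling always has duration at least $3$, so only length-$2$ common-neighbor paths can help. For a common neighbor $w$ of $u$ and $v$ with labels $\ell_1=\lambda(uw)$ and $\ell_2=\lambda(wv)$, the $u\to v$ direction has duration $\leq 2$ iff $(\ell_1,\ell_2)\in\{(1,2),(2,3),(3,1)\}$, and the reverse direction asks for the "mirrored" set $\{(2,1),(3,2),(1,3)\}$. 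So every distance-$2$ pair must, for each direction, have \emph{some} common neighbor whose two adjacent edges carry cyclically consecutive labels.

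Next, I would design $G'$ around a universal hub $h$ adjacent to every $v\in V(G)$; this both makes the diameter $2$ and lets $\lambda(vh)$ play the role of a tentative color $\chi(v)$. For each edge $\{u,v\}\in E(G)$ I would attach a small \emph{edge gadget} of a constant number of auxiliary vertices, each adjacent to $u$, to $v$, and to $h$. These auxiliary vertices provide \emph{additional} common neighbors of $u$ and $v$ besides $h$, and the gadget is designed so that both directional cyclic-successor conditions at the distance-$2$ pair $(u,v)$ can be simultaneously realized \textbf{iff} $\lambda(uh)\neq\lambda(vh)$. Distance-$2$ pairs not arising from a $G$-edge (e.g.\ auxiliary vertex to auxiliary vertex via $h$, or non-adjacent vertices of $G$ via $h$) are handled by providing enough additional common neighbors and by choosing gadget labels with built-in slack, so that the distance-$2$ condition is satisfiable regardless of how the colors $\lambda(\cdot\,h)$ are distributed.

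The correctness proof would go in two directions. Given a proper $3$-coloring $\chi$ of $G$, I set $\lambda(vh):=\chi(v)$ and fill in each gadget according to a fixed rule depending on the pair $(\chi(u),\chi(v))$, producing stretch $1$. Conversely, from any $3$-periodic labeling of $G'$ of stretch $<1.5$, I define $\chi(v):=\lambda(vh)$ and use the forcing property of the gadget to conclude $\chi(u)\neq\chi(v)$ for every $\{u,v\}\in E(G)$. The main obstacle is the gadget design: it must make the biconditional "satisfiable iff $\lambda(uh)\neq\lambda(vh)$" tight, while not creating new unsatisfiable distance-$2$ constraints either among auxiliary vertices of the same gadget or across different gadgets (all of which are at distance $2$ via $h$). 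A plausible route is to use, per edge, two auxiliary common neighbors of $u$ and $v$, each also adjacent to $h$, and verify by a finite case analysis over the $9$ possible value pairs $(\lambda(uh),\lambda(vh))\in\{1,2,3\}^2$ that both-direction satisfiability coincides exactly with the $6$ pairs where the two hub labels differ.
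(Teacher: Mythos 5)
Your setup is on the right track: reducing from \COL, observing that $\alpha<1.5$ and $\Delta=3$ force every ordered distance-$2$ pair to be served by a length-$2$ path whose two labels are cyclically consecutive, and encoding the color of $v$ as the label of its hub edge. This matches the paper's framing. However, the crux of the reduction is the gadget design, and there your plan is inverted in a way that breaks the forcing. To force $\lambda(uh)\neq\lambda(vh)$ for an edge $\{u,v\}$ of $G$, the pair $(u,v)$ must have \emph{no} common neighbors other than hubs whose incident labels encode the colors: the constraint ``some common neighbor has cyclically consecutive labels in each direction'' can only pin down the hub labels if the hubs are the only options. You instead propose to attach, for each edge of $G$, auxiliary common neighbors of $u$ and $v$ with freely assignable labels; but then a labeling can always set one auxiliary vertex's two edges to $(1,2)$ and another's to $(2,1)$, satisfying both directions for $(u,v)$ regardless of $\lambda(uh)$ and $\lambda(vh)$. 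Extra common neighbors can only relax the constraint on the hub labels, never tighten it, and you give no mechanism by which the gadget could ``read'' the hub labels and become unsatisfiable exactly when they coincide. So the claimed biconditional will not hold, and the backward direction of your correctness argument fails.

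There is a second, related gap: with a single hub $h$ and $\Delta=3$, even distinct labels $\lambda(uh)\neq\lambda(vh)$ give duration $2$ in only \emph{one} direction (e.g.\ labels $1,2$ give duration $2$ from $u$ to $v$ but duration $3$ back), and no single common neighbor can serve both directions. This is why the paper uses two hubs $c$ and $c^*$ with complementary labels $\chi(v)$ and $4-\chi(v)$, and why it attaches its diamond gadgets to the \emph{non-edges} of $G$ (so that non-adjacent pairs, and pairs involving gadget vertices, get duration-$2$ paths independently of the coloring), while edges of $G$ are left with only $c$ and $c^*$ as common neighbors. The forcing then follows because if $\lambda(cu)=\lambda(cv)$, both directions would have to be realized through $c^*$ alone, which is impossible for $\Delta=3$. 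To repair your proof you would need to (i) add a second hub with reversed labels, and (ii) move the auxiliary common neighbors from the edges of $G$ to its non-edges.
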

\begin{proof}
We reduce from~\COL.
Let~$G=(V,E)$ be an instance of~\COL and assume that each vertex of~$V$ has at least one non-neighbor.
Moreover, let~$\alpha \in [1,1.5)$.
We obtain an instance~$I':=(G'=(V',E'),\Delta=3,\alpha)$ of~\STGR as follows:
We initialize the graph~$G':=(V',E')$ as the star with center vertex~$c$ and leaf set~$V$.
Additionally, we add for each non-edge~$\{u,v\}$ of~$G$ the vertices~$x_{u,v}$ and~$x_{v,u}$, and the edges~$\{u,x_{u,v}\},\{u,x_{v,u}\},\{v,x_{u,v}\},\{v,x_{v,u}\}$, and~$\{x_{u,v},x_{v,u}\}$ to~$G'$.
That is, $G'[\{u,v,x_{u,v},x_{v,u}\}]$ is a diamond.
Finally, we add another vertex~$c^*$ to~$G'$ and making~$\{c^*\} \cup (V'\setminus V)$ into a clique in~$G'$.
This completes the construction of~$I'$.
Note that~$G'$ has a diameter of~$2$, since both vertices~$c$ and~$c^*$ are adjacent to all vertices.
In the following, let~$D$ be the distance matrix of~$G'$ and let~$C:=\{c^*\} \cup (V'\setminus V)$.
The intuition of this reduction is that for each edge~$\{u,v\}$ of~$G$, $(u,c,v)$ and~$(u,c^*,v)$ are the only paths of length less than~$\Delta > 2\cdot \alpha = \alpha \cdot D_{u,v}$.
Hence, on both paths, the labels of both edges need to be distinct in any labeling~$\lambda$ of stretch at most~$\alpha$.
Next, we show that~$G$ is~$3$-colorable if and only if~$I$ is a yes-instance of~\STGR.

$(\Rightarrow)$
Let~$\chi\colon V \to \{1,2,3\}$ be a~$3$-coloring of~$G$.
We define a edge labeling~$\lambda$ of~$G'$ of stretch~$1$ as follows:
For each vertex~$v\in V$, we set~$\lambda(\{c,v\}) := \chi(v)$ and~$\lambda(\{c^*,v\}) := 4-\chi(v)$.
For each non-edge~$\{u,v\}$ of~$G$, we set~$\lambda(\{u,x_{u,v}\}) := \lambda(\{v,x_{v,u}\}) := 1$ and~$\lambda(\{u,x_{v,u}\}) := \lambda(\{v,x_{u,v}\}) := 3$.
For all other edges~$e$ of~$G'$, we set~$\lambda(e) := 2$.
Note that these latter edges are the edges of the clique~$V'\setminus V$.
We now show that~$\lambda$ has a stretch of~$1$, that is, for each pair of vertices of distance~$2$ in~$G'$, there are temporal paths of duration~$2$ between them.
Let~$(a,b)$ be a pair of vertices of distance two in~$G'$.
Since~$V'\setminus V$ is a clique in~$G'$, at least one of~$a$ and~$b$ is a vertex of~$V$.
Without loss of generality assume that~$a \in V$.
We distinguish two cases.

First, assume that~$b \notin V$.
By construction of~$G'$ this implies that~$b = x_{u,v}$ for some non-edge~${u,v}$ of~$G$ with~$a \notin \{u,v\}$.
Since we assumes that each vertex of~$V$ has at least one non-neighbor in~$G$, there is a vertex~$w\in V$ such that the vertices~$x_{a,w}$ and~$x_{w,a}$ exist.
Hence, by definition of~$\lambda$, the path~$(a, x_{a,w}, x_{u,v} = b)$ uses the labels~$1$ and~$2$, and the path~$(b=x_{u,v},x_{w,a},a)$ uses the labels~$2$ and~$3$.
Consequently, there are temporal paths between~$a$ and~$b$ of durations~$2$. 

Second, assume that~$b \in V$.
If~$\{a,b\}$ is a non-edge of~$G$, then the vertices~$x_{a,b}$ and~$x_{b,a}$ exist.
Hence, by definition of~$\lambda$, the paths~$(a, x_{b,a}, b)$ and~$(b, x_{a,b}, a)$ uses the labels~$3$ and~$1$.
Consequently, there are temporal paths between~$a$ and~$b$ of durations~$2$.
Otherwise, that is, if~$\{a,b\}$ is an edge of~$G$, then~$\chi(a) \neq \chi(b)$ since~$\chi$ is a proper coloring of~$G$.
Since~$\Delta = 3$, this implies that~$\chi(b) = \chi(a) + 1$ (modulo $\Delta$) or that~$\chi(a) = \chi(b) +1$ (modulo $\Delta$).
Assume without loss of generality that the first is the case.
Hence, by definition of~$\lambda$, the path~$(a, c, b)$ uses consecutive time-labels (modulo $\Delta$) and the path~$(b, c^*,a)$ uses consecutive lime labels (modulo $\Delta$).
Consequently, there are temporal paths between~$a$ and~$b$ of durations~$2$.
Concluding, $\lambda$ has a stretch of~$1\leq \alpha$, which implies that~$I'$ is a yes-instance of~\STGR.

$(\Leftarrow)$
Let~$\lambda\colon E' \to \{1,2,3\}$ be an edge labeling of~$G'$ with stretch at most~$\alpha$.
We define a $3$-coloring~$\chi$ of the vertices of~$V$ as follows:
For each vertex~$v\in V$, we set~$\chi(v) := \lambda(\{c,v\})$.
Next, we show that for each edge~$\{u,v\}\in E$, $u$ and~$v$ receive distinct colors under~$\chi$.
Recall that for~$\lambda$ to have a stretch of~$\alpha$ for~$G'$, at least one path of length at most~$\alpha \cdot D_{u,v} = \alpha \cdot D_{v,u} = \alpha \cdot 2 < 3$ from~$u$ to~$v$ in~$G'$ has duration at most~$\alpha \cdot 2 < 3$.
Since~$\{u,v\}$ is an edge of~$E$, the vertices~$x_{u,v}$ and~$x_{v,u}$ do not exist, which implies that $(u,c,v)$ and~$(u,c^*,v)$ are the only paths from~$u$ to~$v$ in~$G'$ of length less than~$3$ and that~$(v,c,u)$ and~$(v,c^*,u)$ are the only paths from~$v$ to~$u$ in~$G'$ of length less than~$3$.
Assume towards a contradiction that~$\chi(u) = \chi(v)$.
This implies that~$\lambda(\{c,u\}) = \lambda(\{c,v\})$.
Hence, the paths~$(u,c,v)$ and~$(v,c,u)$ have a duration of exactly~$\Delta + 1 = 4 > 2 \cdot \alpha = \alpha \cdot D_{u,v}= \alpha \cdot D_{v,u}$.
Consequently, for~$\lambda$ to have a stretch of~$\alpha$ for~$G'$, both paths~$(u,c^*,v)$ and~$(v,c^*,u)$ must have a duration of~$2$.
Since~$\Delta = 3$, this is impossible.
This contradicts the assumption that~$\lambda$ is an edge labeling of~$G'$ with stretch at most~$\alpha$.
Thus, $\chi(u) = \lambda(\{c,u\}) \neq \lambda(\{c,v\}) = \chi(v)$, which implies that~$\chi$ is a proper~$3$-coloring for~$G'$
\end{proof}

Note that~\Cref{thm:NPhalphaone} answers (by setting~$\alpha=1$) an open question by Erlebach~et~al.~\cite{EMW24} about the complexity of finding a periodic~$\Delta$-labeling for a diameter-2 graph~$G$, such that the fastest paths between any two vertices of~$G$ equals their distance.

Next, we show that for each~$\Delta\geq 3$, it is NP-hard to decide whether a stretch in~$[\hd,\hd[+1])$ can be achieved.
To this end, we define gadget graphs for each~$\Delta$ and some associated labelings.
First, we define this gadgets for~$\Delta = 3$ and present the first reduction.
Afterwards, we define these gadgets for odd values of~$\Delta > 3$ and even values of~$\Delta\geq 4$ and present similar reductions.
The gadgets and their respective labeling for~$\Delta \in [3,10]$ can be seen in~\Cref{fig delta 3,fig odd glasses,fig even glasses}.

\begin{definition}[Sunglasses gadgets for~$\Delta=3$]
A~\emph{$3$-sunglasses gadget with docking points~$u$ and~$v$} is the graph shown in~\Cref{fig delta 3}, where the black vertices indicate the docking points~$u$ and~$v$, and the white vertices are~\emph{central vertices}.
\end{definition}

\Cref{fig delta 3} also shows what we call the~\emph{sunglasses labeling}.

\newcommand{\oddSunglasses}[2][(0,0)]{

\renewcommand\leng{\rex{#2 -1}}
\renewcommand\half{\rex{\leng/2}}

\vertex[fill] (l) at #1 {};

\vertex (u1) at ($(l) + (1,1)$) {};

\vertex[fill] (r) at ($(l) + (\leng+1,0)$) {};

\vertex (d1) at ($(r) - (1,1)$) {};

\ifthenelse{\leng > 1}{
\foreach \x [count=\xi] in {2,...,\the\numexpr\leng\relax}{
\vertex (u\x) at ($(u\xi) + (1,0)$) {};
\vertex (d\x) at ($(d\xi) - (1,0)$) {};
}

\foreach \x [count=\xi] in {2,...,\the\numexpr\leng\relax}{
\draw[ultra thick] (u\xi) -- (u\x) node [midway, fill=white] {\rex{\x}};
\draw[ultra thick] (d\xi) -- (d\x) node [midway, fill=white] {\rex{\x}};
}

}

\draw[ultra thick] (l) -- (u1) node [midway, fill=white] {1};
\draw[ultra thick] (r) -- (u\leng) node [midway, fill=white] {\rex{\leng+1}};
\draw[ultra thick] (r) -- (d1) node [midway, fill=white] {1};
\draw[ultra thick] (l) -- (d\leng) node [midway, fill=white] {\rex{\leng+1}};

\draw[ultra thick] (u\half) -- (d\rex{\half+1}) node [midway, fill=white] {\rex{\half+1}};
\draw[ultra thick] (d\half) -- (u\rex{\half+1}) node [midway, fill=white] {\rex{\half+1}};

\ifthenelse{\leng > 4}{
\draw[red] (u\rex{\leng-1}) -- (d\rex{2}) node [midway, fill=white] {\rex{\half+1}};
\draw[red] (d\rex{\leng-1}) -- (u\rex{2}) node [midway, fill=white] {\rex{\half+1}};
}{}

\ifthenelse{\leng > 2}{

\foreach \x in {1,...,\the\numexpr\half-1\relax}{
\ifthenelse{\x > \leng}{}{
\draw[thick,pcol] (u\rex{\x}) -- (d\rex{\leng-\x}) node [midway, fill=white] {\rex{\x}};
\draw[thick,pcol] (d\rex{\x}) -- (u\rex{\leng-\x}) node [midway, fill=white] {\rex{\x}};

\draw[thick,pcol] (u\rex{\leng-\x+1}) -- (d\rex{\x+1}) node [midway, fill=white] {\rex{\x}};
\draw[thick,pcol] (d\rex{\leng-\x+1}) -- (u\rex{\x+1}) node [midway, fill=white] {\rex{\x}};
}
}

}{}

}

\newcommand{\evenSunglasses}[2][(0,0)]{

\renewcommand\leng{\rex{#2-1}}
\renewcommand\half{\rex{\leng/2}}

\vertex[fill] (l) at #1 {};

\vertex (u1) at ($(l) + (1,1)$) {};

\vertex[fill] (r) at ($(l) + (\leng+1,0)$) {};

\vertex (d1) at ($(r) - (1,1)$) {};

\ifthenelse{\leng > 1}{
\foreach \x [count=\xi] in {2,...,\the\numexpr\leng\relax}{
\vertex (u\x) at ($(u\xi) + (1,0)$) {};
\vertex (d\x) at ($(d\xi) - (1,0)$) {};
}

\foreach \x [count=\xi] in {2,...,\the\numexpr\leng\relax}{
\draw[ultra thick] (u\xi) -- (u\x) node [midway, fill=white] {\rex{\x}};
\draw[ultra thick] (d\xi) -- (d\x) node [midway, fill=white] {\rex{\x}};
}

}

\draw[ultra thick] (l) -- (u1) node [midway, fill=white] {1};
\draw[ultra thick] (r) -- (u\leng) node [midway, fill=white] {\rex{\leng+1}};
\draw[ultra thick] (r) -- (d1) node [midway, fill=white] {1};
\draw[ultra thick] (l) -- (d\leng) node [midway, fill=white] {\rex{\leng+1}};

\draw[ultra thick] (u\rex{\half-1}) -- (d\rex{\half+1}) node [midway, fill=white] {\rex{\half}};
\draw[ultra thick] (d\rex{\half-1}) -- (u\rex{\half+1}) node [midway, fill=white] {\rex{\half}};

\draw[ultra thick,blue] (d\half) -- (u\rex{\half}) node [midway, fill=white] {\rex{\half}};


\ifthenelse{\leng > 6}{
\draw[red] (u\rex{\leng-1}) -- (d\rex{2}) node [midway, fill=white] {\rex{\half}};
\draw[red] (d\rex{\leng-1}) -- (u\rex{2}) node [midway, fill=white] {\rex{\half}};
}{}

\ifthenelse{\leng > 2}{

\foreach \x in {1,...,\the\numexpr\half-1\relax}{
\ifthenelse{\x > \leng}{}{
\draw[thick,pcol] (u\rex{\x}) -- (d\rex{\leng-\x}) node [midway, fill=white] {\rex{\x}};
\draw[thick,pcol] (d\rex{\x}) -- (u\rex{\leng-\x}) node [midway, fill=white] {\rex{\x}};

\draw[thick,pcol] (u\rex{\leng-\x+1}) -- (d\rex{\x+1}) node [midway, fill=white] {\rex{\x}};
\draw[thick,pcol] (d\rex{\leng-\x+1}) -- (u\rex{\x+1}) node [midway, fill=white] {\rex{\x}};
}
}

}{}

}

\begin{figure}
\centering
    \ifarxiv
    \includegraphics[scale=1]{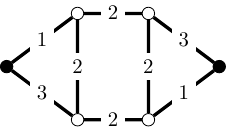}
    \else
\begin{tikzpicture}[xscale=1.2,yscale=.9]
\oddSunglasses{3}
\begin{scope}[xscale=1]
\end{scope}
\end{tikzpicture}
\fi
\caption{The sunglasses gadgets for~$\Delta=3$ with the sunglasses labeling.
The black vertices are the docking points and the white vertices are the central vertices.}
\label{fig delta 3}
\end{figure}

\begin{definition}
Let~$\Delta > 1$, let~$G=(V,E)$ be a graph, and let~$\lambda\colon E \to [1,\Delta]$.
Moreover, let~$u$ and~$v$ be vertices of distance exactly~$2$ in~$G$.
We call a vertex~$z$ of~$G$ a~\emph{nice neighbor of~$u$ and~$v$}, if (i)~$z$ is a common neighbor of~$u$ and~$v$ and (ii)~$\lambda(\{u,z\}) \neq \lambda(\{v,z\})$.
\end{definition}
\begin{observation}
Let~$\Delta > 1$, let~$G=(V,E)$ be a graph, and let~$\lambda\colon E \to [1,\Delta]$.
Moreover, let~$u$ and~$v$ be vertices of distance exactly~$2$ in~$G$.
If~$u$ and~$v$ have a nice common neighbor~$z$, then~$(u,z,v)$ and~$(v,z,u)$ are paths of duration at most~$\Delta$ each. 
\end{observation}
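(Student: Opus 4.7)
The plan is to argue directly from the definition of waiting times applied to the two-edge path $(u,z,v)$ (and symmetrically $(v,z,u)$), noting that since $\lambda(\{u,z\}) \ne \lambda(\{v,z\})$ the single waiting step at $z$ is strictly less than $\Delta$.

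First I would fix notation: let $a := \lambda(\{u,z\})$ and $b := \lambda(\{v,z\})$, with $a,b \in [1,\Delta]$ and $a \neq b$. Using the formula for $\text{wait}(z)$ given in \cref{sec:prelims}, the path $(u,z,v)$ leaves $u$ at time $a$, arrives at $z$, and then departs along $\{v,z\}$ at the next occurrence of label $b$ in the $\Delta$-periodic schedule. Thus $\text{wait}(z) = b - a$ if $b>a$, and $\text{wait}(z) = \Delta + b - a$ if $b<a$. In both cases $1 \leq \text{wait}(z) \leq \Delta - 1$ because $a\neq b$ and both values lie in $[1,\Delta]$.

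Since the path has exactly one intermediate vertex, its duration equals $\text{wait}(z) + 1 \leq (\Delta - 1) + 1 = \Delta$. Applying the identical argument to $(v,z,u)$ with the roles of $a$ and $b$ swapped gives the same bound for the reverse path.

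There is no genuine obstacle here; the only thing to be careful about is that the observation uses the duration convention $t_k - t_0 + 1$ from \cref{sec:prelims} (so a path of two edges traversed in consecutive time steps has duration $2$, not $1$), and that the assumption $a \neq b$ is essential — without it the wait at $z$ would be exactly $\Delta$ and the duration $\Delta + 1$.
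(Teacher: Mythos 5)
Your proof is correct and is exactly the routine argument the paper intends: the observation is stated without proof precisely because it follows immediately from the waiting-time definition, and your computation that $a\neq b$ forces $\mathrm{wait}(z)\in[1,\Delta-1]$ and hence duration $\mathrm{wait}(z)+1\le\Delta$ is the intended justification.
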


We now show that there are graphs with diameter 3 and radius 2, for which it is NP-hard to decide whether one can achieve a better stretch than the one achieved by the radius-algorithm.
This then shows that this simple and natural algorithm is tight on some hard instances.

\begin{lemma}\label{tight stretch hardness}
For~$\Delta = 3$ and each~$\alpha\in [1.5,2)$, \probnameshort is NP-hard even on graphs with diameter~$3$, radius~$2$, and where the radius algorithm produces a labeling of stretch~$2$.
\end{lemma}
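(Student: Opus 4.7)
The plan is to reduce from \COL, adapting the reduction of~\Cref{thm:NPhalphaone} by replacing the diameter-$2$ diamond gadget with the diameter-$3$ $3$-sunglasses gadget. Given an instance $G=(V,E)$ of~\COL (without loss of generality every vertex of~$V$ has a non-neighbour in~$G$), I construct~$G'$ by: starting from~$V$; adding a center vertex~$c$ adjacent to every vertex of~$V$; attaching, for each non-edge $\{u,v\}$ of~$G$, a new copy of the $3$-sunglasses gadget with docking points~$u$ and~$v$; and finally adding an auxiliary vertex~$c^*$ adjacent to~$c$ and to every central vertex of every sunglasses gadget (to keep the diameter bounded). The produced~\probnameshort instance is $(G',\Delta=3,\alpha)$.

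I would then verify the structural properties of~$G'$. The vertex~$c$ is adjacent to $V\cup\{c^*\}$ and every central vertex of every sunglasses is at distance~$2$ from~$c$ (via a docking point, equivalently via~$c^*$); hence~$c$ has eccentricity~$2$, giving $\rad(G')=2$. Conversely, any vertex $v\in V$ together with a central vertex of a sunglasses not docked at~$v$ realizes distance exactly~$3$, giving $\diam(G')=3$. Every distance-$3$ pair $(a,b)$ in~$G'$ satisfies $\dist(c,a)+\dist(c,b)\le 1+2=3<2\rad$, so~\Cref{better bound for r algorithm} yields that the radius algorithm rooted at~$c$ produces a labeling of stretch at most $\Delta-(\Delta-1)/\rad=2$. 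Moreover, for every edge $\{u,v\}\in E$ no sunglasses was added between~$u$ and~$v$, and no central vertex of any attached sunglasses is adjacent to both~$u$ and~$v$; consequently~$c$ is the unique common neighbour of~$u$ and~$v$ in~$G'$. Under the radius algorithm both $\{c,u\}$ and $\{c,v\}$ receive label $\lceil\Delta/2\rceil=2$, so the only length-$2$ temporal path between~$u$ and~$v$ has duration $\Delta+1=4$, which shows that the radius algorithm achieves stretch exactly~$2$ on~$G'$.

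The correctness of the reduction reduces to showing that~$G$ is $3$-colourable if and only if $(G',3,\alpha)$ is a yes-instance. The backward direction is short: for any labeling~$\lambda$ of stretch at most $\alpha<2$ and any edge $\{u,v\}\in E$, some temporal $(u,v)$-path in $(G',\lambda)$ has duration at most $2\alpha<4=\Delta+1$; since~$c$ is the unique common neighbour of~$u$ and~$v$, this forces $\lambda(\{c,u\})\neq\lambda(\{c,v\})$, and hence $\chi(v):=\lambda(\{c,v\})$ is a proper $3$-colouring of~$G$. For the forward direction, given a proper $3$-colouring~$\chi$, I would define~$\lambda$ by setting $\lambda(\{c,v\}):=\chi(v)$ for every $v\in V$, using the sunglasses labeling from~\Cref{fig delta 3} inside each sunglasses (which provides temporal paths of duration exactly~$3=\Delta$ between its two docking points in both directions, hence handles every non-edge pair of~$V$ with stretch~$3/2$), and choosing the labels on the edges incident to~$c^*$ and on $\{c,c^*\}$ in coordination with the other labels. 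A case analysis over the pair types (edges and non-edges of~$V$; pairs involving~$c$, $c^*$, or central vertices; pairs of centrals from the same or from different sunglasses) then verifies that every pair achieves stretch at most $3/2\le\alpha$.

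The main obstacle will be the forward direction, specifically pairs $(x_1,x_2)$ of central vertices from two distinct sunglasses. Such a pair is at distance~$2$ in~$G'$ only via~$c^*$, and a pigeonhole argument on the three labels shows that no assignment of the values $\lambda(\{c^*,\cdot\})$ alone can keep the length-$2$ duration below~$4$ for all such pairs. I would instead exploit length-$3$ temporal paths of the form $x_1\to y_1\to c^*\to x_2$, where~$y_1$ is a suitable central neighbour of~$x_1$ in the same sunglasses as~$x_1$; a coordinated choice of the labels incident to~$c^*$ (together with the fixed sunglasses labeling) produces for every ordered such pair a duration-$3$ length-$3$ temporal path, which is the key step in completing the forward direction while preserving stretch strictly below~$2$ for all remaining pair types.
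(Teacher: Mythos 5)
Your overall strategy matches the paper's: reduce from \COL[3], keep the star with center $c$ on $V$ so that adjacent vertices of $G$ have $c$ as their unique common neighbour, attach a $3$-sunglasses gadget per non-edge, and add auxiliary structure to pull the diameter down to $3$ and the radius to $2$ so that \Cref{better bound for r algorithm} applies. Your backward direction, the structural claims (diameter $3$, radius $2$, distance-$3$ pairs being exactly $V\times X$ up to symmetry), and the appeal to \Cref{better bound for r algorithm} are all sound. The difference is the auxiliary structure: you add a \emph{single} vertex $c^*$ adjacent to $c$ and to all central vertices, whereas the paper adds one private companion $\widehat{x}$ per central vertex $x$ and makes $\widehat{X}$ a clique joined to all of $X\cup\{c\}$.

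This difference is exactly where your proof has a genuine gap. You correctly identify that the ordered pairs of central vertices from distinct gadgets are the crux, but you only \emph{announce} that ``a coordinated choice of the labels incident to $c^*$'' produces duration-$3$ length-$3$ paths for all of them --- you never exhibit that choice or verify it. The difficulty is real: since you keep the sunglasses labeling inside each gadget, every edge between two central vertices of the same gadget carries label $2$, so a relay path $x_1\to y_1\to c^*\to x_2$ has label sequence $(2,\lambda(\{y_1,c^*\}),\lambda(\{c^*,x_2\}))$ and achieves duration $3$ only for the sequence $(2,3,1)$ (symmetrically $(3,1,2)$ for $x_1\to c^*\to y_2\to x_2$). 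This pins down the admissible values of $\lambda(\{c^*,\cdot\})$ quite rigidly (in particular, label $2$ can occur on $c^*$-edges into at most one gadget), and these constraints must be satisfied \emph{simultaneously} with the constraints coming from the distance-$3$ pairs $(v,x)\in V\times X$, whose duration-$\le 4$ paths also have to pass through $c^*$ or through $c$ and a docking point, and whose feasibility depends on the colours $\chi(v)\in\{1,2,3\}$ that you do not control. Several of the natural assignments fail for at least one such pair (e.g.\ $v\to c\to c^*\to x$ with $\chi(v)=3$, $\lambda(\{c,c^*\})=3$ already waits a full period at $c$), so the existence of a consistent global choice is precisely what needs to be proved and is missing. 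The paper's $\widehat{X}$-construction dissolves this tension: setting $\lambda(\{x,\widehat{x}\})=1$ and label $2$ on all other $\widehat{X}$-edges gives every ordered pair in $X\times X$ a path with labels $(1,2)$ or $(2,1)$ through the private companion, and gives every pair in $V\times X$ a path with labels $(1,2,3)$ or $(1,2,1)$, uniformly and without any interaction with $\chi$. To repair your proof you must either write out and verify a concrete assignment of the labels on the edges at $c^*$ (covering all ordered central pairs across gadgets \emph{and} all $V\times X$ pairs, for all colourings $\chi$), or switch to the paper's one-companion-per-central design.
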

\begin{proof}
Let~$\Delta = 3$ and let~$\alpha \in [1.5,2)$.
We again reduce from~\COL.
Let~$G=(V,E)$ be an instance of~\COL.
Again, assume that each vertex of~$V$ has at least one non-neighbor, as otherwise we would know that this vertex receives a unique color in any~$3$-coloring and the remaining vertices can only be colored in two colors, which can be checked in polynomial time.
We obtain an equivalent instance~$I:=(G',\Delta,\alpha)$ of~\STGR as follows:
We initialize the graph~$G':=(V',E')$ as the star with center vertex~$c$ and leaf set~$V$.
Additionally, we add for each non-edge~$\{u,v\}$ of~$G$ a~$3$-sunglasses gadget~$S_{\{u,v\}}$ with docking points~$u$ and~$v$ to~$G'$.
Let~$X$ denote the set of the central vertices of all added sunglasses gadgets.

To complete the definition of~$G'$, we add the vertices~$\widehat{X} := \{\widehat{x} \mid x\in X\}$ to~$G'$ and making each vertex of~$\widehat{X}$ adjacent to each vertex of~$X \cup \widehat{X} \cup \{c\}$ in~$G'$.

This completes the construction of~$I$.
In the following, let~$D$ be the distance matrix of~$G'$.
The intuitive idea is that for each edge~$\{u,v\}\in E$, the path~$(u,c,v)$ ($(v,c,u)$) is the only path of length less than~$\Delta + 1 = 4 > \alpha \cdot D_{u,v} = 2 \cdot \alpha$ from~$u$ ($v$) to~$v$ ($u$) in~$G'$.
Hence, each labeling~$\lambda$ of~$G'$ of stretch at most~$\alpha$ has to assign distinct labels to the edges~$\{c,u\}$ and~$\{c,v\}$.
In other words, if labeling~$\lambda$ of~$G'$ has stretch at most~$\alpha$, then the edges between~$c$ and the vertices of~$V$ imply a~$3$-coloring for~$G$.

\subparagraph{Structural Properties.}
Note that radius of~$G'$ is at most~$2$: 
The neighborhood of vertex~$c$ is~$V \cap \widehat{X}$ and each vertex of~$X = V' \setminus (V \cup \widehat{X} \cup \{c\})$ is a neighbor of each vertex of~$\widehat{X}$.
Moreover, $G'$ has a diameter of~$3$, since (i)~each vertex of~$V$ has distance at most~$2$ to each vertex of~$V \cup \widehat{X}$ by going over~$c$ and thus distance at most~$3$ to each vertex of~$X$ and (ii)~each vertex of~$X$ has distance at most~$2$ to each vertex of~$X \cup \{c\}$ by going over some vertex of~$\widehat{X}$ and thus distance at most~$3$ to each vertex of~$V$.
In particular, all vertex pairs of distance exactly~$3$ in~$G'$ contain one vertex of~$V$ and one vertex of~$X$.
Since~$c$ is a neighbor of all vertices of~$V$, $G'$ fulfills the property of~\Cref{better bound for r algorithm}, which implies that the radius algorithm produces a labeling of stretch of at most~$\Delta - \frac{\Delta-1}{\rad} = 2$.

\subparagraph{Correctness.}
We now show that~$G$ is~$3$-colorable if and only if~$I$ is a yes-instance of~\STGR.

$(\Leftarrow)$
Let~$\lambda\colon E' \to \{1,2,3\}$ be an edge labeling of~$G'$ with stretch at most~$\alpha$.
We define a $3$-coloring~$\chi\colon V \to \{1,2,3\}$ as follows:
For each vertex~$v\in V$, we set~$\chi(v) := \lambda(\{c,v\})$.
Next, we show that for each edge~$\{u,v\}\in E$, $u$ and~$v$ receive distinct colors under~$\chi$.
Since~$\{u,v\}$ is an edge of~$E$, there is no sunglasses gadget with docking points~$u$ and~$v$ in~$G'$.
This implies that $(u,c,v)$ is the only path from~$u$ to~$v$ in~$G'$ of length less than~$\Delta+1 = 4 > \alpha \cdot D_{u,v} = 2\cdot \alpha$.
Assume towards a contradiction that~$\chi(u) = \chi(v)$.
This would imply that~$\lambda(\{c,u\}) = \lambda(\{c,v\})$.
Hence, the unique path~$(u,c,v)$ from~$u$ to~$v$ in~$G'$ of length less than~$\Delta+1 = 4$ has a duration of exactly~$4 = \Delta + 1 > 2 \cdot \alpha = \alpha \cdot D_{u,v}$.
This contradicts the assumption that~$\lambda$ is an edge labeling of~$G'$ with stretch at most~$\alpha$.
Thus, $\chi(u) = \lambda(\{c,u\}) \neq \lambda(\{c,v\}) = \chi(v)$, which implies that~$\chi$ is a proper~$3$-coloring for~$G'$.

$(\Rightarrow)$
Let~$\chi\colon V \to \{1,2,3\}$ be a~$3$-coloring of~$G$.
Assume without loss of generality that each of the three colors is assigned at least once.
We define an edge labeling~$\lambda$ of~$G'$ of stretch~$\hd = 1.5$ as follows:
For each vertex~$v\in V$, we set~$\lambda(\{c,v\}) := \chi(v)$.
For each non-edge~$\{a,b\}$ of~$G$ (that is, for each added sunglasses gadget), we label the edges of the sunglasses gadget~$S_{x,y}$ according the sunglasses labeling (see~\Cref{fig delta 3}).
To complete the definition of~$\lambda$, it remains to define the labels of the edges incident with the vertices of~$\widehat{X}$.
For each vertex~$\widehat{x} \in \widehat{X}$, we set~$\lambda(\{x,\widehat{x}\}) := 1$.
For each other edge~$e$ incident with at least one vertex of~$\widehat{X}$, we set~$\lambda(e) := 2$.

This completes the definition of~$\lambda$.
Next, we show that~$\lambda$ has a stretch of~$1.5 = \hd \leq \alpha$.

First, we consider vertex pairs~$\{y,z\}$ of distance~2 in~$G'$.
For these vertex pairs, we show that there are paths of duration at most~$3 = 1.5\cdot D_{y,z}$ between~$y$ and~$z$.

\begin{itemize}
\item If~$y = c$, then by the initial argumentation, $z \in X$.
Hence, $\widehat{z}$ is a nice neighbor of~$c$ and~$z$, which implies that the path~$(c,\widehat{z},z)$ has duration at most~$\Delta = 3$ is both directions.
\item If~$y \in \widehat{X}$, then by the initial argumentation, $z \in V$.
Since~$z$ is the docking point of at least one sunglasses gadget, there are at least two vertices~$x_1$ and~$x_2$ of~$X$ adjacent to~$z$.
For at least one~$i\in \{1,2\}$, $z \neq \widehat{x_i}$.
Hence, the edge~$\{x_i,z\}$ receives label~$2$ under~$\lambda$.
This implies that~$x_i$ is a nice neighbor, since the edge~$\{y,x_i\}$ receives label either~1 or~3 under~$\lambda$.
Consequently, the path~$(c,\widehat{z},z)$ has duration at most~$\Delta = 3$ is both directions.
\item If~$y \in X$, then~$z \in X \cup V \cup \{c\}$.
The case for~$z = c$ is covered by the above cases.
If~$z \in X$, the path~$(y,\widehat{y},z)$ has labels~$(1,2)$ and thus has duration at most~$\Delta = 3$ in both directions.
Otherwise, if~$z \in V$, then by construction, $z$ is a docking point of the sunglasses gadget that contains~$y$.
Hence, by the sunglasses labeling, there is a nice neighbor of~$y$ and~$z$ in this sunglasses gadget which implies the existence of paths of duration at most~$3$ between in~$y$ and~$z$.
\item If~$y\in V$, then~$z\in X \cup \widehat{X} \cup V$.
The case for~$z \in X \cup \widehat{X}$ is covered by the above cases.
If~$z\in V$, we consider two cases.
If~$\{y,z\}$ is an edge of~$E$, $c$ is a nice neighbor of~$y$ and~$z$, since~$\chi$ is a proper 3-coloring of~$G$.
Otherwise, if~$\{y,z\}$ is not an edge of~$G$, there is a sunglasses gadget with docking points~$y$ and~$z$.
By the sunglasses labeling, this gadget contains a path with labels~$(1,2,3)$ from~$y$ to~$z$ and a path with labels~$(1,2,3)$ from~$z$ to~$y$.
In both cases, there are paths of duration at most~$3$ between~$y$ and~$z$.
\end{itemize}

Next, we consider vertex pairs~$\{y,z\}$ of distance~3 in~$G'$.
For these vertex pairs, we show that there are paths of duration at most~$4 < 4.5 = 1.5\cdot D_{y,z}$ between~$y$ and~$z$.
By the initial argumentation about the structural properties of~$G'$, we can assume without loss of generality that~$y\in V$ and~$z\in X$.
Moreover, $z$ is not part any sunglasses gadget attached to~$y$, as otherwise, the distance between~$y$ and~$z$ would be at most~$2$.
Take an arbitrary sunglasses gadget attached to~$y$ and let~$x_1$ and~$x_3$ be the neighbors of~$y$ in this sunglasses gadget.
By the sunglasses labeling, we can assume that the label of~$\{y,x_i\}$ is equal to~$i$ for each~$i\in \{1,3\}$.
Since~$z\in X$, $\widehat{z}$ is a vertex of~$\widehat{X}$, $\lambda(\{z,\widehat{z}\}) = 1$, and~$\lambda(\{\widehat{z},x_1\}) = \lambda(\{\widehat{z},x_3\}) = 2$.
This implies that the path~$(z,\widehat{z},x_3,y)$ has labels~$(1,2,3)$ and thus a duration of~$3$.
For the other direction, the path~$(y,x_1,\widehat{z},z)$ has labels~$(1,2,1)$ and thus a duration of~$4$.
Hence, there are paths of duration at most~$4$ between~$y$ and~$z$.

This implies that the stretch of~$\lambda$ is at most~$\hd = 1.5 \leq \alpha$, which completes the proof.
\end{proof}

In combination with~\Cref{thm:NPhalphaone}, this implies that for~$\Delta = 3$, \STGR is~\NP-hard for each~$\alpha \in [1,2)$. 
\begin{corollary} 
For~$\Delta=3$, \probnameshort is \NP-hard for each~$\alpha\in [1,2)$.
\end{corollary}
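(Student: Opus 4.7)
The plan is to assemble the corollary from the two preceding hardness lemmas by partitioning the interval $[1,2)$ at the critical value $\alpha = 1.5$. I would write $[1,2) = [1,1.5) \cup [1.5,2)$ and handle the two sub-ranges separately, matching each to the corresponding lemma.

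For $\alpha \in [1, 1.5)$, I would invoke~\Cref{thm:NPhalphaone} directly: that lemma already establishes NP-hardness of~\probnameshort with $\Delta = 3$ for every $\alpha$ in precisely this sub-range (in fact under the stronger restriction that the input graph has diameter~$2$, via the diamond gadget on each non-edge of the input \COL instance). For $\alpha \in [1.5, 2)$, I would invoke~\Cref{tight stretch hardness}, which covers exactly the complementary sub-range with $\Delta=3$ (on graphs of diameter~$3$ and radius~$2$, using the $3$-sunglasses gadgets together with the auxiliary set~$\widehat{X}$). Since the union of the two sub-ranges is all of $[1,2)$, NP-hardness follows for every $\alpha$ in this interval.

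There is no genuine obstacle here: the corollary is a packaging statement whose content is exactly the disjoint union of the ranges handled by the two previous reductions. The only conceptual point worth flagging is why two different constructions are needed in the first place, namely that the two sub-ranges correspond to qualitatively different temporal regimes. For $\alpha < 1.5$ even distance-$2$ pairs must be connected by a strictly-increasing two-edge path (duration at most~$2$), which forces the reduction to work through short gadgets forcing distinct labels on the two edges through~$c$ or through~$c^*$; for $1.5 \le \alpha < 2$ the stretch budget on distance-$2$ pairs grows to~$3$, so the construction must instead enforce the coloring constraint at distance-$2$ edge pairs of $G$ itself while still leaving enough slack for all other distance-$2$ and distance-$3$ pairs in $G'$ (which is what the sunglasses gadget and the $\widehat{X}$-clique are designed to do).
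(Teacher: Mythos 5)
Your proposal is correct and matches the paper exactly: the corollary is obtained by combining \Cref{thm:NPhalphaone} for $\alpha\in[1,1.5)$ with \Cref{tight stretch hardness} for $\alpha\in[1.5,2)$, covering all of $[1,2)$. Your additional remark on why two distinct constructions are needed is accurate but not part of the paper's (one-line) justification.
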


\subsection{Hardness for~$\Delta > 3$}
We now proceed by showing that also for each~$\Delta > 3$,  \probnameshort is NP-hard for each~$\alpha\in [\hd,\frac{\Delta+1}{2})$.
To obtain this result, we define for each~$\Delta > 4$, $\Delta$-sunglasses gadgets and replace the~$3$-sunglasses gadgets in the reduction of the proof of~\Cref{tight stretch hardness} by the larger~$\Delta$-sunglasses gadgets.
The gadgets are slightly different for odd and even values of~$\Delta$.
We first define the gadgets for odd~$\Delta$. See~\Cref{fig odd glasses} for an illustration.

\begin{figure}
\centering
    \ifarxiv
    \includegraphics[scale=1]{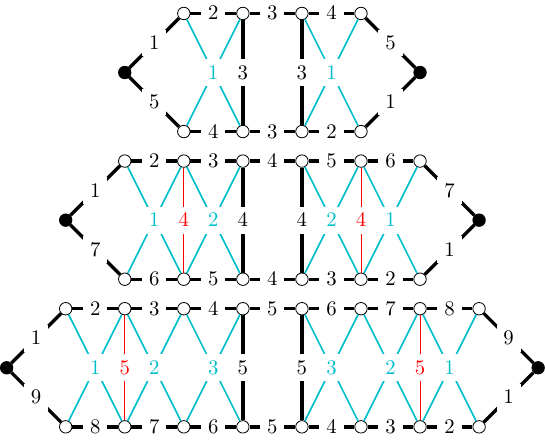}
    \else
\begin{tikzpicture}[xscale=1]
\begin{scope}[scale=1]
\oddSunglasses{5}
\end{scope}
\begin{scope}[yshift=-2.5cm,xshift=-1cm]
\oddSunglasses{7}
\end{scope}
\begin{scope}[yshift=-5cm,xshift=-2cm]
\oddSunglasses{9}
\end{scope}
\end{tikzpicture}
\fi
\caption{The sunglasses gadgets for odd~$\Delta > 3$. 
The black vertices indicate the docking points, the black edges indicate the edges of the chronological paths and cycles, the teal edges indicate the four zigzag paths and the red edge are the parallel edges.
For~$\Delta = 5$, the parallel edges coincide with edges of the chronological cycles.
The shown labeling is the~\emph{sunglasses labeling}.
Formally, this labeling labels (a)~the chronological paths increasingly from~$1$ to~$\Delta$, (b)~the zigzag paths increasingly from~$1$ to~$\hd[-2]$, and (c)~all other edges (namely, all vertically drawn edges) with label~$\hd[+1]$.
Note that in this way, both chronological cycles are assigned increasing labels from~$1$ to~$\Delta$.}
\label{fig odd glasses}
\end{figure}

\begin{definition}[Sunglasses gadgets for odd~$\Delta > 3$]
Let~$\Delta$ be an odd integer with~$\Delta> 3$ and let~$e:= \{u,v\}$. 
An~\emph{$\Delta$-sunglasses gadget with docking points~$u$ and~$v$} is the graph~$G=(V,E)$, consisting of a cycle~$(u = p_{e}^{u,0}, p_{e}^{u,1}, \dots,  p_{e}^{u,\Delta-1}, p_{e}^{u,\Delta} = v = p_{e}^{v,0}, p_{e}^{v,1}, \dots , p_{e}^{v,\Delta-1}, p_{e}^{v,\Delta} = u)$ of length~$2\cdot \Delta$ with several shortcuts that we define in the following.
The shortcuts are as follows: 
\begin{itemize}
\item the edges~$\{p_{e}^{u,\hd[-1]},p_{e}^{v,\hd[+1]}\}$ and~$\{p_{e}^{v,\hd[-1]},p_{e}^{u,\hd[+1]}\}$,
\item the~\emph{parallel edges}~$\{p_{e}^{u,2},p_{e}^{v,\Delta-2}\}$ and~$\{p_{e}^{v,2},p_{e}^{u,\Delta-2}\}$, and
\item for each~$i\in [1,\hd[-3]]$, the~\emph{zigzag edges}~$\{p_{e}^{u,i},p_{e}^{v,\Delta-i-1}\}$, $\{p_{e}^{u,i+1},p_{e}^{v,\Delta-i}\}$, $\{p_{e}^{v,i},p_{e}^{u,\Delta-i-1}\}$, and~$\{p_{e}^{v,i+1},p_{e}^{u,\Delta-i}\}$.
\end{itemize}
Let~$a\in \{u,v\}$ and let~$b\in \{u,v\}\setminus \{a\}$.
We call~$P^a_{e} = (a = p_{e}^{a,0}, p_{e}^{a,1}, \dots,  p_{e}^{a,\Delta-1}, p_{e}^{a,\Delta} = b)$ the~\emph{chronological~$(a,b)$-path} in~$G$.
Moreover, we call the cycle~$C^a_{e} = (a, p_{e}^{a,1}, \dots, p_e^{a,\hd[-1]}, p_e^{b,\hd[+1]},\dots  p_{e}^{b,\Delta-1}, a)$ the~\emph{chronological~$a$-cycle} in~$G$.
The vertices~$p_e^{u,\hd[-1]}$, $p_e^{u,\hd[+1]}$, $p_e^{v,\hd[-1]}$, and~$p_e^{v,\hd[+1]}$ are the~\emph{central vertices} of~$G$.
Finally, $G$ contains four~\emph{zigzag paths}.
These are the unique paths that start in some vertex of~$\{p_e^{u,1},p_e^{u,\Delta-1},p_e^{v,1},p_e^{v,\Delta-1}\}$, only use zigzag edges, and end in a central vertex.
\end{definition}

\Cref{fig odd glasses} also defines the the~\emph{sunglasses labeling} for these gadgets.
Next, we define the gadgets for even values of~$\Delta$.
See~\Cref{fig even glasses} for an illustration.
Roughly speaking, for the case that~$\Delta$ is even, the~$\Delta$-sunglasses gadgets are again cycles of length~$2\cdot \Delta$ with several shortcuts.
The main difference is, that there are only two central vertices, namely the two vertices of half distance between the docking points.
The chronological cycles do not use an edge between these central vertices anymore but rather use a parallel edge between their neighbors.
That is, the chronological cycles have length only~$\Delta-1$ in the case where~$\Delta$ is even.

\begin{figure}
\centering
    \ifarxiv
    \includegraphics[scale=1]{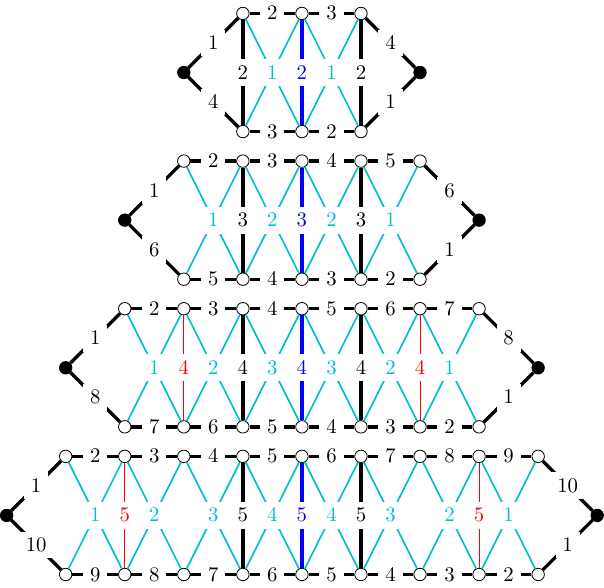}
    \else
\begin{tikzpicture}[xscale=1]
\evenSunglasses{4}
\begin{scope}[yshift=-2.5cm,xshift=-1cm]
\evenSunglasses{6}
\end{scope}
\begin{scope}[yshift=-5cm,xshift=-2cm]
\evenSunglasses{8}
\end{scope}
\begin{scope}[yshift=-7.5cm,xshift=-3cm]
\evenSunglasses{10}
\end{scope}
\end{tikzpicture}
\fi
\caption{The sunglasses gadgets for even~$\Delta\geq 4$. 
Again, the black vertices indicate the docking points, the black edges indicate the edges of the chronological paths and cycles, the teal edges indicate the four zigzag paths and the red edge are the parallel edges.
For~$\Delta \in \{4,6\}$, the parallel edges coincide with edges of the chronological cycles, or the edge (indicated in blue) between the central vertices.
The latter described edge is formally not part of the sunglasses gadget for~$\Delta > 4$ but is added in the NP-hardness reduction anyway.
The shown labeling is the~\emph{sunglasses labeling}.
Formally, this labeling labels /a)~the chronological paths increasingly from~$1$ to~$\Delta$, (b)~the zigzag paths increasingly from~$1$ to~$\hd-1$, and (c)~all other edges (namely, all vertically drawn edges) with label~$\hd$.
Note that in this way, both chronological cycles are assigned strictly increasing labels.}
\label{fig even glasses}
\end{figure}
    
\begin{definition}[Sunglasses gadgets for even~$\Delta\geq 4$]
Let~$\Delta$ be an odd integer with~$Delta\geq 4$ and let~$e:= \{u,v\}$. 
An~\emph{$\Delta$-sunglasses gadget with docking points~$u$ and~$v$} is the graph~$G=(V,E)$, consisting of a cycle~$(u = p_{e}^{u,0}, p_{e}^{u,1}, \dots,  p_{e}^{u,\Delta-1}, p_{e}^{u,\Delta} = v = p_{e}^{v,0}, p_{e}^{v,1}, \dots , p_{e}^{v,\Delta-1}, p_{e}^{v,\Delta} = u)$ of length~$2\cdot \Delta$ with several shortcuts that we define in the following. 
The shortcuts are as follows: 
\begin{itemize}
\item the edges~$\{p_{e}^{u,\hd-1},p_{e}^{v,\hd+1}\}$ and~$\{p_{e}^{v,\hd-1},p_{e}^{u,\hd+1}\}$,
\item the~\emph{parallel edges}~$\{p_{e}^{u,2},p_{e}^{v,\Delta-2}\}$ and~$\{p_{e}^{v,2},p_{e}^{u,\Delta-2}\}$, and
\item for each~$i\in [1,\hd-1]$, the~\emph{zigzag edges}~$\{p_{e}^{u,i},p_{e}^{v,\Delta-i-1}\}$, $\{p_{e}^{u,i+1},p_{e}^{v,\Delta-i}\}$, $\{p_{e}^{v,i},p_{e}^{u,\Delta-i-1}\}$, and~$\{p_{e}^{v,i+1},p_{e}^{u,\Delta-i}\}$.
\end{itemize}
Let~$a\in \{u,v\}$ and let~$b\in \{u,v\}\setminus \{a\}$.
We call~$P^a_{e} = (a = p_{e}^{a,0}, p_{e}^{a,1}, \dots,  p_{e}^{a,\Delta-1}, p_{e}^{a,\Delta} = b)$ the~\emph{chronological~$(a,b)$-path} in~$G$.
Moreover, we call the cycle~$C^a_{e} = (a, p_{e}^{a,1}, \dots, p_e^{a,\hd-1}, p_e^{b,\hd+1},\dots  p_{e}^{b,\Delta-1}, a)$ the~\emph{chronological~$a$-cycle} in~$G$.
The vertices~$p_e^{u,\hd}$ and~$p_e^{v,\hd}$ are the~\emph{central vertices} of~$G$.
Finally, $G$ contains four~\emph{zigzag paths}.
These are the unique paths that start in some vertex of~$\{p_e^{u,1},p_e^{u,\Delta-1},p_e^{v,1},p_e^{v,\Delta-1}\}$, only use zigzag edges, and end in a central vertex.
\end{definition}
Similar to the odd case, \Cref{fig even glasses} also defines the the~\emph{sunglasses labeling} for these gadgets.    
With these defined gadgets, we now present our hardness result.

\begin{theorem}\label{lem delta 3 stretch}
 For each~$\Delta \geq 4$ and each~$\alpha\in [\hd,\frac{\Delta+1}{2})$, \probnameshort is NP-hard on graphs of diameter~$\Oh(\Delta)$.
\end{theorem}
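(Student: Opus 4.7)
The plan is to lift the reduction of \Cref{tight stretch hardness} from $\Delta=3$ to each $\Delta\ge 4$ by plugging in the new $\Delta$-sunglasses gadgets. The conceptual obstacle is that, with $\Delta$ labels available at the star-center $c$, the constraint ``adjacent endpoints of an edge of $G$ receive distinct labels at $c$'' is only a $\Delta$-coloring constraint rather than a $3$-coloring constraint. I sidestep this with a padding argument: given a \COL instance $G=(V,E)$, form $G''$ by adding $\Delta-3$ new vertices $x_1,\dots,x_{\Delta-3}$ that are universal in $G''$, so that $G''$ is $\Delta$-colorable if and only if $G$ is $3$-colorable.

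The construction of $G'$ then mirrors that of \Cref{tight stretch hardness}: take the star with center $c$ and leaves $V(G'')$, attach a $\Delta$-sunglasses gadget $S_{\{u,v\}}$ for every non-edge $\{u,v\}$ of $G''$, and add a hub set $\widehat X$ (one vertex per central vertex of some gadget) whose vertices are made adjacent to $c$, to every central vertex, and to one another. For even $\Delta>4$, additionally insert the edge between the two central vertices of each gadget, as indicated in \Cref{fig even glasses}. Since each $\Delta$-sunglasses gadget has diameter $\Oh(\Delta)$ and $\widehat X$ provides a constant-hop hub between gadgets, we get $\diam(G')\in\Oh(\Delta)$.

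For the backward direction, if $\lambda$ has stretch strictly less than $\hd[+1]$ then for each edge $\{u,v\}$ of $G''$ the only path of length at most $2$ between $u$ and $v$ in $G'$ is $(u,c,v)$, and this path has duration $\Delta+1$ whenever $\lambda(\{c,u\})=\lambda(\{c,v\})$; hence $v\mapsto\lambda(\{c,v\})$ is a proper $\Delta$-coloring of $G''$, whose restriction to $V$ yields a proper $3$-coloring of $G$. For the forward direction, extend a given $3$-coloring $\chi$ of $G$ to a proper $\Delta$-coloring $\chi''$ of $G''$ by using fresh colors on the $x_i$, set $\lambda(\{c,v\}):=\chi''(v)$, label each sunglasses gadget by its sunglasses labeling (which achieves internal stretch at most $\hd$), and label the edges incident to $\widehat X$ exactly as in the proof of \Cref{tight stretch hardness}. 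A case analysis then shows that every vertex pair admits a nice common neighbor supplied either by the sunglasses labeling of the appropriate gadget or by $\widehat X$, giving stretch at most $\hd\le\alpha$.

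The main obstacle is the forward-direction case analysis: pairs of vertices lying in two different sunglasses gadgets, and pairs involving a padding vertex $x_i$ together with the interior of a gadget, must all be shown to admit a temporal path of duration at most $\hd\cdot d$. The sunglasses labeling is engineered so that each chronological path and each zigzag path is labeled by a strictly increasing sequence, allowing any interior vertex of a gadget to be reached from its docking point along a monotone label sequence; $\widehat X$ (with its sunglasses-adjacent edges labeled $1$ and its remaining edges labeled $\hd$ or $\hd[+1]$ as appropriate) then bridges any two gadgets in a constant number of further steps. This part of the argument is largely analogous to the $\Delta=3$ case in \Cref{tight stretch hardness} but requires checking more routes.
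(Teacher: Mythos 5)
Your overall plan (lift the $\Delta=3$ reduction of \Cref{tight stretch hardness} by substituting the $\Delta$-sunglasses gadgets) matches the paper, and your padding idea is a clean equivalent of what the paper actually does: since your $\Delta-3$ universal vertices of $G''$ have no non-edges, they acquire no sunglasses gadgets and end up as degree-one neighbours of $c$, which is exactly the paper's pendant set $V^*=\{v^*_1,\dots,v^*_{\Delta-3}\}$. The genuine gap is in your hub wiring. You keep the $\Delta=3$ hub, making every vertex of $\widehat X$ adjacent to $c$, whereas for $\Delta\geq 4$ the paper deliberately severs $\widehat X$ from $c$ (and for even $\Delta$ replaces it by a clique on the central vertices $X$, with an extra triangle $c_1,c_2,c_3$ for $\Delta=4$). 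This is not a cosmetic difference, because your soundness argument only excludes length-$2$ paths: a distance-$2$ pair merely needs \emph{some} temporal path of duration at most $2\alpha<\Delta+1$, i.e.\ at most $\Delta$, and any path of length $\ell\le\Delta$ can be given strictly increasing labels and hence duration exactly $\ell$. The paper's construction survives this because every $u$--$v$ path other than $(u,c,v)$ must descend $\frac{\Delta-1}{2}$ (resp.\ $\hd$) levels into a gadget to reach the hub layer, cross it, and climb back out, giving length at least $\Delta+1$.

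In your construction, for an edge $\{u,v\}$ of $G$, the route $u\to c\to\widehat{x}\to x\to(\text{zigzag path})\to p^{v,1}_{\overline{e}}\to v$ into any gadget docked at $v$ has length $3+\frac{\Delta-1}{2}$ for odd $\Delta$ and $3+\hd$ for even $\Delta$, which is at most $\Delta$ as soon as $\Delta\geq 5$ (resp.\ $\Delta\geq 6$). An adversarial labelling can therefore satisfy the pair $(u,v)$ via this detour without ever forcing $\lambda(\{c,u\})\neq\lambda(\{c,v\})$, so the map $v\mapsto\lambda(\{c,v\})$ need not be a proper colouring. The same bypass defeats the label-forcing at the padding vertices: the distance-$2$ requirement from $x_i$ to a vertex of $V$ can be met through $\widehat X$ rather than through the length-$2$ path at $c$, so the $\Delta-3$ star-edges $\{c,x_i\}$ are no longer forced to occupy $\Delta-3$ private labels. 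Both halves of the mechanism that reduces $\Delta$-colouring back to $3$-colouring thus collapse. To repair the argument you must either adopt the paper's hub structures (which are tailored per parity of $\Delta$ precisely so that all bypasses have length at least $\Delta+1$) or prove directly that no labelling can exploit the short detours, which your sketch does not attempt.
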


\subparagraph{Construction.}
Let~$\Delta \geq 4$ and let~$\alpha \in [\hd,\frac{\Delta+1}{2})$.
We again reduce from~\COL.
Let~$G=(V,E)$ be an instance of~\COL.
Again, assume that each vertex of~$V$ has at least one non-neighbor, as otherwise we would know that this vertex receives a unique color in any~$3$-coloring and the remaining vertices can only be colored in two colors, which can be checked in polynomial time.
We obtain an equivalent instance~$I:=(G',\Delta,\alpha)$ of~\STGR as follows:
We initialize the graph~$G':=(V',E')$ as the star with center vertex~$c$ and leaf set~$V \cup V^*$, where~$V^* := \{v^*_i \mid i \in [1, \Delta - 3]\}$.
Additionally, we add for each non-edge~$\{u,v\}$ of~$G$ a~$\Delta$-sunglasses gadget~$S_{\{u,v\}}$ with docking points~$u$ and~$v$ to~$G'$.
Let~$X$ denote the set of the central vertices of all added sunglasses gadgets and let~$Y$ denote the set of all other internal vertices of all added sunglasses gadgets.

\begin{figure}
\centering
    \ifarxiv
    \includegraphics[scale=1]{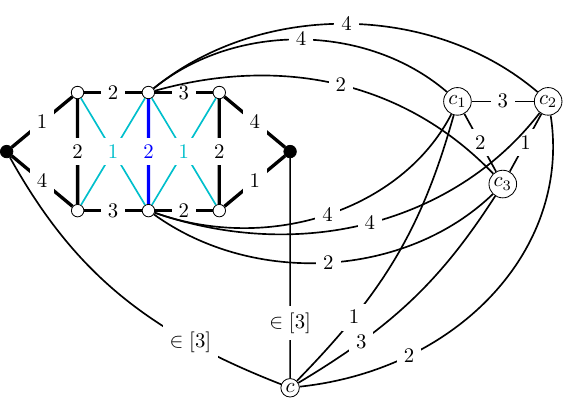}
    \else
\begin{tikzpicture}[xscale=1.2]
\evenSunglasses{4}
\begin{scope}[xscale=1]

\vertex[inner sep=1pt] (c) at (4,-4) {$c$};

\node (qqq) at (7,.25) {};
\vertex[inner sep=1pt] (c1) at ($(qqq) + (-.64,.6)$) {$c_1$};
\vertex[inner sep=1pt] (c2) at ($(qqq) + (+.64,.6)$) {$c_2$};
\vertex[inner sep=1pt] (c3) at ($(qqq) + (0,-.8)$) {$c_3$};

\draw[thick] (c1) -- (c2) node [midway, fill=white] {3};
\draw[thick] (c1) -- (c3) node [midway, fill=white] {2};
\draw[thick] (c2) -- (c3) node [midway, fill=white] {1};

\draw[thick] (l) edge[bend right=20] node [near end, fill=white] {$\in [3]$} (c) ;
\draw[thick] (r) -- (c) node [near end, fill=white] {$\in [3]$};

\draw[thick] (c) edge[bend right=13] node[near start, fill=white]{$1$} (c1);
\draw[thick] (c) edge[bend right=45] node[near start, fill=white]{$2$} (c2);
\draw[thick] (c) edge[bend right=13] node[near start, fill=white]{$3$} (c3);

\draw[thick] (c1) edge[bend right=45] node[fill=white]{$4$} (u2);
\draw[thick] (c2) edge[bend right=45] node[fill=white]{$4$} (u2);
\draw[thick] (c3) edge[bend right=35] node[fill=white]{$2$} (u2);

\draw[thick] (c1) edge[bend left=45] node[fill=white]{$4$} (d2);
\draw[thick] (c2) edge[bend left=40] node[fill=white]{$4$} (d2);
\draw[thick] (c3) edge[bend left=45] node[fill=white]{$2$} (d2);
\end{scope}
\end{tikzpicture}
\fi
\caption{An illustration of the additional vertices~$\{c1,c2,c_3\}$ and edges (as well as their labels) for the reduction in the special case of~$\Delta = 4$.}
\end{figure}

To complete the definition of~$G'$, we distinguish between three non-exclusive cases:
\begin{itemize}
\item \even{If~$\Delta$ is even, we turn the set~$X$ of all central vertices into a clique in~$G'$.}
\item \constant{If~$\Delta = 4$, we additionally add three vertices~$c_1,c_2$, and~$c_3$ to~$G'$ and add edges, such that~$N_{G'}[c_i] = \{c\} \cup X \cup \{c_1,c_2,c_3\}$ for each~$i\in \{1,2,3\}$.}
\item \odd{If~$\Delta$ is odd, we add the vertices~$\widehat{X} := \{\widehat{x} \mid x\in X\}$ to~$G'$ and making each vertex of~$\widehat{X}$ adjacent to each vertex of~$X \cup \widehat{X}$ in~$G'$.}
\end{itemize}
This completes the construction of~$I$.
In the following, let~$D$ be the distance matrix of~$G'$.

\subparagraph{Intuition.}
The intuitive idea is that for each edge~$\{u,v\}\in E$, the path~$(u,c,v)$ (resp.~$(v,c,u)$) is the only path of length less than~$\Delta + 1 > \alpha \cdot D_{u,v} = 2 \cdot \alpha$ from~$u$ ($v$) to~$v$ ($u$) in~$G'$.
Hence, each labeling~$\lambda$ of~$G'$ of stretch at most~$\alpha$ has to assign distinct labels to the edges~$\{c,u\}$ and~$\{c,v\}$.
In other words, if labeling~$\lambda$ of~$G'$ has stretch at most~$\alpha$, then the edges between~$c$ and the vertices of~$V$ imply a~$\Delta$-coloring for~$G$.
From these~$\Delta$ colors, only 3 can actually be assigned to the vertices of~$V$, since each of the edges from~$c$ to vertices of~$V^*$ will receive a unique label.
We now show that~$G$ is~$3$-colorable if and only if~$I$ is a yes-instance of~\STGR.
\subparagraph{Correctness.}
First, we show the forwards direction.
\begin{lemma}\label{correctness if part}
$G$ is~$3$-colorable if~$I$ is a yes-instance of~\STGR
\end{lemma}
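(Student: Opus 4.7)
The plan is to extract a proper~$3$-coloring~$\chi$ of~$G$ from an arbitrary labeling~$\lambda$ of~$G'$ of stretch at most~$\alpha<\frac{\Delta+1}{2}$ by defining~$\chi(v):=\lambda(\{c,v\})$, after showing that the~$\Delta-3$ edges from~$c$ to~$V^*$ exhaust all but three of the~$\Delta$ available labels, and that~$\lambda$ assigns distinct labels to the two edges at~$c$ incident to the endpoints of each edge of~$G$.

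The first and crucial step is a structural ``isolation claim'': for every pair~$(a,b)$ belonging to one of the three families (i)~$\{u,v\}$ with~$\{u,v\}\in E(G)$, (ii)~$\{v_i^*,v_j^*\}$ with~$i\neq j$, or (iii)~$\{v,v_i^*\}$ with~$v\in V$ and~$v_i^*\in V^*$, the only path in~$G'$ from~$a$ to~$b$ of length at most~$\Delta$ is~$(a,c,b)$. For families~(ii) and~(iii) this is immediate: each~$v_i^*$ has degree~$1$ in~$G'$, so any path from~$v_i^*$ begins with~$\{v_i^*,c\}$ and any path ending in~$v_j^*$ ends with~$\{c,v_j^*\}$; since a simple path visits~$c$ at most once, the length must be exactly~$2$. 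For family~(i) the argument is more delicate, and exploits three structural properties of~$G'$: no sunglasses gadget~$S_{\{u,v\}}$ exists (since~$\{u,v\}\in E(G)$), distinct sunglasses gadgets share only docking points, and the two docking points of any~$\Delta$-sunglasses gadget have no common neighbor inside that gadget. The extra vertices added for~$\Delta$~even (the~$X$-clique and, for~$\Delta=4$, the vertices~$c_1,c_2,c_3$) do not create shortcuts between~$V$-vertices, because they are not adjacent to any vertex of~$V$.

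Combining the isolation claim with the elementary fact that the length of any temporal path is at most its duration, any temporal~$a$-$b$ path of duration at most~$2\alpha<\Delta+1$ has length at most~$\Delta$, and by the isolation claim must therefore be~$(a,c,b)$. If~$\lambda(\{c,a\})=\lambda(\{c,b\})$ then this path has duration exactly~$\Delta+1$, violating the stretch bound. Hence~$\lambda(\{c,a\})\neq\lambda(\{c,b\})$ for every pair in families~(i)--(iii).

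Two consequences are then immediate. First, the~$\Delta-3$ labels~$\lambda(\{c,v_1^*\}),\dots,\lambda(\{c,v_{\Delta-3}^*\})$ are pairwise distinct and occupy~$\Delta-3$ of the labels in~$\{1,\dots,\Delta\}$; let~$L$ denote the complementary set of~$3$ labels. Second, each label~$\lambda(\{c,v\})$ with~$v\in V$ differs from every~$\lambda(\{c,v_i^*\})$ and therefore lies in~$L$; moreover, for every edge~$\{u,v\}\in E(G)$ we have~$\lambda(\{c,u\})\neq\lambda(\{c,v\})$. Composing~$v\mapsto\lambda(\{c,v\})$ with any bijection~$L\to\{1,2,3\}$ thus yields a proper~$3$-coloring of~$G$. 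The main obstacle I expect is the family~(i) part of the isolation claim: ruling out, for each length~$\ell\in\{3,\dots,\Delta\}$ and each type of intermediate vertex (internal gadget vertex, docking point, central vertex, connective vertex~$c_j$, or~$c$ itself), every potential~$u$-$v$ path is somewhat notation-heavy, but routine given the three structural properties listed above.
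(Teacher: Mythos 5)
Your proposal is correct and follows essentially the same route as the paper: both derive the $3$-coloring from the labels $\lambda(\{c,v\})$, use the pairs $(v_i^*,w)$ to force the $\Delta-3$ labels on $\{c,v_i^*\}$ to be distinct from each other and from all $V$-labels (leaving a $3$-element label set), and use the uniqueness of the length-$\le\Delta$ path $(u,c,v)$ for each edge $\{u,v\}\in E(G)$ to force proper coloring, with the paper simply asserting the isolation claim ``by construction'' where you sketch it via structural properties. The only minor quibble is that for your family~(iii) the degree-$1$ argument alone shows the path must start with $\{v_i^*,c\}$ but not that it then ends immediately at $v$; one still needs the (true) observation that any continuation from $c$ to $v\in V$ other than the direct edge has length at least $\Delta+1$.
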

\begin{proof}
Let~$\lambda\colon E' \to [1,\Delta]$ be an edge labeling of~$G'$ with stretch at most~$\alpha$.
First, we show that there is a set~$L'\subseteq [1,\Delta]$ of size at most three, such that for each vertex~$v\in V$, $\lambda(\{c,v\}) \in L'$.
To this end, we show that for each~$i\in [1,\Delta-3]$, the edge~$\{v^*_i,c\}$ is the only edge incident with~$c$ of label~$\lambda(\{v^*_i,c\})$.
This then implies that~$\lambda(\{v^*_i,c\}) \notin L'$ for each~$i\in [1,\Delta-3]$, which then results into~$L'$ having size at most~$3$.
Let~$w$ be an arbitrary vertex of~$(V^* \cup V) \setminus \{v^*_i\}$.
Since~$v^*_i$ and~$w$ have distance exactly~$2$, there is a path of duration less than~$\Delta + 1$ by the fact that~$\alpha \cdot D_{v^*_i,w} <  2 \cdot D_{v^*_i,w} =2 \cdot \hd[+1] =  \Delta+1$.
By construction, $P:=(v^*_i,c,w)$ is the only path from~$v^*_i$ to~$w$ of length (and thus duration) less than~$\Delta+1 > \alpha \cdot D_{v^*_i,w}$.
For~$P$ to have duration less than~$\Delta+1$, the edges~$\{v^*_i,c\}$ and~$\{w,c\}$ receive distinct labels under~$\lambda$.
Since vertex~$w$ was chosen arbitrarily, this implies that edge~$\{v^*_i,c\}$ is the only edge incident with~$c$ of label~$\lambda(\{v^*_i,c\})$.
This further concludes that~$L'$ has size at most~$3$.

We now define a $3$-coloring~$\chi\colon V \to L'$ as follows:
For each vertex~$v\in V$, we set~$\chi(v) := \lambda(\{c,v\})$.
Next, we show that for each edge~$\{u,v\}\in E$, $u$ and~$v$ receive distinct colors under~$\chi$.
Since~$\{u,v\}$ is an edge of~$E$, there is no sunglasses gadget with docking points~$u$ and~$v$ in~$G'$.
This implies that $(u,c,v)$ is the only path from~$u$ to~$v$ in~$G'$ of length less than~$\Delta+1 > \alpha \cdot D_{u,v}$.
Assume towards a contradiction that~$\chi(u) = \chi(v)$.
This would imply that~$\lambda(\{c,u\}) = \lambda(\{c,v\})$.
Hence, the unique path~$(u,c,v)$ from~$u$ to~$v$ in~$G'$ of length less than~$\Delta+1$ has a duration of exactly~$\Delta + 1 > 2 \cdot \alpha = \alpha \cdot D_{u,v}$.
This contradicts the assumption that~$\lambda$ is an edge labeling of~$G'$ with stretch at most~$\alpha$.
Thus, $\chi(u) = \lambda(\{c,u\}) \neq \lambda(\{c,v\}) = \chi(v)$, which implies that~$\chi$ is a proper~$3$-coloring for~$G'$.
\end{proof}

It remains to show the backwards direction.
To this end, suppose that~$G$ is~$3$-colorable.
Let~$\chi\colon V \to \{1,2,3\}$ be a~$3$-coloring of~$G$.
Assume without loss of generality that each of the three colors is assigned at least once.

\subparagraph{The labeling.}
We define an edge labeling~$\lambda$ of~$G'$ of stretch~$\hd$ as follows:
Let~$$\ell_2 := \begin{cases} \hd & \Delta \text{ is even, and}\\ \hd[+1] & \text{otherwise.} \end{cases}$$
Moreover, let~$L:= \{\ell_1 := \ell_2 -1, \ell_2, \ell_3 := \ell_2 + 1\}$.
For each~$i\in [1,\ell_1-1]$, we set~$\lambda(\{c,v^*_i\}) := i$ and for each~$i\in [\ell_3+1,\Delta]$, we set~$\lambda(\{c,v^*_{i-3}\}) := i$.
For each vertex~$v\in V$, we set~$\lambda(\{c,v\}) := \ell_\chi(v)$.
For each non-edge~$\{a,b\}$ of~$G$ (that is, for each added sunglasses gadget), we label the edges of the sunglasses gadget~$S_{x,y}$ according the sunglasses labeling.
To complete the definition of~$\lambda$, we distinguish between three cases:
\begin{itemize}
\item \even{
For even~$\Delta > 4$, it remains to define the labels of the edges between the vertices of~$X$.
For each of these edges~$e$, we set~$\lambda(e) := \ell_2$.
}
\item \constant{For~$\Delta = 4$, it remains to define the labels of the edges between the vertices of~$X$ and the labels of the edges incident with the vertices~$c_1,c_2$, and~$c_3$.
Let~$i\in \{1,2,3\}$.
Recall that~$N_{G'}[c_i] = \{c\} \cup X \cup \{c_1,c_2,c_3\}$.
We set~$\lambda(\{x,c_i\}) := i$ and for each vertex~$x\in X$, we set~$\lambda(\{c_i,x\}) := \begin{cases}2 & i = 3,\text{~and}\\ 4 & \text{otherwise.}\end{cases}$

Additionally, we set~$\lambda(\{c_1,c_2\}):= 3$, $\lambda(\{c_1,c_3\}):= 2$, and~$\lambda(\{c_2,c_3\}):= 1$.
}
\item \odd{
For odd~$\Delta$, it remains to define the labels of the edges incident with the vertices of~$\widehat{X}$.
For each vertex~$\widehat{x} \in \widehat{X}$, we set~$\lambda(\{x,\widehat{x}\}) := \ell_1$.
For each other edge~$e$ incident with at least one vertex of~$\widehat{X}$, we set~$\lambda(e) := \ell_2$.
}
\end{itemize}
This completes the definition of~$\lambda$.
Next, we show that~$\lambda$ has a stretch of~$\hd \leq \alpha$.
To this end, we first observe the existence of temporal paths of duration at most~$3\cdot \hd$ between the vertices of~$V\cup Y\cup X \cup \widehat{X}$.\footnote{Recall that~$\widehat{X} = \emptyset$ for even~$\Delta$.}
Afterwards, between the vertices of~$V\cup Y\cup X \cup \widehat{X}$, we then only need to consider vertex pairs of distance 2.

\begin{lemma}\label{paths between all}
Let~$v\in V\cup Y\cup X \cup \widehat{X}$ and let~$w\in V\cup Y\cup X \cup \widehat{X}$.
There is a path of duration at most~$3\cdot \hd$ from~$v$ to~$w$.
If~$v\in V$, then (i)~the first edge of this path has label~$\Delta$ or (ii)~this path has duration less than~$\Delta$ and the first edge of this path has label~$1$.
If~$w\in V$, then the last edge of this path has label~$\Delta$.
Moreover, if any of~$v$ or~$w$ is from~$X \cup \widehat{X}$, this path has length at most~$\Delta$.
\end{lemma}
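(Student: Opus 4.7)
The plan is to proceed by an exhaustive case analysis on which of the sets $V$, $Y$, $X$, $\widehat X$ contain $v$ and $w$, exhibiting in each case an explicit temporal $(v,w)$-path in $(G',\lambda)$ that satisfies the claimed bounds. The building blocks available are: the chronological $(a,b)$-paths and cycles inside any sunglasses gadget $S_e$, all of which are labeled by the strictly increasing sequence $1,2,\dots,\Delta$; the zigzag paths, labeled increasingly from $1$ up to roughly $\hd[-1]$; and the hub region, whose incident edges all carry labels close to $\ell_2\in \{\hd,\hd[+1]\}$. A key preliminary observation is that each docking point $u\in V$ of a gadget has one gadget-neighbor reached by an edge of label $1$ (the start of a chronological path) and one gadget-neighbor reached by an edge of label $\Delta$ (the last edge of the other chronological path). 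Hence both the ``first edge has label $\Delta$'' and the ``first edge has label $1$'' options of the lemma are realizable from any $v\in V$, and symmetrically the ``last edge has label $\Delta$'' option is realizable when $w\in V$.

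I would first dispose of the intra-gadget cases, where $v$ and $w$ lie in (or are docking points of) a common sunglasses gadget $S_e$. Here the required path is simply a truncation of a chronological path or chronological cycle of $S_e$, possibly followed or preceded by a zigzag segment; the duration is bounded by $\Delta\leq 3\hd$ and the length by $\Delta$, and the endpoint label conditions are met by choosing the direction of the chronological traversal appropriately. When both $v,w\in V$ belong to the same gadget, the chronological $(v,w)$-path itself has duration exactly $\Delta$, begins with label $1$, and ends with label $\Delta$, matching option (ii) of the first-edge condition together with the last-edge condition.

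The main case is then when $v$ and $w$ lie in different gadgets or one of them is strictly inside the hub region. The template I would use is: exit $v$'s gadget along a chronological segment toward a central vertex, cross the hub in at most two steps (using the clique in $X$ for even $\Delta>4$, the triangle $\{c_1,c_2,c_3\}$ for $\Delta=4$, or the bipartition between $X$ and $\widehat X$ for odd $\Delta$), and then enter $w$'s gadget along another chronological segment. Since the hub edges carry labels near $\ell_2$ while chronological segments carry labels from $1$ to $\Delta$ in order, the schedule involves at most one wrap across the period boundary, giving total duration at most $\Delta + \hd = 3\hd$. When $v\in V$, one chooses the initial segment so that its first edge has label $\Delta$; when $w\in V$, the final chronological segment ends at $w$ along an edge of label $\Delta$; and when either endpoint lies in $X\cup \widehat X$, the path reaches that endpoint using at most one half of a chronological path, giving length at most $\Delta$.

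The principal obstacle is the bookkeeping across the three regimes (even $\Delta>4$, $\Delta=4$, and odd $\Delta$), since each uses a different hub structure with slightly different incident labels. The delicate step in each regime is verifying that the concatenation of a chronological prefix, a hub crossing, and a chronological suffix is actually time-respecting modulo a single wrap of the period; this reduces to an arithmetic check that the last label of the prefix, the hub edge labels around $\hd$, and the first label of the suffix can be placed on a strictly increasing sequence of length at most $3\hd$ in $\{1,\dots,2\Delta\}$, which is exactly what the choice $\ell_1 = \ell_2-1$, $\ell_2$, $\ell_3 = \ell_2+1$ was designed to permit.
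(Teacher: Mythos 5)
Your overall architecture --- a case analysis whose main case is ``exit the source gadget, cross the hub, enter the target gadget'' --- matches the paper's. But the route you choose for the exit leg breaks the duration accounting, and this is the load-bearing step. You propose to leave $v$'s gadget ``along a chronological segment toward a central vertex.'' A chronological segment from a docking point to a central vertex carries labels $1,2,\dots$ up to roughly $\hd$, so it delivers you to the hub at time $\approx\ell_2$. The hub edges themselves carry labels $\ell_1$ and $\ell_2$, i.e.\ essentially the time at which you arrive; a temporal path needs a strictly later occurrence of that label, so you must wait nearly a full period before crossing. The suffix that reaches $w\in V$ with a final edge of label $\Delta$ then occupies the interval $[\ell_3,\Delta]$ of the \emph{following} period, and the total duration comes out near $2\Delta=4\cdot\hd$, not the claimed $\Delta+\hd=3\cdot\hd$. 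Your own description of the ``delicate step'' --- placing the last label of the prefix, the hub labels near $\hd$, and the first label of the suffix on one strictly increasing sequence --- is exactly where this fails: the prefix's last label and the hub labels coincide, so they cannot be so placed without adding $\Delta$.

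The paper's proof avoids this by approaching the hub via the \emph{zigzag paths}, not the chronological ones: from $v\in V$ it first traverses the incident gadget edge of label $\Delta$ (which also discharges the first-edge condition of the lemma) and then follows a zigzag path labeled $1,\dots,\ell_1-1$, reaching a central vertex within duration about $\hd$ but early enough in the period (around label $\ell_1$) that the hub crossing at labels $\ell_1,\ell_2$ and the entry into $w$'s gadget at label $\ell_3$ all still fit before the next wrap; the chronological suffix then runs up to label $\Delta$, giving total duration $\ell_3+\Delta-1=3\cdot\hd$. You list the zigzag paths as a building block and invoke them in the intra-gadget case, but in the inter-gadget case they are not optional decoration --- they are the only reason the bound $3\cdot\hd$ is attainable. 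As written, your template would have to be reworked around them (separately for even $\Delta>4$, $\Delta=4$, and odd $\Delta$, as the paper does) before the arithmetic check you defer to actually goes through.
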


\begin{proof}
To show~\Cref{paths between all} we distinguish between even and odd~$\Delta$.
First, we consider the even case.

\even{
\begin{claim}\label{paths between all even}
\Cref{paths between all} holds if~$\Delta$ is even.
\end{claim}
\begin{claimproof}
To prove the statement, we show the existence of two special kinds of fast paths:
\begin{itemize}
\item For each~$v\in V \cup Y$, there is a vertex~$x\in X$ that is part of the same sunglasses gadget of~$v$ and a path~$P_{v}$ of duration at most~$\Delta-1$ from~$x$ to~$v$ that starts with an edge of label~$\ell_3$.
\item For each~$v\in V \cup Y \cup X$ and each distinct vertex~$x \in X$, there is a path~$P_{v, x}$ from~$v$ to~$x$ with (i)~duration at most~$\ell_3$ that ends with an edge of label~$\ell_2$ or (ii)~duration at most~$\ell_2$ that ends with an edge of label~$\ell_1$.
\end{itemize}

First, we show the existence of the first type of fast paths.
Let~$\none = \{u,w\}$ be the non-edge of~$G$ for which~$v$ is a vertex of the sunglasses gadget~$S_{\none}$.
Since~$v$ is not a vertex of~$X$, we can assume without loss of generality that~$v$ is part of the \chro~$C^u_{\none}$.
We set~$x$ to be the unique vertex of~$X$ in~$S_{\none}$ that is part of the chronologic~$(w,u)$-path.
By definition of the sunglasses labeling, $x$ has a neighbor~$s$ in~$C^u_{\none}$, such that edge~$\{s,x\}$ receives label~$\ell_3$ and the edge of~$s$ towards its successor in~$C^u_{\none}$ has label~$\ell_3+1$.
Consider the path~$P_{v}$ that starts at~$x$, moves to~$s$, and then follows the \chro~$C^u_{\none}$ until reaching vertex~$v$.
Since~$\Delta$ is even, $C^u_{\none}$ contains~$\Delta - 1$ vertices.
Hence, $P_v$ contains at most~$\Delta$ vertices and thus~$\Delta-1$ edges.
By choice of~$x$, the labels of~$P_v$ increase by exactly one (modulo $\Delta$) for consecutive edges.
This implies that~$P_v$ has duration of at most~$\Delta-1$.
Moreover, if~$v \in V$, then the last edge of~$P_v$ has label~$\Delta$.

Next, we show the existence of the second type of fast paths.
Let~$v\in V \cup Y \cup X$ and each distinct vertex~$x \in X$.
We distinguish three cases:

\textbf{Case:}~$v\in X$\textbf{.} 
Then~$\{v,x\}$ is an edge of label~$\ell_2$, which proves the existence of the claimed path.

\textbf{Case:}~$v\in Y$\textbf{.} 
By definition of the sunglasses labeling, there is a zigzag path~$Z$ from~$v$ to some vertex~$x'\in X$ which is part of the same \chro, such that~$Z$ has duration at most~$\ell_1$ and reaches vertex~$x'$ with an edge of label~$\ell_1$.
If~$x = x'$, this shows the existence of the claimed path. 
Otherwise, further traverse the edge~$\{x',x\}$ with label~$\ell_2$.
This then implies a path of duration at most~$\ell_2$ that reaches~$x$ with an edge of label~$\ell_2$, which proves the existence of the claimed path.
Note that in both cases, the duration of the presented path to~$x$ is by one lower than the stated duration.
We will make use of that additional duration 1 in the following final case.

\textbf{Case:}~$v\in V$\textbf{.} 
If~$v\in V$, then there is a \chro~$C$ attached to~$v$, and~$v$ has an incident edge in~$C$ of label~$\Delta$. 
Let~$s$ denote the endpoint of this edge.
The above discussed zigzag path~$Z$ from~$s$ to some vertex of~$X$ starts with label~$1$.
Hence, by going from~$v$ to~$s$ at time~$\Delta$ and then following the zigzag path from~$s$ to~$x$, we obtain a desired path from~$v$ to some vertex of~$X$, since it only increases the proven duration of the path from~$s$ to~$x$ by one.
In particular, note that this path starts with label~$\Delta$.
This shows the existence of the second type of fast paths.

Based on the existence of these types of fast paths, we can now prove the claim.
Let~$v\in V \cup Y \cup X$ and let~$w\in V \cup Y \cup X$.
Due to symmetry, we only present a fast path from~$v$ to~$w$.
Since both~$v$ and~$w$ are vertices of at least one sunglasses gadget, the above implies that there is a path~$P_{w}$ of duration at most~$\Delta-1$ from some vertex~$x\in X$ to~$w$.
Moreover, the label of the first edge of~$P_{w}$ is~$\ell_3$.
Additionally, by the above, there is a path~$P_{v, x}$ from~$v$ to~$x$ with (i)~duration at most~$\ell_3$ that ends with an edge of label~$\ell_2$ or (ii)~duration at most~$\ell_2$ that ends with an edge of label~$\ell_1$
In both cases, the duration of the path~$P$ from~$v$ to~$w$ obtained from first following~$P_{v,x}$ and then following~$P_w$ is at most~$\ell_3 + \Delta - 1 = \ell_2 + \Delta = 3 \cdot \hd$.
Moreover, if~$v\in V$, then the first label of~$P$ is~$\Delta$, since the first label of~$P_{v,x}$ is~$\Delta$.
Finally, if~$w\in V$, then the last label of~$P$ is~$\Delta$, since the last label of~$P_{w}$ is~$\Delta$.
\end{claimproof}
}

Next, we consider the odd case.

\odd{
\begin{claim}\label{paths between all odd}
\Cref{paths between all} holds if~$\Delta$ is odd.
\end{claim}
\begin{claimproof}
To prove the statement, we show the existence of two special kinds of fast paths:
\begin{itemize}
\item For each~$\widehat{x} \in \widehat{X}$ and each~$v\in V \cup Y \cup X$ where~$x$ and~$v$ are not part of a common~\chro, there is a path~$P_{\widehat{x},v}$ of duration at most~$\Delta-1$ from~$\widehat{x}$ to~$v$ that starts with an edge of label~$\ell_2$.
\item For each~$v\in V \cup Y$, there is a path~$P_v$ of duration at most~$\ell_1$ from~$v$ to some vertex~$x\in X$ which is part of a common \chro with~$v$, such that~$P_v$ ends with an edge of label~$\ell_1-1$
.
\end{itemize}

We now prove the existence of the first type of fast paths.
Let~$\widehat{x} \in \widehat{X}$ and let~$v \in V \cup Y \cup X \cup \widehat{X}$.

\textbf{Case:} $v$ is part of a common chronological cycle~$C$ with~$x$\textbf{.}
If~$v$ and~$\widehat{x}$ are adjacent, the path~$P_{\widehat{x},v} := (\widehat{x},v)$ has duration~$1 < \Delta$.
Otherwise, that is, if~$v$ and~$x$ are not adjacent, consider the path~$P_{\widehat{x},v}$ that first traverses the edge from~$\widehat{x}$ to~$x$ at time~$\ell_1$ and then follows then follows~$C$ until reaching vertex~$v$.
Since~$v$ is not adjacent to~$\widehat{x}$ (that is, $v$ is not a vertex of~$X$), this path has duration at most~$\Delta < 3 \cdot \hd$.
In particular, if~$v$ is a vertex of~$V$, the last edge of this path has label~$\Delta$.

\textbf{Case:}~$v$ is not part of a common chronological cycle~$C$ with~$x$\textbf{.}
The sunglasses labeling thus implies that there is a vertex~$z\in X$, such that~$z$ is part of a common chronological cycle~$C$ with~$v$, and the edge of~$z$ to its successor in~$C$ has label~$\ell_3$.
If~$v$ is the predecessor of~$z$ in~$C$, then~$v$ is a vertex of~$X$ and the edge~$\{\widehat{x},v\}$ exists and has label~$\ell_2$.
Hence, $P_{\widehat{x},v}:=(\widehat{x},v)$ has duration~$1 < \Delta-1$ and starts at time~$\ell_2$.
Otherwise, that is, if~$v$ is not the predecessor of~$z$ in~$C$, consider the path~$P_{\widehat{x},v}$ that first traverses the edge from~$\widehat{x}$ to~$z$ at time~$\ell_2$ and then follows then follows~$C$ until reaching vertex~$v$.
This path has a duration of at most~$\Delta-1$ and starts with label~$\ell_2$.
Moreover, if~$v$ is a vertex of~$V$, the last edge of this path has label~$\Delta$.
This proves the existence of the first type of fats paths.

We now prove the existence of the second type of fast paths.
Let~$v\in V \cup Y$.
If~$v\in Y$, by definition of the sunglasses labeling, there is a zigzag path~$Z$ from~$v$ to some vertex~$x\in X$ which is part of the same \chro, such that~$Z$ has a duration less than~$\ell_1$ and reaches vertex~$x$ at time~$\ell_1-1$.
If~$v\in V$, then there is a \chro~$C$ attached to~$v$, and~$v$ has an incident edge in~$C$ of label~$\Delta$. 
Let~$s$ denote the endpoint of this edge.
For~$\Delta > 3$, the above discussed zigzag path~$Z$ from~$s$ to some vertex of~$X$ starts with label~$1$.
Hence, by going from~$v$ to~$s$ at time~$\Delta$ and then following the zigzag path from~$s$ to~$x$, we obtain a desired path from~$v$ to some vertex of~$X$.
Moreover, this path starts with label~$\Delta$ and has a duration of at most~$\ell_1$ due to the duration of the zigzag path~$Z$.
This shows the existence of the second type of fast paths.

Since the edges between the vertices of~$X$ and~$\widehat{X}$ all receive labels from~$\{\ell_1,\ell_2\}$ the paths~$P_v$ can be extended to obtain paths from each vertex of~$V\cup Y$ to each vertex of~$\widehat{X}$ with duration at most~$\ell_1 + 2 < \hd + 2 < 3\cdot \hd$.
This is due to the fact that~$P_v$ reaches some vertex of~$X$ with an edge of label~$\ell_1-1$ with a duration of at most~$\ell_1$.

Hence, between any vertex~$\widehat{x}\in \widehat{X}$ and any vertex~$v\in V \cup Y \cup X$, there are paths of duration at most~$3 \cdot \hd$ each, since the describes path~$P_{\widehat{x},c}$ has a duration of at most~$\Delta$.
It thus remains to consider fast paths between vertices~$v$ and~$w$ of~$V \cup X \cup Y$.

\textbf{Case:}~$v$ and~$w$ are part of a common chronological cycle~$C$\textbf{.}
Hence, there are paths of duration at most~$\Delta-1$ between~$v$ and~$w$ each, by following their common chronological cycle.
In particular, if~$v\in V$, the first edge of the path from~$v$ to~$w$ has label~$1$ and the last edge of the path from~$w$ to~$v$ has label~$\Delta$.
Hence, the statement holds for pairs of vertices that are part of some common chronological cycle.

\textbf{Case:}~$v$ and~$w$ are not part of any common chronological cycle\textbf{.}
Due to symmetry, we only describe a fast path from~$v$ to~$w$.
By the above, there is some path~$P_v$ of duration at most~$\ell_1$ that reaches some vertex~$x\in X$ with an edge of label~$\ell_1-1$.
Moreover, this vertex~$x$ is part of a common \chro with~$v$.
Since~$v$ and~$w$ are not part of any common \chro, $x$ is not part of a common~\chro with~$w$.
Hence, the described path~$P_{\widehat{x},w}$ from~$\widehat{x}$ to~$w$ has duration at most~$\Delta$ and starts with an edge of label~$\ell_2$.
Now, consider the path~$P_{v,w}$ obtained by first following the path~$P_v$ to vertex~$x$, afterwards going to~$\widehat{x}$ with label~$\ell_1$, and then following the path~$P_{\widehat{x},w}$ starting with label~$\ell_2 = \ell_1+1$.
The duration of this path~$P_{v,w}$ is at most~$d(P_{v}) + 1 + d(P_{\widehat{x},w}) \leq \ell_1 + 1 + \Delta - 1= \hd[-1] + \Delta < 3 \cdot \hd$, where~$d$ denotes the duration of the respective path.
Moreover, if~$v$ is a vertex of~$V$, then~$P_{v,w}$ starts with label~$\Delta$, since~$P_v$ starts with label~$\Delta$.
Similarly, if~$w$ is a vertex of~$V$, then~$P_{v,w}$ ends with label~$\Delta$, since~$P_w$ ends with label~$\Delta$.  
\end{claimproof}
}

Now, \Cref{paths between all} follows from~\Cref{paths between all even} and~\Cref{paths between all odd}.
\end{proof}

Recall that \Cref{paths between all} states that for each vertex~$v \in v\cup Y \cup X \cup \widehat{X}$ and each vertex~$w \in v\cup Y \cup X \cup \widehat{X}$ there are paths of duration at most~$3\cdot \hd$ each between~$v$ to~$w$.
Hence, if~$v$ and~$w$ have distance at least~$3$, then these paths have a duration of at most~$\hd \cdot D_{a,b} \leq \alpha \cdot D_{a,b}$.
Moreover, the last statement of~\Cref{paths between all} states that whenever~$v$ or~$w$ is from~$X\cup \widehat{X}$, then the duration of these paths is at most~$\Delta \leq \hd \cdot D_{a,b} \leq \alpha \cdot D_{a,b}$. 
To show that the stretch of~$\lambda$ is at most~$\hd$, it thus remains to consider (i)~the durations of fastest paths from and to vertices of~$\{c\} \cup V^*$, (ii)~the durations of fastest paths between vertex pairs from~$Y \cup V$ of distance exactly two, and \constant{(iii)~if~$\Delta = 4$, fastest paths from and to vertices of~$\{c_1,c_2,c_3\}$}.

First, we consider the duration of fastest paths from and to vertices of~$\{c\} \cup V^*$.
To this end, we distinguish between the different values of~$\Delta$.
We distinguish between the cases of~$\Delta = 4$ and~$\Delta > 4$.

\constant{
First, we consider~$\Delta = 4$.

\begin{lemma}
Let~$\Delta = 4$, let~$q$ be a vertex of~$V^* \cup \{c,c_1,c_2,c_3\}$, and let~$z$ be a vertex of~$V'$.
There is a path from~$q$ to~$z$ and a path from~$z$ to~$q$ that each have a duration of at most~$\hd \cdot D_{q,z} = 2 \cdot D_{q,z}$.
\end{lemma}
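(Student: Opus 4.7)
The plan is a finite case analysis on the identity of $q$, followed within each case by a case analysis on $z$. Whenever $q$ and $z$ are adjacent, the single-edge path has duration $1 \leq 2 = 2 \cdot D_{q,z}$, so we only need to worry about non-neighbors; and since $\diam(G') \in \Oh(\Delta)$, the distance $D_{q,z}$ is always bounded, so the budget $2 \cdot D_{q,z}$ grows with the length of the shortest path we need to reroute. Throughout, the period is $\Delta = 4$ and $\hd = 2$, so the allowed durations are $2 D_{q,z} \in \{2,4,6,\ldots\}$.

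For $q = c$, the non-neighbors of $q$ lie in $X \cup Y$ and have distance exactly $2$ from $c$. For $z \in X$ I will use the two routes $c \to c_1 \to z$ (labels $(1,4)$, duration $4$) and $z \to c_1 \to c$ (labels $(4,1)$, duration $2$). For $z \in Y$, let $S_{\{u,v\}}$ be the sunglasses gadget containing $z$. I will combine an edge $c \to u$ or $c \to v$ (of label $\chi(u)$ or $\chi(v)$, which are distinct in $\{1,2,3\}$) with the single sunglasses-edge out of that docking point to $z$; whenever this fails due to a full-period wraparound, I reroute through some $c_i$ to a central vertex of $S_{\{u,v\}}$ and then use the zigzag edge (label $1$) into $z$, for instance $c \to c_2 \to p_e^{v,2} \to z$ with labels $(2,4,1)$, i.e. times $(2,4,5)$, giving duration $4$.

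For $q = v^*_1$, the only neighbor is $c$ with $\lambda(\{c,v^*_1\}) = 4$. When $D_{v^*_1, z} = 2$ (that is, $z$ is any other neighbor of $c$), the path $v^*_1 \to c \to z$ uses labels $(4, \lambda(\{c,z\}))$ with $\lambda(\{c,z\}) \in \{1,2,3\}$, so the second edge fires at time $4 + \lambda(\{c,z\}) \le 7$, giving duration at most $4$; the reverse direction is symmetric. When $D_{v^*_1, z} = 3$ (i.e. $z \in X \cup Y$), I will prepend or append the edge $\{v^*_1, c\}$ to one of the $c$-to-$z$ paths constructed in the first case. Concatenation with the label-$4$ edge costs at most an additional $2$ time units once we align the time windows (using freedom to pick which $c_i$ to route through), which keeps the total duration within $6 = 2 \cdot 3$.

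For $q = c_i$ with $i \in \{1,2,3\}$, the neighbors are $c$, all of $X$, and the two $c_j$ for $j \neq i$. The non-neighbors sit in $V \cup V^* \cup Y$, at distance $2$ from $c_i$. For $z \in V$ with $\chi(z) > i$ the direct path $c_i \to c \to z$ has increasing labels $(i, \chi(z))$ and duration at most $4$; when $\chi(z) \le i$, I reroute through another $c_j$ using the triangle $\{c_1,c_2,c_3\}$ whose labels $(1,2,3)$ supply a strictly increasing detour, e.g. $c_3 \to c_2 \to c \to z$ for $\chi(z) = 3$ uses labels $(1,2,3)$ with duration $3$. For $z \in V^*$ the same triangle argument applies, and for $z \in Y$ I route through an $X$-vertex inside the corresponding sunglasses. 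The main obstacle throughout is that, with $\Delta = 4$, a single mis-chosen concatenation pushes duration to $5$ or more; the key to avoiding this is the deliberate design of labels on $\{c,c_i\}$ and $\{c_i,X\}$ (spanning $\{1,2,3\}$ and $\{2,4\}$) together with the sunglasses labeling that always makes a label-$1$ zigzag edge available into any $Y$-vertex, guaranteeing that at least one candidate route fits within the budget in every subcase.
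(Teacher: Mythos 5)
Your proposal is correct in substance and follows essentially the same route as the paper's own proof: an exhaustive case analysis on $q \in \{v^*_1, c, c_1, c_2, c_3\}$ and on the type and distance of $z$, exhibiting the very same witness paths the paper uses --- the detour $c \to c_2 \to x \to z$ with labels $(2,4,1)$ and its reverse $z \to x \to c_3 \to c$ with labels $(1,2,3)$, prepending/appending the label-$4$ edge $\{v^*_1,c\}$ to these, the nice-neighbor arguments via $c_1$ for $X$-vertices and via label-$1$ zigzag edges for $Y$-vertices, and the triangle $\{c_1,c_2,c_3\}$ detours.

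Two small repairs are needed, neither of which changes the approach. First, for $q=c_i$ and $z\in V$ you prescribe the triangle detour whenever $\chi(z)\le i$, but for, e.g., $i=2$ and $\chi(z)=1$ both detours fail ($c_2\to c_1\to c\to z$ has labels $(3,1,1)$, duration $7$; $c_2\to c_3\to c\to z$ has labels $(1,3,1)$, duration $5$), while the direct path $c_2\to c\to z$ with labels $(2,1)$ has duration $4$ and is the one that works; the detour is needed exactly when $\chi(z)=i$ (full-period wait at $c$), and the direct path suffices whenever $\chi(z)\neq i$. Second, concatenating the label-$4$ edge $\{v^*_1,c\}$ can cost up to $3$ extra time units (e.g.\ appending it to $x\to c_1\to c$, which arrives at $c$ with label $1$), not ``at most $2$''; the conclusion still holds because in those subcases the underlying $c$-to-$z$ subpath has slack against its own budget, but the uniform ``$+2$'' justification as stated is not valid. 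With the case boundary corrected and the concatenation cost checked per subcase, your argument coincides with the paper's.
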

\begin{proof}
Recall that~$V^* = \{v^*_1\}$ since~$\Delta = 4$.
For better readability, denote~$v^* := v^*_1$.
Further, recall that~$v^*$ is a degree-1 neighbor of~$c$.
Moreover, the neighborhood of~$c$ is~$V \cup \{c_1,c_2,c_3\} \cup \{v^*\}$, the neighborhood of each vertex of~$c_i$ is~$\{c,c_1,c_2,c_3\} \cup X$, and each vertex of~$Y$ is adjacent to at least one vertex of~$V$ and at least one vertex of~$X$.
In other words, no vertex of~$V'$ has distance more than~$2$ ($3$) with~$\{c,c_1,c_2,c_3\}$ ($v^*$), the vertices of distance exactly~$2$ ($3$) with~$c$ ($v^*$) are the vertices of~$X \cup Y$, the vertices of distance exactly~$2$ with~$v^*$ are the vertices of~$V \cup \{c_1,c_2,c_3\}$, and the vertices of distance~$2$ with any vertex of~$\{c_1,c_2,c_3\}$ are the vertices of~$V \cup Y$.

First, we consider the vertex~$q=v^*$.
By definition of~$\lambda$, the edge~$\{v^*, c\}$ receives label~$4$.
This is the incident with~$c$ of label~$4$.
Hence, $c$ is a nice common neighbor of~$v^*$ and each other neighbor of~$c$, that is, for each vertex~$z \in V \cup \{c_1,c_2,c_3\}$, $c$ is a nice common neighbor of~$v^*$ and vertex~$z$.
This implies that the statement hold for all vertices~$z\in V'$ with distance at most two with~$v^*$.
Hence, consider a vertex~$z$ of distance exactly~$3$ with~$v^*$.
By the initial argument on the distances in~$G'$, $z$ is a vertex of~$Y$.
Moreover, by definition of the sunglasses gadgets and the sunglasses labeling, there is a vertex~$x\in X$ in the same sunglasses gadget as~$z$, such that the edge between~$z$ and~$x$ is part of a zigzag path (of length one) and thus receives label~$\ell_1 = 1$.
This holds, since~$\Delta = 4$.
Consider the path~$P_{v^*,z} := (v^*,c,c_2,x,z)$.
By definition of~$\lambda$, the labels of the edges of this path are~$4,2,4,1$.
Hence, $P_{v^*,z}$ has a duration of~$6 = 3\cdot \hd = \hd \cdot D_{v^*,z}$.
Moreover, note that the subpath~$(c,c_2,x,z)$ has a duration of~$4 = \Delta = \hd \cdot D_{c,z}$.
Now, consider the path~$P_{z,v^*} := (z,x,c_3,c,v^*)$.
By definition of~$\lambda$, the labels of the edges of this path are~$1,2,3,4$.
Hence, $P_{v^*,z}$ has a duration of~$4 < 3\cdot \hd = \hd \cdot D_{v^*,z}$.
Moreover, note that the subpath~$(z,x,c_3,c)$ has a duration of~$3 < \Delta = \hd \cdot D_{c,z}$.
Hence, the statement holds for~$q=v^*$.

Next, we consider the case~$q=c$.
Recall that the only vertices of distance more than~$1$ with~$c$ are the vertices of~$X \cup Y$.
As argued by the above subpaths, for each vertex~$y$ of~$Y$, there are paths of duration at most~$\hd \cdot D_{c,z}$ each between~$c$ and~$v$.
Hence, we only have to consider the durations of paths between~$c$ and vertices of~$X$.
For each such vertex~$x\in X$, $\lambda(\{c,c_1\}) = 1 \neq 4 = \lambda(\{c_1,x\})$.
That is, $c_1$~is a nice neighbor of both~$c$ and~$x$.
This implies that there are paths between~$c$ and~$x$ of duration at most~$\Delta = \hd \cdot D_{c,x}$ each.
Hence, the statement holds for~$q=v^*$.

It remains to consider the case of~$q \in \{c_1,c_2,c_3\}$.
Recall that the only vertices of distance more than~$1$ with~$q$ are the vertices of~$V \cup Y$.
Let~$z\in Y$.
As discussed before, there is a vertex~$x\in X$ in the same sunglasses gadget as~$z$, such that the edge between~$z$ and~$x$ is part of a zigzag path (of length one) and thus receives label~$\ell_1 = 1$.
This holds, since~$\Delta = 4$.
Hence, $\lambda(\{z,x\}) = 1 \neq \lambda(\{x,q\}) \in \{2,4\}$.
That is, $x$~is a nice neighbor of both~$z$ and~$q$.
This implies that there are paths between~$c$ and~$x$ of duration at most~$\Delta = \hd \cdot D_{c,x}$ each.
Finally, let~$z\in V$.
Moreover, let~$i\in \{1,2,3\}$, such that~$q = c_i$.
Note that~$c$ is a common neighbor of~$v_i$ and~$z$ and that~$\lambda(\{c,c_i\}) = i$.
If~$\lambda(\{z,c\}) \neq i$, then~$c$ is a nice common neighbor of~$z$ and~$c_i$.
In this case, the statement holds.
Thus, assume that~$\lambda(\{z,c\}) = i$.
We distinguish three cases.

\textbf{Case:} $i = 1$\textbf{.}
Consider the paths~$P_{c_i,z} := (c_1, c_3, c,z)$ and~$P_{z, c_i} := (z,c,c_2,c_1)$.
The labels of the edges of these paths are~$2,3,1$, and~$1,2,3$, respectively. 

\textbf{Case:} $i = 2$\textbf{.}
Consider the paths~$P_{c_i,z} := (c_2, c_1, c,z)$ and~$P_{z, c_i} := (z,c,c_3,c_2)$.
The labels of the edges of these paths are~$3,1,2$, and~$2,3,1$, respectively. 

\textbf{Case:} $i = 3$\textbf{.}
Consider the paths~$P_{c_i,z} := (c_3, c_2, c,z)$ and~$P_{z, c_i} := (z,c,c_1,c_3)$.
The labels of the edges of these paths are~$1,2,3$, and~$3,1,2$, respectively.

In all three cases, the described paths between~$c_i$ and~$z$ have duration at most~$4 = \Delta = \hd \cdot D_{c_i,z}$ each.
Thus, the statement also holds for~$q\in \{c_1,c_2,c_3\}$.
This completes the proof. 
\end{proof}
}

Next, we consider~$\Delta > 4$.

\begin{lemma}
Let~$\Delta > 4$, let~$q$ be a vertex of~$V^* \cup \{c\}$, and let~$z$ be a vertex of~$V'$.
Then, there is a path from~$q$ to~$z$ and a path from~$z$ to~$q$ that each have a duration of at most~$\hd \cdot D_{q,z}$.
\end{lemma}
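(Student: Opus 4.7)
The plan is to case-split on whether $q=c$ or $q\in V^*$, and within each on the location of $z$. The key observation driving every case is that all edges incident with the center $c$ carry pairwise distinct labels: the $\Delta-3$ edges to $V^*$-vertices exhaust the labels in $[1,\Delta]\setminus\{\ell_1,\ell_2,\ell_3\}$, and every edge to a vertex of $V$ receives a label in the disjoint set $\{\ell_1,\ell_2,\ell_3\}$. Consequently, any length-$2$ temporal path through $c$ uses two distinct labels and hence has duration at most $\Delta=2\cdot\hd$.

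First, I will dispose of the short-distance cases, namely $z\in V\cup V^*\cup\{c\}$: here $D_{q,z}\le 2$, and either the direct edge (if $D_{q,z}=1$) or the length-$2$ path through $c$ already has duration at most $\Delta=\hd\cdot D_{q,z}$ by the observation above.

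For the remaining case, $z$ lies in some sunglasses gadget, that is $z\in X\cup Y\cup \widehat{X}$. Here I will invoke~\Cref{paths between all}: for any $v\in V$ it supplies a path $P_v$ from $v$ to $z$ of duration at most $3\hd$ whose first edge has label $\Delta$ (or, alternatively, a path of duration strictly less than $\Delta$ whose first edge has label $1$), and moreover $d(P_v)\le \Delta$ whenever $z\in X\cup\widehat{X}$. When $q=c$ I will prepend the edge $\{c,v\}$, and when $q=v^*_i$ I will prepend the length-$2$ path $(v^*_i,c,v)$. Since I may freely choose the color class $\chi(v)\in\{1,2,3\}$, I will arrange that the arrival time at $v$, which equals $\ell_{\chi(v)}\le \ell_3$, strictly precedes the label $\Delta$ of the first edge of $P_v$ within the current period, so that no extra waiting period is incurred. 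The resulting concatenation has total duration at most $\Delta-\ell_{\chi(v)}+d(P_v)$, with an additive correction of at most $\Delta$ in the case $q=v^*_i$, and this is within the target $\hd\cdot D_{q,z}$ whenever $D_{q,z}$ is sufficiently large. The reverse direction $z\to q$ is handled symmetrically, because~\Cref{paths between all} also guarantees that the last edge of its paths into $V$-vertices has label $\Delta$.

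The main obstacle will be the tight-budget regime in which $z$ lies in a sunglasses gadget but $D_{q,z}\in\{2,3\}$: there the generic $3\hd$-bound from~\Cref{paths between all} can exceed the target. For these cases I plan to argue directly from the sunglasses labeling, exhibiting an explicit short temporal path entirely within the gadget attached to a carefully chosen docking point. For instance, when $q=c$ and $z\in\{p_{e}^{u,1},p_{e}^{v,\Delta-1}\}$ sits at distance $2$ from $c$, a docking point of $z$'s gadget whose color $\chi$ yields the appropriate $\ell_{\chi(\cdot)}$-value serves as a nice common neighbor of $c$ and $z$, giving a length-$2$ temporal path of duration at most $\Delta$. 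The even versus odd $\Delta$ dichotomy is already absorbed into~\Cref{paths between all} and can be treated uniformly at this level.
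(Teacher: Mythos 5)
Your overall architecture (nice common neighbors through $c$ for short distances, concatenating a prefix from $q$ to a suitably colored $v\in V$ with the paths of \Cref{paths between all} for long distances, and explicit in-gadget paths for the tight cases) matches the paper's, but your accounting of which cases are tight has a genuine gap. You flag only $D_{q,z}\in\{2,3\}$ as problematic, yet for $q=v^*_i\in V^*$ the generic concatenation also fails at $D_{q,z}\in\{4,5\}$. Concretely: the prefix $(v^*_i,c,v)$ starts at time $\ell^*=\lambda(\{v^*_i,c\})$, which can exceed $\ell_3$ (indeed can equal $\Delta$), so the walk only reaches $v$ at time $\Delta+\ell_{\chi(v)}$ and can only board the label-$\Delta$ first edge of $P_v$ at time $2\Delta$; with $d(P_v)\le 3\cdot\hd$ from \Cref{paths between all}, the total duration can reach roughly $2\Delta+3\cdot\hd-\ell^*\approx 3\Delta$, exceeding the budgets $\hd\cdot 4=2\Delta$ and $\hd\cdot 5=\frac{5}{2}\Delta$. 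Even in the favourable subcase $\ell^*<\ell_1$ one only gets about $\Delta+3\cdot\hd-\ell^*\le \frac{5}{2}\Delta-1$, still above $2\Delta$. So your generic argument only covers $D_{q,z}\ge 6$ for $q\in V^*$, and vertices at distances $4$ and $5$ from $v^*_i$ exist for every $\Delta>4$. The paper closes exactly these cases with sharper, gadget-specific paths: for $D_{q,z}=4$ it extends the distance-$3$-from-$c$ paths that shortcut through the \emph{parallel edge} of $z$'s own gadget (reaching $z$ by time about $\Delta+\hd$, hence within $2\Delta$ from $v^*_i$ after a careful split on whether $\ell^*<\ell_1$ or $\ell^*>\ell_3$), and for $D_{q,z}\ge 5$ it abandons \Cref{paths between all} entirely and follows the chronological $(v,w)$-path of $z$'s own gadget from its docking point, which reaches $z$ within duration $\Delta-1$ starting at label $1$ --- a far stronger bound than $3\cdot\hd$. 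Your proposal needs an analogous refinement to go through.

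Two smaller imprecisions. The claim that all edges incident with $c$ carry pairwise distinct labels is false: two vertices of $V$ with the same color share the label $\ell_{\chi(\cdot)}$; only the edges to $V^*$ have unique labels, which fortunately is all your short-distance cases use. And you cannot ``freely choose the color class $\chi(v)$'' --- the coloring is fixed --- but since each color is assumed to be used at least once you may choose a vertex $v$ of the desired color, which is what the paper does with its vertices $v_1$ and $v_3$.
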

\begin{proof}
Recall that~$L :=\{\ell_1,\ell_2,\ell_3\}$ are the labels of the edges between~$c$ and the vertices of~$V$.
Moreover note that~$L \cap \{1,\Delta\} = \emptyset$ since~$\Delta > 4$.

Firstly, we consider the durations from and to the center vertex~$c$.
Recall that for each vertex~$v\in V$, $\lambda(\{v,c\}) \in L$.

Let~$z$ be a vertex of distance exactly~$2$ with~$c$.
By construction, there is a vertex~$v\in V$ which is a common neighbor of~$c$ and~$z$.
Due to the sunglasses labeling, edge~$\{v,z\}$ receives label either~$1$ or~$\Delta$.
By definition of~$\lambda$, edge~$\{c,v\}$ receives a label from~$L$.
Since~$\Delta > 4$, $L \cap \{1,\Delta\} = \emptyset$.
Hence, $v$ is a nice neighbor of~$c$ and~$z$.

Let~$z$ be a vertex of distance exactly~$3$ with~$c$.
Since $\Delta > 3$, $z$ is part of a chronological cycle~$C^{v}_{\none}$ for some none-edge~$\none$ of~$G$ and some vertex~$v\in \none$.
In particular, $z \in \{s^{v,2}_{\none},s^{v,\Delta - 2}_{\none}\}$.
Let~$\ell' := \lambda(\{c,v\})$.
Recall that~$\ell \in L$ and that~$\{s^{v,2}_{\none},s^{v,\Delta - 2}_{\none}\}$ is a parallel edge with label~$\hd$.
Consider the path~$P_{c,z} := (c,v,s^{v,1}_{\none},s^{v,2}_{\none},s^{v,\Delta - 2}_{\none})$.
This path uses edges with labels~$\ell'$, $1$, $2$, and~$\hd$.
Hence, $P_{c,z}$ has a duration of~$\Delta + \hd - \ell' + 1 \leq \Delta + \hd - \ell_1 + 1 = \Delta + 2$.
Since~$P_{c,z}$ contains vertex~$z$, there is a path of duration at most~$\Delta + 2 \leq 3 \cdot \hd \leq \hd \cdot D_{c,z}\leq \alpha \cdot D_{c,z}$.
Now, consider the path~$P_{z,c} := (s^{v,2}_{\none},s^{v,\Delta - 2}_{\none},s^{v,\Delta - 1}_{\none},v,c)$.
This path uses edges with labels~$\hd$, $\Delta-1$, $\Delta$, and~$\ell'$.
Hence, $P_{z,c}$ has a duration of~$\Delta + \ell' - \hd + 1 \leq \Delta + \ell_3 - \hd + 1 = \Delta + 2$.
Since~$P_{z,c}$ contains vertex~$z$, there is a path of duration at most~$\Delta + 2 \leq 3 \cdot \hd \leq \hd \cdot D_{z,c}\leq \alpha \cdot D_{z,c}$.

Let~$z$ be a vertex of distance at least~$4$ with~$c$.\footnote{\odd{Note that for odd~$\Delta$ this includes all vertices of~$\widehat{X}$, since~$\Delta \geq 5$ in this case.}}
Moreover, for~$i\in \{1,3\}$ let~$v_i$ denote some vertex of~$V$ with~$\chi(v_i) = i$, that is, a vertex with~$\lambda(\{v,c\}) = \ell_i$.
By assumption, such vertices exist.
Moreover, recall that due to~\Cref{paths between all}, there is a path~$P_3$ from~$v_3$ to~$z$ of (a)~duration at most~$3 \cdot \hd$ that starts by traversing an edge with label~$\Delta$ or (b)~duration at most~$\Delta \leq 3 \cdot \hd - 1$ that starts by traversing an edge with label~$1$.
Similarly, due to~\Cref{paths between all}, there is a path~$P_1$ from~$z$ to~$v_1$ of duration at most~$3 \cdot \hd$ that ends by traversing an edge with label~$\Delta$.
Consider the path~$P_{c,z}$ starting at~$c$, moving to~$v_3$, and afterwards following the path~$P_3$.
Since~$P_3$ has a duration of at most~$3 \cdot \hd$, $P_{c,z}$ reaches~$z$ at time at most~$\Delta + 3 \cdot \hd- 1$.
Hence, the duration of~$P_{c,z}$ is at most~$\Delta + 3 \cdot \hd - 1 - \lambda(\{c,v_3\}) + 1 = \Delta + 3 \cdot \hd - \ell_3  \leq 2\cdot \Delta$.
Consequently, $P_{c,z}$ has a duration of at most~$2\cdot \Delta = 4 \cdot \hd = \hd \cdot D_{c,z} \leq \alpha \cdot D_{c,z}$.
For the opposite direction, consider path~$P_{z,c}$ that starts at~$z$, follows~$P_1$ to vertex~$v_1$, and then moves to~$c$.
Since~$P_1$ has a duration of at most~$3 \cdot \hd$, $P_{z,c}$ starts at the latest at time~$\Delta-\hd +1 \geq \ell_3$ and reaches~$c$ at time~$\ell_1$.
Hence, the duration of~$P_{z,c}$ is at most~$2 \cdot \Delta + \ell_1 - \ell_3 + 1 =  2 \cdot \hd - 1 \leq 2\cdot \Delta - \frac{1}{2}$.
Consequently, $P_{z,c}$ has a duration of at most~$2\cdot \Delta = 4 \cdot \hd = \hd \cdot D_{z,c} \leq \alpha \cdot D_{z,c}$. 
 
Hence, the statement holds for~$c$.

Secondly, we consider the duration of paths from and to vertices of~$V^*$.
Let~$v^*_i$ be a vertex of~$V^*$ and let~$\ell^* := \lambda(\{v^*_i, c\})$.
Recall that~$c$ is the only neighbor of~$v^*_i$ and that~$\ell^* \notin L$.

Let~$z$ be a vertex of distance exactly~$2$ with~$c$.
By construction, $z$ is a vertex of~$V$ and edge~$\{v^*_i,c\}$ is the only edge incident with~$c$ of label~$\lambda(\{v^*_i,c\})$.
This implies that~$c$ is a nice neighbor of~$v^*_i$ and~$z$.

Let~$z$ be a vertex of distance exactly~$3$ with~$v^*_i$.
By construction, $z \in \{s^{v,1}_{\none},s^{v,\Delta - 1}_{\none}\}$ for some none-edge~$\none$ of~$G$ and some vertex~$v\in \none$.
Let~$\ell_v := \lambda(\{c,v\})$ and let~$\ell_z:= \lambda(\{v,z\})$.
Recall that~$\ell_v\in L$ and that~$\ell_z \in \{1,\Delta\}$.
Consider the paths~$P_{v^*_i,z} := (v^*_i,c,v,z)$ and~$P_{z,v^*_i} := (z,v,c,v^*_i)$.
These paths uses edges with labels~$\ell^*$, $\ell_v$, and $\ell_z$ from left to right or right to left.
If~$\ell^* < \ell_1$, then the duration of~$P_{v^*_i,z}$ is maximized when~$\ell^* = \ell_z = 1$, in which case the duration of~$P_{v^*_i,z}$ is~$\Delta+1\leq 3\cdot \hd$.
Moreover, the duration of~$P_{z,v^*_i}$ is maximized when~$\ell^*=\ell_1 - 1$ and~$\ell_z = \Delta$, in which case the duration of~$P_{z,v^*_i}$ is~$2\cdot \Delta+\ell^* - \ell_z + 1 = \Delta + \ell_1 < 3\cdot \hd \leq \hd \cdot D_{z,v^*_i}\leq \alpha \cdot D_{z,v^*_i}$. 
Otherwise, that is, if~$\ell^* > \ell_3$, then the duration of~$P_{v^*_i,z}$ is maximized when~$\ell^* = \ell_3 + 1$ and~$\ell_z = 1$, in which case the duration of~$P_{v^*_i,z}$ is~$2\cdot \Delta+\ell_z - \ell^* + 1 = \Delta - \ell_3 + 1 \leq \Delta - (\ell_1+2) + 1 \leq \Delta - \hd = 3\cdot \hd$.
Moreover, the duration of~$P_{z,v^*_i}$ is maximized when~$\ell^*=\ell_z = \Delta$, in which case the duration of~$P_{z,v^*_i}$ is~$\Delta+1 < 3\cdot \hd$.
In both cases, we presented paths between~$v^*_i$ and~$z$ of duration at most~$3\cdot \hd \leq \hd \cdot D_{z,v^*_i}\leq \alpha \cdot D_{z,v^*_i}$.

Let~$z$ be a vertex of distance exactly~$4$ with~$c^*_i$.
Hence, $z$ has distance~$3$ to~$c$.
Consider the two paths~$P_{c,z}$ and~$P_{z,c}$ described above for this case: there is a vertex~$v\in V$, such that (i)~$P_{c,z}$ contained the vertex~$z$, started by using the edge~$\{c,v\}$, and ended in time step~$\Delta+\ell_2$, and (ii)~$P_{z,c}$ contained the vertex~$z$, started at time step~$\ell_2$, and ended by traversing edge~$\{c,v\}$ at time step~$\Delta+\lambda(\{c,v\})$.
Let~$\ell_v:= \lambda(\{c,v\})$.
Now consider the path~$P_{v^*_i,z}$ obtained from first moving from~$v^*_i$ to~$c$ and then following the path~$P_{c,z}$ and the path~$P_{z,v^*_i}$ obtained from first following the path~$P_{z,c}$ to vertex~$c$ and then traversing the edge towards~$v^*_i$.
If~$\ell^* < \ell_1$, then~$\ell^* < \ell_v$, which implies that~$P_{v^*_i,z}$ reaches vertex~$z$ at times step at most~$\Delta+\ell_2$.
Hence, $P_{v^*_i,z}$ has a duration of at most~$\Delta+\ell_2 - \ell^* + 1 \leq 2 \cdot \Delta$.
Moreover, the path~$P_{z,v^*_i}$ reaches vertex~$c$ at time step~$\Delta + \ell_v$ and can thus reach vertex~$v^*_i$ at time step~$2\cdot \Delta + \ell^*$.
Hence, $P_{z,v^*_i}$ has a duration of at most~$2\cdot \Delta + \ell^* - \ell_2 + 1 \leq 2\cdot \Delta + \ell_1 - 1 - \ell_2 + 1 < 2 \cdot \Delta$.
Otherwise, that is, if~$\ell^* > \ell_3$, then~$\ell^* > \ell_v$, which implies that~$P_{v^*_i,z}$ reaches vertex~$z$ at times step at most~$2\cdot \Delta+\ell_2$.
Hence, $P_{v^*_i,z}$ has a duration of at most~$2\cdot \Delta+\ell_2 - (\ell_3+1) + 1 < 2 \cdot \Delta$.
Moreover, the path~$P_{z,v^*_i}$ reaches vertex~$c$ at time step~$\Delta + \ell_v$ and can thus reach vertex~$v^*_i$ at time step~$\Delta + \ell^*$.
Hence, $P_{z,v^*_i}$ has a duration of at most~$\Delta + \ell^* - \ell_2 + 1 \leq 2\cdot \Delta$.
In both cases, we presented paths between~$v^*_i$ and~$z$ each having a duration of at most~$2 \cdot \Delta = 4 \cdot \hd = \hd \cdot D_{v^*_i,z}$.

Let~$z$ be a vertex of distance at least~$5$ with~$v^*_i$.
Hence, there is a sunglasses gadget~$S_{\{v,w\}}$ that contains~$z$.\footnote{\odd{Note that for odd~$\Delta$ this includes all vertices of~$\widehat{X}$, since~$\Delta \geq 5$ in this case.}}
Without loss of generality assume that~$z$ is contained in the chronological~$(v,w)$-path~$P^v$ of~$S_{\{v,w\}}$.
Recall that the labeling of this path~$P^v$ uses the labels of~$[1,\Delta]$ consecutively.
Hence, there is path~$P_{v,z}$ of duration at most~$\Delta-1$ from~$v$ to~$z$ that starts with label~$1$, and there is a path~$P_{z,w}$ of duration at most~$\Delta-1$ from~$z$ that reaches~$w$ in time step~$\Delta$. 
Consider the path~$P_{v^*_i,z}$ that starts at~$v^*_i$, moves to~$c$, moves to~$v$, and then follows the path~$P_{v,z}$.
The duration of this path is maximized, if~$\ell^* = \ell_3+1$, in which case vertex~$z$ is reached at time step~$2\cdot \Delta + \Delta - 1$.
Hence, the duration of~$P_{v^*_i,z}$ is at most~$2\cdot \Delta + \Delta - 1 -\ell^* + 1 = 3\cdot \Delta -(\ell_1+3) + 1 \leq 5 \cdot \hd$.
Finally, consider the path~$P_{z,v^*_i}$ that starts at~$z$, follows~$P_{z,w}$ to vertex~$w$, moves to~$c$, moves to~$v^*_i$.
The duration of this path is maximized, if~$\ell^* = \ell_1-1$, in which case vertex~$z$ is reached at time step~$2\cdot \Delta + \ell^* - 1$.
Hence, the duration of~$P_{z,v^*_i}$ is at most~$2\cdot \Delta + \ell^* - 1 + 1 = 2\cdot \Delta + \ell_1-1  \leq 5 \cdot \hd$.
Both paths thus have a duration of at most~$5 \cdot \hd \leq \hd \cdot D_{v^*_i,z}$.
\end{proof}

Hence, in all cases for~$\Delta$, for each vertex~$q\in V' \setminus (V \cup Y \cup X \cup\widehat{X})$ and each vertex~$z\in V'$, there are paths between~$q$ and~$z$ of durations at most~$\hd\cdot D_{q,z}$ each.

It thus remains to consider the durations of fastest paths between vertex pairs~$(a,b)$ from~$Y \cup V$ of distance exactly two.
First, we will analyze the vertex pairs containing exactly one vertex of~$V$ and exactly one vertex of~$Y$. 
Let~$v\in V$ and let~$y \in Y$, such that~$v$ and~$y$ have distance exactly~$2$, then by construction, $y$ is a vertex of a chronological cycle attached to~$v$ for some sunglasses gadget.
Hence, by taking the fastest paths along the chronological cycle, both vertices can pairwise reach each other with a path of duration less than~$\Delta = \hd \cdot D_{u,v} \leq \alpha \cdot D_{u,v}$.
Now, consider the vertex pair of~$Y$ of distance exactly~$2$.
\begin{lemma}
Let~$y$ and~$y'$ be vertices of~$Y$ of distance exactly~$2$.
Then, there is a path of duration at most~$\Delta$ from~$y$ to~$y'$.
\end{lemma}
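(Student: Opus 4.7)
The plan is to proceed by case analysis on whether $y$ and $y'$ belong to the same sunglasses gadget or to different ones, and in the latter case further on the labels of the two edges incident with the shared docking point.

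First, I would observe that every vertex of~$Y$ is adjacent only to vertices of its own gadget, so any common neighbor of~$y$ and~$y'$ must lie in the intersection of the gadgets containing them. Hence either $y$ and~$y'$ lie in a common gadget~$S$, or they lie in distinct gadgets~$S_1$ and~$S_2$ that share a single docking point~$v \in V$, and then $v$ is their unique common neighbor.

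For the \emph{same-gadget} case, I would exhaust the possibilities for the witnessing common neighbor~$z$, which is either a docking point, another $Y$-vertex, or a central vertex of~$S$. Since the sunglasses labeling places the labels $1, 2, \ldots, \Delta$ in order along each chronological $(u,v)$-path and uses only small labels (at most~$\hd$) for the zigzag, parallel, and special edges, in each configuration the two edges of the path $y \to z \to y'$ can be chosen so that their labels form a strictly increasing pair with gap at most $\Delta-1$, which yields duration at most~$\Delta$.

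For the \emph{distinct-gadget} case, both $\{v,y\}$ and $\{v,y'\}$ are first or last edges of chronological paths of their respective gadgets, so their labels lie in~$\{1,\Delta\}$. If the two labels differ, the direct path $y \to v \to y'$ already has duration at most~$\Delta$. If both labels equal~$1$, then $y$ is the first vertex after~$v$ on the chronological $v$-cycle~$C^v_{S_1}$, whose edges carry strictly increasing labels $2, 3, \ldots, \Delta$ when traversed from~$y$ around back to~$v$; this brings us to~$v$ at time~$\Delta$, after which the label-$1$ edge to~$y'$ is available at time~$\Delta + 1$, giving a total duration of exactly~$\Delta$.

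The hardest part will be the remaining sub-case in which both edges $\{v,y\}$ and $\{v,y'\}$ carry label~$\Delta$, where any two-hop detour through~$v$ incurs a full period of waiting and thus duration~$\Delta + 1$. To handle it, I would bypass~$v$ through the central-vertex bridge: start at~$y$ via its zigzag edge of label~$1$, traverse the zigzag path in~$S_1$ using strictly increasing small labels up to a central vertex~$x_1 \in X \cap S_1$, then cross into the central vertices of~$S_2$ either via the label-$\hd$ edge of the $X$-clique (for even~$\Delta$) or via the two-step hop $x_1 \to \widehat{x_1} \to x_2$ using labels~$\ell_1$ and~$\ell_2$ (for odd~$\Delta$), and finally descend along the chronological $(w_2,v)$-path of~$S_2$ through the remaining labels until reaching~$y'$. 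A direct time accounting then shows that~$y'$ is reached at time at most~$\Delta - 1$, yielding total duration at most $\Delta - 1 \leq \Delta$, as required.
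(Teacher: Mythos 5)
Your decomposition is the same as the paper's: first split on whether $y$ and $y'$ lie in a common gadget or in two gadgets sharing a docking point $v\in V$, and in the latter case split on the labels of $\{v,y\}$ and $\{v,y'\}$, which indeed both lie in $\{1,\Delta\}$. Your treatment of the distinct-labels and both-labels-$1$ sub-cases coincides with the paper's. Two remarks on where you diverge. First, in the same-gadget case you only sketch a plan, and the criterion you aim for (a common neighbor whose two edges form a \emph{strictly increasing} pair) is both stronger than needed and not obviously attainable in every configuration: a common neighbor with merely \emph{distinct} labels already gives duration at most $\Delta$ in both directions (a decreasing pair $\ell>\ell'$ yields duration $\Delta+\ell'-\ell+1\le\Delta$), which is exactly the paper's ``nice neighbor'' observation; the paper then verifies this through three concrete sub-cases (same chronological path, same chronological cycle — handled by following the cycle rather than by a two-edge hop — and, for even $\Delta$, the remaining case via a central vertex whose two incident edges carry labels from $\{\ell_2,\ell_3\}$ and $\ell_1$). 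That verification is the real content of this case and is missing from your write-up. Second, for the hardest sub-case (both edges at $v$ labelled $\Delta$) you take a genuinely different route: you bypass $v$ through a zigzag path, the central vertices, and the cross-gadget bridge ($X$-clique for even $\Delta$, the $\widehat{X}$-detour for odd $\Delta$). This does work and even yields duration $\Delta-1$, but it forces you to track the zigzag and bridge labels separately for each parity of $\Delta$. You overlooked that a \emph{longer} path through $v$ is available even though the two-hop one is not: traverse $\{y,v\}$ at time $\Delta$ and then follow the chronological cycle of the second gadget from $v$ around to $y'$ using labels $1,\dots,\Delta-1$ in the next period, arriving at time $2\Delta-1$ for duration exactly $\Delta$. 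That is the paper's argument, and it is parity-independent and needs no gadget-internal bookkeeping; your route buys a marginally better bound at the cost of considerably more case analysis.
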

\begin{proof}
Recall that the statement holds, if there is a nice neighbor of~$y$ and~$y'$.
By construction, the only vertices of~$Y$ of distance exactly~$2$ are (i)~contained in the same sunglasses gadget or (ii)~are both neighbors of some vertex~$v\in V$ and contained in distinct chronological cycles attached to~$v$.

Firstly, consider the case where~$y$ and~$y'$ are contained in the same sunglasses gadget.
If~$y$ and~$y'$ are part of the same chronological path of that sunglasses, then they only have one common neighbor which is a nice neighbor due to the sunglasses labeling.
If~$y$ and~$y'$ are part of the same chronological cycle of that sunglasses gadget, then they can pairwise reach each other in duration at most~$\Delta$ each by following the chronological cycle.
\even{For even~$\Delta$, we also need to consider the case that~$y$ and~$y'$ are neither part of the same chronological path nor part of the same chronological cycle. For odd~$\Delta$, this case cannot occur, since vertices of~$Y$ of two disjoint chronological cycles of the same sunglasses gadget are separated by two vertices of~$X$.}
If~$y$ and~$y'$ are neither part of the same chronological path nor part of the same chronological cycle, then by definition of sunglasses gadgets, there is some central vertex~$x\in X$ that is adjacent to both~$y$ and~$y'$, such that~$x$ is part of the chronological path that contains~$y$ but~$x$ is not part of the chronological path that contains~$y'$.
By definition of the sunglasses labeling, the edge~$\{x,y\}$ receives a label from~$\{\ell_2,\ell_3\}$ under~$\lambda$ and the edge~$\{x,y'\}$ is part of a zigzag path and receives the label~$\ell_1$ under~$\lambda$.
Hence, $x$ is a nice neighbor of~$y$ and~$y'$.

Secondly, consider the case where~$y$ and~$y'$ are not contained in the same sunglasses gadget.
Let~$v$ be the unique common neighbor of~$y$ and~$y'$ in~$V$ and let~$C$ ($C'$) be the chronological cycle attached to~$v$ that contains~$y$ ($y'$).
By definition of the sunglasses labeling, $\{\lambda(\{y,v\}),\lambda(\{y',v\})\} \subseteq \{1,\Delta\}$.
If~$v$ is a nice neighbor of~$y$ and~$y'$, the statement is shown.
So, consider~$\lambda(\{y,v\}) = \lambda(\{y',v\})$.
Due to symmetry, we only describe a fast path from~$y$ to~$y'$.
We distinguish two cases.
If~$\lambda(\{y,v\}) = \lambda(\{y',v\}) = 1$, consider the path~$P$ that starts at~$y$, goes through the whole chronological cycle~$C$ to~$v$ with label~$\Delta$, and finally traverses the edge towards~$y'$ with label~$1$.
This path has a duration of at most~$\Delta$, since the edge of smallest label of~$C$ that~$P$ uses is at least~$2$, since~$\{y,v\}$ is the only edge of~$C$ with label~$1$.
Otherwise, that is, if~$\lambda(\{y,v\}) = \lambda(\{y',v\}) = 1$, we do essentially the same.
Consider the path~$P$ that starts at~$y$, goes over the direct edge to~$v$ with label~$\Delta$, and goes through the whole chronological cycle~$C'$ until reaching~$y'$ with a label of at most~$\Delta-1$.
Hence, $P$ has duration at most~$\Delta$.
\end{proof}

It remains to show that between any two vertices~$u$ and~$v$ of~$V$, there is a path of duration at most~$\Delta = \hd \cdot D_{u,v} \leq \alpha \cdot D_{u,v}$.
If~$\{u,v\}$ is a non-edge of~$G$, then there is a sunglasses gadget with docking points~$u$ and~$v$ in~$G'$.
By following the chronological~$(u,v)$-path ($(v,u)$-path) of this sunglasses gadget, there is a temporal path from~$u$ ($v$) to~$v$ ($u$) of duration~$\Delta$.
Otherwise, that is, if~$\{u,v\}$ is an edge of~$E$, then~$(u,c,v)$ and~$(v,c,u)$ are both paths of duration at most~$\Delta$ each, since~$c$ is a nice neighbor of~$u$ and~$v$.
This is due to the fact that~$\chi$ is a proper coloring of~$G$, which implies that~$\chi(u) \neq \chi(v)$ and thus~$\lambda(\{u,c\}) = \ell_\chi(u) \neq \ell_\chi(v) = \lambda(\{v,c\})$.
Concluding, for each pair~$(a,b)$ of vertices of~$V'$, there is a path of duration at most~$\hd \cdot D_{a,b}$, which implies that~$\lambda$ has a stretch of at most~$\hd \leq \alpha$.

With all of this, we conclude that the stretch of~$\lambda$ is at most~$\hd$.
This implies that~$G$ is~$3$-colorable only if~$I$ is a yes-instance of~\probnameshort.
Together with~\Cref{correctness if part}, this now completes the proof of~\Cref{lem delta 3 stretch}.

\begin{proof}[Proof of~\Cref{lem delta 3 stretch}]
Based on the above construction and arguments, we conclude \Cref{lem delta 3 stretch}:
If~$G$ admits a proper~$3$-coloring~$\chi$, then we showed above that the labeling~$\lambda$ for~$G'$ provides a stretch of at most~$\hd \leq \alpha$.
If there is labeling~$\lambda$ for~$G'$ of stretch at most~$\alpha$, then~\Cref{correctness if part} implies that~$G$ is~$3$-colorable.
Consequently, the reduction is correct and proves the stated hardness results for~\STGR.  
\end{proof}

Recall that~\Cref{thm:NPhalphaone} shows that \probnameshort is NP-hard for each~$\alpha\in [1,1.5)$.
Moreover, since for each constant~$\alpha\geq 1.5$, there is a constant~$\Delta\geq 3$, such that~$\alpha\in [\hd,\hd[+1])$, \Cref{thm:NPhalphaone,tight stretch hardness,lem delta 3 stretch} imply the following.

\begin{theorem}
For each constant~$\Delta\geq 3$, \probnameshort is NP-hard.
For each constant~$\alpha\geq 1$, \probnameshort is NP-hard.
\end{theorem}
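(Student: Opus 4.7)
The plan is that this final theorem is a straightforward corollary that bundles the three preceding hardness lemmas (\Cref{thm:NPhalphaone,tight stretch hardness,lem delta 3 stretch}) into two clean case distinctions, one parameterized by $\Delta$ and one parameterized by $\alpha$. No new reduction is needed; the task is purely to verify that every constant value of $\Delta\geq 3$ (respectively every constant $\alpha\geq 1$) is covered by the interval hit by at least one of these lemmas.

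For the first statement, I would split into the cases $\Delta = 3$ and $\Delta \geq 4$. For $\Delta = 3$, \Cref{thm:NPhalphaone} already gives NP-hardness with (for instance) the constant choice $\alpha = 1 \in [1,1.5)$. For $\Delta \geq 4$, I would instantiate \Cref{lem delta 3 stretch} with the constant value $\alpha := \hd$, noting that~$\hd \in [\hd,\frac{\Delta+1}{2})$ holds trivially. Since $\Delta$ is a constant, $\alpha$ is a constant and the resulting graph has diameter $\Oh(\Delta) = \Oh(1)$, so the reduction is polynomial.

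For the second statement, given a constant $\alpha \geq 1$, I would pick the ``right'' $\Delta$ depending on where $\alpha$ falls. If $\alpha \in [1,1.5)$ invoke \Cref{thm:NPhalphaone} with $\Delta = 3$; if $\alpha \in [1.5,2)$ invoke \Cref{tight stretch hardness} with $\Delta = 3$; otherwise, for $\alpha \geq 2$, choose $\Delta := \lfloor 2\alpha \rfloor$, which is a constant at least $4$, and observe that by definition of the floor function one has $\hd \leq \alpha < \hd + \tfrac12 = \tfrac{\Delta+1}{2}$, so \Cref{lem delta 3 stretch} applies directly.

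There is essentially no technical obstacle here; the only point requiring a moment of care is checking that for every $\alpha \geq 2$ the choice $\Delta = \lfloor 2\alpha \rfloor$ lies in $\{\Delta \geq 4\}$ and that $\alpha \in [\hd,\tfrac{\Delta+1}{2})$, which follows from $\lfloor 2\alpha \rfloor \leq 2\alpha < \lfloor 2\alpha \rfloor + 1$. The rest is just a short write-up of the case analysis, citing the three preceding results and observing that in each case $\Delta$ (or $\alpha$) remains a constant so that the reductions remain polynomial and preserve hardness for the claimed fixed parameter.
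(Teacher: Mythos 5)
Your proposal is correct and matches the paper's argument, which likewise derives the theorem by combining \Cref{thm:NPhalphaone} (for $\alpha\in[1,1.5)$, $\Delta=3$), \Cref{tight stretch hardness} (for $\alpha\in[1.5,2)$, $\Delta=3$), and \Cref{lem delta 3 stretch} (choosing for each $\alpha\geq 1.5$ the constant $\Delta$ with $\alpha\in[\frac{\Delta}{2},\frac{\Delta+1}{2})$, i.e.\ $\Delta=\lfloor 2\alpha\rfloor$). Your write-up is in fact slightly more explicit than the paper's one-sentence justification, and the floor-function verification is exactly the right check.
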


\section{Fixed-parameter Algorithms through Monadic Second-Order Logic}\label{sec:mso}

In this section, we show that \probnameshort is fixed-parameter tractable with respect to combinations of $\Delta$, the treewidth $\tw(G)$, the diameter $\diam(G)$, and the neighborhood diversity $\nd(G)$ of the input graph $G$.
Formally, we show the following two results.

\begin{theorem}
\label{thm:mso2}
\probnameshort is in FPT when parameterized by $\nd(G)+\Delta$.
\end{theorem}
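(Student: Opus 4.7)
The plan is to formulate \probnameshort as a sentence in monadic second-order logic with cardinality constraints in the sense of~\cite{KnopKMT19} whose size depends only on~$\Delta$ and~$k:=\nd(G)$, and to invoke the corresponding FPT meta-theorem for bounded neighborhood diversity. The first step is to observe that $\diam(G)\le k$: two vertices in different type classes realise a path of length at most~$k-1$ in the quotient graph on the~$k$ type nodes, while two vertices sharing an independent type class lie at distance at most~$2$ via any shared neighbour. Consequently, every fastest temporal path we need to certify has duration at most $\alpha\cdot\diam(G)\le\Delta\cdot k$, so the corresponding temporal walk uses at most $\Delta\cdot k$ edges.

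Next, I would encode a candidate labeling compactly. Exploiting type symmetry, I would argue via an exchange argument that one may restrict attention to labelings that are normalised within every pair of type classes $(T_i,T_j)$; such a labeling is described, up to permutation of vertices within each type, by bounded-size cardinality data, namely the number of vertices of~$T_i$ realising each possible ``label-profile'' towards~$T_j$ (a vector in $\{0,\dots,|T_j|\}^\Delta$ summing to $|T_j|$, of which only $\binom{\Delta+|T_j|-1}{\Delta-1}$ are distinct, but whose \emph{abstract profile type} takes only $g(\Delta)$ values modulo how many vertices realise it). Cardinality constraints in the framework of~\cite{KnopKMT19} are designed to existentially guess precisely this sort of bounded-dimensional integer information.

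For the stretch condition itself, for each ordered pair of type classes $(T_i,T_j)$ I would existentially quantify a sequence of at most $\Delta\cdot k$ triples ``(current type, next type, label of the traversed edge)'' that describes a temporal walk between representatives of~$T_i$ and~$T_j$, and verify that its induced duration is at most $\alpha\cdot\dist_G(T_i,T_j)$. The value $\dist_G(T_i,T_j)\le k$ is a constant in the quotient graph and can be hard-coded into the formula per pair $(i,j)$. Since both the length of the walk and the number of choices per step are bounded by a function of $\Delta$ and~$k$, the resulting sentence has size $f(\Delta,k)$; the meta-theorem of~\cite{KnopKMT19} then yields the claimed FPT algorithm. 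Optimisation over $\alpha$ is obtained by combining this decision procedure with \cref{lem:decisionvsoptimization}.

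The main expected obstacle is the normalisation step: one must show that every optimal labeling admits a type-symmetric representative so that the cardinality data alone suffice to validate all stretch constraints. I would establish this by an exchange argument observing that any permutation of vertices within a single type class extends to an automorphism of the underlying graph~$G$; applying such a permutation to both the labeling and the temporal paths preserves the multiset of fastest-path durations between every pair of type classes, so the collection of ``realisable'' stretches depends only on the label-profile cardinalities and not on the concrete bijection that realises them.
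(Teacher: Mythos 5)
Your high-level strategy (invoke the meta-theorem of Knop et al.\ for MSO with cardinality constraints on graphs of bounded neighborhood diversity) is the same as the paper's, but the route you take to it is quite different and, as written, has genuine gaps. The paper never quotients by types or normalizes the labeling: it writes one formula of size $\Oh(\Delta^2)$ directly over $G$, existentially quantifying the labeling as a partition $E_1,\dots,E_\Delta$ of the edge set, universally quantifying over \emph{all} vertex pairs $v,w$, guessing for each pair an edge set $E^\star$ that forms a $v$-$w$ path together with a partition $X_1,\dots,X_\Delta$ of its internal vertices recording the waiting times, and expressing the stretch condition as the single linear cardinality constraint $(\sum_i i\cdot|X_i|)+1\le\alpha\cdot|E^{\star\star}|$ against a guessed shortest path $E^{\star\star}$. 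All of the type-symmetry bookkeeping you attempt by hand is absorbed by the meta-theorem.

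The concrete gaps are these. First, the normalization step is not established: the observation that within-class permutations are automorphisms only shows the optimal stretch is \emph{invariant} under them, not that feasibility is \emph{determined} by label-profile cardinalities, and two labelings with identical profile cardinalities need not be related by such a permutation. The claim that the ``abstract profile type'' takes only $g(\Delta)$ values is precisely what needs proof; moreover, realizing a type-and-label walk signature from per-vertex profile data is not automatic, since a vertex $u$ having some label-$\ell$ edge into $T_j$ and a vertex $v\in T_j$ having some label-$\ell$ edge into $T_i$ does not imply that the edge $\{u,v\}$ itself carries label $\ell$; you would need a rerouting argument that you do not supply. Second, verifying one temporal walk per ordered pair of types $(T_i,T_j)$ is unsound: the stretch condition is universal over all vertex pairs, and distinct vertices of the same type generally carry different profiles and hence different fastest-path durations, so a single representative pair certifies nothing about the others. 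Repairing both points essentially amounts to re-deriving by hand the portion of the Knop et al.\ machinery that the paper simply cites; the cleaner fix is to keep the quantification over actual vertices and edges of $G$, as the paper's formula does.
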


\begin{theorem}
\label{thm:mso1}
\probnameshort is in FPT when parameterized by $\tw(G)+\diam(G)+\Delta$.
\end{theorem}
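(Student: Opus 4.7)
The plan is to express \STGR as an MSO${}_2$ sentence whose length depends only on $\Delta$ and $\diam(G)$, and then invoke Courcelle's theorem on graphs of bounded treewidth. Since the trivial labeling yields stretch at most $\Delta$ (see~\Cref{worst case durations}), we may assume $\alpha\leq\Delta$, so every numerical quantity that will appear in the formula is bounded by a function of $\Delta$ and $\diam(G)$.

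First, I would encode a $\Delta$-periodic labeling $\lambda$ as an ordered partition $(E_1,\ldots,E_\Delta)$ of~$E$ quantified at the outermost level; a short sub-formula enforces the partition property. Next, for each $d\in[1,\diam(G)]$, I would build a formula $\dist_d(x,y)$ of size $\Oh(d)$ expressing that the static distance between $x$ and $y$ is exactly~$d$, by combining an existential length-$d$ reachability witness with the negation of the length-$(d-1)$ witness.

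The key observation is that, because the time-labels along a temporal path strictly increase along the timeline, any temporal path of duration at most $D$ uses at most $D$ edges; hence every witness that can certify the stretch bound at a distance-$d$ vertex pair has length at most $\lfloor\alpha d\rfloor\leq\Delta\cdot\diam(G)$. For a fixed length $\ell$ and a fixed label-sequence $(l_1,\ldots,l_\ell)\in[\Delta]^\ell$, the duration of any temporal path realising this sequence is a constant $D(l_1,\ldots,l_\ell)$ that can be precomputed from the waiting-time formula in~\Cref{sec:prelims}. I would therefore express ``there exists a temporal $(u,v)$-path of duration at most $D$'' as the finite disjunction, over all $(\ell,(l_1,\ldots,l_\ell))$ with $D(l_1,\ldots,l_\ell)\leq D$, of the MSO${}_2$ formula $\exists v_0,\ldots,v_\ell\,(v_0=u\wedge v_\ell=v\wedge\bigwedge_{i=1}^\ell\{v_{i-1},v_i\}\in E_{l_i})$; call the resulting formula $\phi_D(u,v)$. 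Combining everything, the whole sentence takes the form
\[
\varphi\ \equiv\ \exists E_1,\ldots,E_\Delta\ \Big(\partition(E_1,\ldots,E_\Delta)\wedge \forall u\,\forall v\bigwedge_{d=1}^{\diam(G)}\bigl(\dist_d(u,v)\to \phi_{\lfloor\alpha d\rfloor}(u,v)\bigr)\Big).
\]
The number of enumerated label-sequences is at most $\sum_{\ell=1}^{\Delta\cdot\diam(G)}\Delta^\ell$, so $|\varphi|$ depends only on $\Delta$ and $\diam(G)$. Courcelle's theorem for MSO${}_2$ on graphs of treewidth at most $\tw(G)$ then gives a running time of $f(\tw(G),\Delta,\diam(G))\cdot n$, which is the claimed FPT bound.

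The main obstacle is that MSO cannot sum the waiting times along a path natively, so the duration constraint cannot be stated directly as a single MSO formula. The workaround above is to pull the arithmetic out of the logic by enumerating every concrete label-sequence of bounded length and precomputing its exact duration; this blows up $|\varphi|$, but only by a factor depending on $\Delta$ and $\diam(G)$. A minor secondary point is that $\alpha$ may be rational, which is absorbed by replacing each threshold $\alpha\cdot d$ by the integer $\lfloor\alpha\cdot d\rfloor$; since all temporal durations are integers, this replacement is equivalent.
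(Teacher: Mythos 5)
Your proof is correct, and its skeleton matches the paper's: guess the labeling as an ordered edge partition $E_1,\ldots,E_\Delta$, observe that (since we may assume $\alpha\leq\Delta$ by \cref{worst case durations}) every relevant witness path has length at most $\Delta\cdot\diam(G)$, pull the duration arithmetic out of the logic by a finite disjunction whose size depends only on $\Delta$ and $\diam(G)$, and invoke Courcelle's theorem. The key technical step is handled genuinely differently, however. The paper keeps the witness path as an edge set $E^\star$, guesses a partition $X_1,\ldots,X_\Delta$ of its internal vertices according to their waiting times, verifies that partition locally against the edge labels, and then checks the total duration in the counting extension CMSO: it enumerates the $\Oh(d^\Delta)$ composition vectors $(\ell_1,\ldots,\ell_\Delta)$ with $\sum_i i\cdot\ell_i+1\leq \alpha d$ and tests $|X_i|=\ell_i$ via $\card$ predicates, yielding a formula of size $\Oh((\Delta\cdot\diam(G))^{\Delta+2})$. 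You instead enumerate all ordered label sequences of bounded length and hard-code the path through first-order vertex variables, which precomputes each duration outside the logic and keeps the sentence in plain MSO over the incidence structure with no cardinality predicates at all; the price is a formula of size roughly $\sum_{\ell\leq\Delta\diam(G)}\Delta^{\ell}$ instead of $(\Delta\diam(G))^{\Oh(\Delta)}$, which is still a function of $\Delta+\diam(G)$ only, so the FPT claim goes through. Two points deserve an explicit sentence in a polished write-up: your $\phi_D$ does not force the vertices $v_0,\ldots,v_\ell$ to be distinct, so it certifies a temporal \emph{walk}; this is harmless because shortcutting a temporal walk at a repeated vertex yields a temporal path of no larger duration, but it should be said. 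And the equivalence ``duration $\leq\alpha d$ iff duration $\leq\lfloor\alpha d\rfloor$'' (durations being integers) is exactly the right way to absorb rational $\alpha$, as you note.
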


Note that \cref{thm:mso1} implies that for all graph classes that have locally bounded treewidth (such as planar graphs), we get that \probnameshort is in FPT when parameterized by the diameter of the input graph and $\Delta$. Furthermore, \cref{thm:mso1} implies that \probnameshort is in FPT when parameterized by the treedepth of the input graph and $\Delta$, since both the treewidth and the diameter of a graph can be upper bounded by a function of the treedepth.
We remark that the neighborhood diversity of a graph is unrelated to the combination of the treewidth and the diameter of a graph. Finally, by \cref{lem:decisionvsoptimization}, both theorem statements also hold for the optimization version of \probnameshort.

To show \cref{thm:mso1,thm:mso2} we show that \probnameshort is expressible in monadic second-order (MSO) logic in certain specific ways that allows us to employ Courcelle's famous theorem (and extensions thereof) \cite{courcelle1990monadic,courcelle2012graph}.
Since we define the MSO formulas directly on the input graph $G$, we do not give the formal definitions of treewidth and neighborhood diversity, since they are not necessary to obtain the results. For more information on treewidth, we refer to standard textbooks on graph theory, e.g.\ the one by Diestel~\cite{Die16}, and for more information on the neighborhood diversity, we refer to Lampis~\cite{lampis2012algorithmic}, who introduced this parameter.

Assume we are given a graph $G=(V,E)$, an integer $\Delta$, and a real number $\alpha\ge 1$. Let $n=|V|$.
A \emph{monadic second-order (MSO) formula} $\phi$ over $G$ 
is a formula that uses
\begin{itemize}
\item the incidence relation of vertices and edges,
\item the logical operators $\land$, $\lor$, $\neg$, $=$, and parentheses,
\item a finite set of variables, each of which is either taken as an element or a subset of $V$ or $E$, 
\item and the quantifiers $\forall$ and $\exists$.
\end{itemize}
Additionally we will use some folklore shortcuts such as $\Rightarrow$, $\neq$, $\subseteq$, $\in$, and~$\setminus$, which can themselves be replaced by MSO formulas. For an edge set $E'$ we use $V(E')$ to denote the set of vertices that are incident with the edges in $E'$. 
Furthermore, we use the following formula to express that sets $X_1,\ldots, X_i$ form a partition of $X$.
\begin{align*}
    \partition_i(X_1,X_2,\ldots, X_i,X):=& \ (\bigwedge_{1\le i'\le i} X_{i'}\subseteq X) \wedge (\forall x\in X : \bigvee_{1\le i'\le i} x\in X_{i'})\wedge\\
    & \ (\forall x\in X : \bigwedge_{1\le i'<i''\le i} (x\in X_{i'}\Rightarrow x'\notin X_{i''}))
\end{align*}
Note that the formula $\partition_i(X_1,X_2,\ldots, X_i,X)$ has size $\Oh(i^2)$.

\newcommand{\card}{\operatorname{card}}
We use two different extensions of MSO. The first one is called CMSO and allows for testing the cardinality of a set. We remark that this does not allow for comparing the cardinalities of sets.
\begin{itemize}
    \item $\card_{n,p}(X)=\text{true}$ if and only if $|X|\equiv n \mod p$.
\end{itemize}
This extension of MSO was already considered by Courcelle~\cite{courcelle1990monadic}. We have the following, where~$|\phi|$ denotes the length of the formula $\phi$.
\begin{theorem}[\hspace{-0.001cm}\cite{courcelle1990monadic,courcelle2012graph}]
	\label{thm:courcelle}
	CMSO model checking is in FPT when parameterized by $\tw(G)+|\phi|$.
\end{theorem}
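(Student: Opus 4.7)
The plan is to establish the theorem via the classical connection between CMSO logic and finite tree automata, evaluated bottom-up on a tree decomposition. First I would compute a \emph{nice} tree decomposition $(T,\{B_t\}_{t\in V(T)})$ of $G$ of width $k=\tw(G)$ with linearly many nodes of the standard types (leaf, introduce-vertex, introduce-edge, forget, join), for instance via Bodlaender's algorithm in time $2^{O(k^{3})}\cdot n$. Each node of $T$ is then viewed as applying one of the five operations to build $G$ incrementally in the $k$-boundaried-graph algebra, and the problem reduces to showing that the set of terms whose value satisfies $\phi$ is recognised by a finite bottom-up tree automaton $A_\phi$ whose size depends only on $|\phi|$ and $k$.

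The heart of the argument is a composition theorem in the spirit of Feferman--Vaught and Shelah: the CMSO $q$-type of a $k$-boundaried graph, that is, the set of CMSO formulas of quantifier rank at most $q=q(\phi)$ satisfied by it with its boundary marked, is determined under each of the five operations by the $q$-types of its operands. Because there are only finitely many such $q$-types (a function of $q$ and $k$), the transition table of $A_\phi$ can be computed in advance; the automaton is deterministic and bottom-up, and $G\models\phi$ iff the state at the root of $T$ is accepting. Running $A_\phi$ takes time $f(|\phi|,k)\cdot |V(T)|$, giving the claimed FPT bound.

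To handle the counting atoms $\card_{n,p}(X)$ I would enlarge each state by, for every free set variable $X$ and every modulus $p$ occurring in $\phi$, the residue $|X\cap H|\bmod p$, where $H$ is the subgraph assembled in the current subtree. Only finitely many moduli appear in $\phi$ (each at most $|\phi|$), so the state space stays finite; the transition for the join node simply adds residues modulo $p$, the transition for introduce-vertex updates a residue if the new vertex is placed in $X$, and the root-check for $\card_{n,p}(X)$ compares the stored residue with $n$. A standard product construction integrates these residues with the type-automaton described above.

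The main obstacle is the composition lemma itself: one must prove that CMSO rank-$q$ equivalence of $k$-boundaried graphs is a congruence for the five operations, in particular the join. The cleanest route is an Ehrenfeucht--Fra\"iss\'e argument by induction on $q$, showing how Duplicator's winning strategies on two pairs of rank-$q$-equivalent operands combine into a winning strategy on the joined graphs; the set-quantifier case drives the exponential blow-up at each inductive step. The only twist for CMSO is verifying that adding the tracked residues still yields a finite-index equivalence, which is immediate since the residue component is uniformly bounded by a function of $|\phi|$. The resulting $f(|\phi|,k)$ is famously non-elementary, and I would make no attempt to improve it here.
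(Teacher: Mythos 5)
The paper does not prove this statement at all: it is imported verbatim as a known result, cited to Courcelle's work, and used as a black box to derive \cref{thm:mso1}. So there is no ``paper proof'' to compare against; what you have written is a sketch of the standard proof of the cited theorem itself. That sketch is essentially the canonical argument from the literature (nice tree decomposition, finite-index CMSO $q$-types of $k$-boundaried graphs, a Feferman--Vaught-style composition lemma proved by Ehrenfeucht--Fra\"iss\'e induction, and a deterministic bottom-up tree automaton whose size depends only on $q$ and $k$), and the extension to the counting predicates by enriching states with residues modulo the finitely many moduli occurring in $\phi$ is exactly how CMSO is handled. One detail to tighten: at a join node the two children's subgraphs share the bag, so ``simply adding residues modulo $p$'' double-counts the boundary vertices of $X$; the transition must subtract $|X\cap B_t| \bmod p$, which the state already determines since the boundary assignment is part of the type. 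Also, the residues must be tracked for quantified set variables, not only free ones, which is why they are folded into the type rather than kept as a separate root-check; your product construction handles this but the phrasing ``for every free set variable'' undersells it. Neither point is a real gap --- both are standard --- so your proposal is a correct account of the result the paper cites.
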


The second extension is stronger and allows for linear cardinality constraints, that is, expressions of the type $x_1\le x_2$, where $x_1$ and $x_2$ are linear expressions over cardinalities of sets. This extension is called MSO$^\text{GL}_\text{lin}$~\cite{KnopKMT19} and the following is known.
\begin{theorem}[\hspace{-0.001cm}\cite{KnopKMT19}]
	\label{thm:courcelle2}
	MSO$^\text{GL}_\text{lin}$ model checking is in FPT when parameterized by $\nd(G)+|\phi|$.
\end{theorem}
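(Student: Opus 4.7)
The plan is to establish the meta-theorem by reducing MSO$^\text{GL}_\text{lin}$ model checking on graphs of bounded neighborhood diversity to integer linear programming (ILP) with a bounded number of variables, and then invoking Lenstra's algorithm. First I would fix a twin-class decomposition of the input graph: partition $V(G)$ into $k := \nd(G)$ classes $T_1,\ldots,T_k$ whose vertices are pairwise twins, so that each $T_i$ is either a clique or an independent set and, for every $i \neq j$, either all pairs in $T_i \times T_j$ are edges or none are. This reveals the input as a ``blow-up'' of a $k$-vertex quotient graph with multiplicities $|T_1|,\ldots,|T_k|$, and the multiplicities are the only pieces of data that depend on $n$.

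Next I would set up a Feferman--Vaught--style composition argument for MSO with free set variables, tailored to this blow-up structure. Fix a formula $\phi$ of quantifier rank $q$ with free (second-order) variables $X_1,\ldots,X_r$. Inside each twin class $T_i$, the MSO-type of an assignment of $X_1,\ldots,X_r$ depends only on the ``profile'' of that assignment within $T_i$: for each of the $2^r$ colors $c \in \{0,1\}^r$ (encoding to which of the $X_j$ a vertex belongs), one records the number of $c$-colored vertices in $T_i$, but with count values saturating at a threshold $N = N(q)$. I would prove, by an Ehrenfeucht--Fraïssé argument on the blow-up, that two assignments producing the same family of saturated profile vectors satisfy the same rank-$q$ MSO formulas over $G$. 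Consequently, whether $\phi$ holds is determined by a Boolean combination of constraints of the form ``the number of $c$-colored vertices in $T_i$ is exactly $a$'' (for $a < N$) or ``is at least $N$''; the number of such atomic constraints is $f(|\phi|, k)$.

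With the MSO part reduced to profile bookkeeping, I would introduce ILP variables $x_{i,c}$ giving the number of vertices of $T_i$ assigned color $c$, together with the equalities $\sum_c x_{i,c} = |T_i|$. Each atomic MSO constraint becomes a linear (in)equality or a threshold constraint on some $x_{i,c}$, and, crucially, the global linear cardinality constraints of MSO$^\text{GL}_\text{lin}$ — which are linear expressions over $|X_1|,\ldots,|X_r|$ — translate directly into linear inequalities over the $x_{i,c}$ via $|X_j| = \sum_{i,c : c_j = 1} x_{i,c}$. After guessing which disjunct in the normal-form Boolean combination should hold (of which there are $g(|\phi|,k)$ many), one is left with an ILP on $k \cdot 2^r = f'(|\phi|,k)$ variables, solvable in FPT time by Lenstra's algorithm.

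The main obstacle I expect is the composition/equivalence lemma: proving that the saturated profiles really do determine the rank-$q$ MSO-type over the whole blow-up, uniformly in the multiplicities. This requires extending the standard Hanf/Feferman--Vaught locality arguments to respect both the two-sorted (vertex/edge) setting implicit in MSO$_2$ on graphs and the global cardinality operators, while keeping the threshold $N$ and the number of disjuncts bounded by a computable function of $|\phi|$ alone. A secondary technical point is keeping the ILP dimension controlled: one must avoid letting the number of global cardinality atoms multiply the dimension of the program, which is handled by introducing one auxiliary ILP variable per MSO free set variable and linking it by a single equality, rather than re-encoding the global constraints once per disjunct.
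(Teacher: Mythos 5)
The paper does not prove this statement at all: it is imported verbatim from \cite{KnopKMT19}, so there is no in-paper proof to compare against. Your outline nevertheless reconstructs essentially the route taken in that source and in Lampis~\cite{lampis2012algorithmic} for plain MSO: partition $V(G)$ into $\nd(G)$ twin classes, prove a Feferman--Vaught/Ehrenfeucht--Fra\"iss\'e composition lemma saying that the rank-$q$ type of an assignment to the free set variables is determined by per-class colour counts saturated at a threshold $N(q)$, and then encode the surviving combinatorial choices together with the global linear cardinality constraints as an integer program in $f(|\phi|,\nd(G))$ variables, solved by Lenstra's algorithm. The architecture, including the translation $|X_j|=\sum_{i,c:c_j=1}x_{i,c}$ of the global constraints, is the right one.

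There is, however, one step that would fail as written: your plan to extend the composition lemma to ``the two-sorted (vertex/edge) setting implicit in MSO$_2$.'' The saturated-profile lemma is simply false for edge-set variables. The edge slots between two twin classes $T_i$ and $T_j$ (or inside a clique class) form a set of size $|T_i|\cdot|T_j|$, and an edge subset there is an arbitrary bipartite graph; two such subsets with the same cardinality profile can be, say, a perfect matching versus a star forest, which a fixed low-rank MSO$_2$ formula distinguishes. No bounded family of saturated counts can capture this, which is consistent with the fact that MSO$_2$ model checking is already intractable on cliques (neighborhood diversity one) under standard complexity assumptions, whereas MSO$_2$ remains FPT only for the stronger vertex cover parameter. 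The logic MSO$^\text{GL}_\text{lin}$ of \cite{KnopKMT19} quantifies over vertex sets only, so you should restrict your profiles and your ILP variables $x_{i,c}$ to vertex colourings; with that restriction the remainder of your argument goes through and matches the cited proof.
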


We now introduce some basic MSO formulas that we will use to compose the formulas to express \probnameshort.
It is well-known that connectivity and various related concepts can be expressed in MSO.
\begin{itemize}
	\item $\conngraph(X, E')$~tests whether the subgraph $(X, E' \cap X^2)$ of~$G$ is connected:
		\[ 
        \conngraph(X, E') := \forall \emptyset \neq Y \subset X \exists x\in X \setminus Y \exists y \in Y \exists e \in E' : x\in e \land y\in e 
        \]
	\item $\conn(v, w, E')$ tests whether the two vertices~$v, w \in V$ are connected by a path that only uses edges from~$E'\subseteq E$:
		\[
        \conn(v,w,E'):=\exists X\subseteq V: \conngraph(X, E') \land  v\in X\land w\in X
        \]
    \item $\ispath(v, w, E')$ tests whether the two vertices~$v, w \in V$ are connected by a path that \emph{exactly} uses edges from~$E'\subseteq E$:
		\[
        \ispath(v,w,E'):=\conn(v,w,E')\land \forall E''\subset E' : \neg\conn(v,w,E'')
        \]
\end{itemize}
Note that all the above-introduced formulas have constant size.

We first show \cref{thm:mso2}. To this end, we introduce additional predicates that use linear cardinality constraints and hence are MSO$^\text{GL}_\text{lin}$ formulas. Afterwards, we show how to replace these predicates with equivalent (larger) CMSO formulas.
\begin{itemize}
    \item $\isspath(v, w, E')$ tests whether the two vertices~$v, w \in V$ are connected by a path that \emph{exactly} uses edges from~$E'\subseteq E$ and whether this is a shortest path:
		\[
        \isspath(v,w,E'):=\ispath(v,w,E')\land \forall E'': |E''|<|E'|\Rightarrow \neg\conn(v,w,E'')
        \]
\end{itemize}
Now we are ready to give an MSO$^\text{GL}_\text{lin}$ formula $\phi_{G,\Delta,\alpha}$ that expresses \probnameshort. We are looking for a partition of $E$ into $E_1,E_2,\ldots,E_\Delta$. We interpret $e\in E_i$ with edge $e$ receiving label $i$.
\begin{align}
    \phi_{G,\Delta,\alpha} = & \ \exists E_1,\ldots, E_\Delta : \partition_\Delta(E_1,\ldots, E_\Delta,E)\wedge\label{mso1:edgepartition}\\
    & \ \forall v,w \ \exists E^\star :\Biggl( \ispath(v,w,E^\star) \wedge\Biggr. \label{mso1:vw}\\
    & \ \exists X_1,\ldots, X_\Delta : \biggl(\partition_\Delta(X_1,\ldots, X_\Delta,V(E^\star)\setminus\{v,w\})\wedge\biggr.\label{mso1:vtxpartition}\\
    & \ \bigwedge_{1\le i\le \Delta} \Bigl(\forall x\in X_i \ \exists e_1,e_2\in E^\star : \bigl(e_1\neq e_2\wedge x\in e_1\cap e_2\wedge \conn(v,x,E^\star\setminus \{e_1\})\wedge\bigr.\Bigr.\label{mso1:edges}\\
    & \ \bigvee_{1\le i'\le \Delta}(e_1\in E_{i'}\wedge e_2\in E_{(i'+i)\bmod \Delta})\Bigl.\bigr.\bigr)\Bigr)\wedge\label{mso1:labelcheck}\\
    & \ \exists E^{\star\star}:\isspath(v,w,E^{\star\star})\wedge (\sum_{1\le i\le \Delta} i\cdot |X_i|)+1\le \alpha\cdot |E^{\star\star}|\Biggl.\biggl.\biggr)\Biggr)\label{mso1:lengthcheck}
\end{align}

Observe that the size of the formula is in $\Oh(\Delta^2)$ and that it can be computed in $\Oh(\Delta^2)$ time. Now we prove that $\phi$ expresses \probnameshort.

\begin{lemma}\label{lem:msocorr1}
    Given an instance $I=(G,\Delta,\alpha)$ of \probnameshort, we have that $\phi_{G,\Delta,\alpha}$ is satisfiable if and only if $I$ is a yes-instance.
\end{lemma}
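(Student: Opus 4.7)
The plan is to prove both directions of the equivalence via an explicit translation between $\Delta$-periodic labelings of $G$ with stretch at most $\alpha$ and satisfying assignments of $\phi_{G,\Delta,\alpha}$. The key observation is that the six lines of the formula encode, in order, (i) the labeling itself by the partition $E_1,\ldots,E_\Delta$ of $E$ with the convention $\lambda(e)=i$ whenever $e\in E_i$; (ii) for every ordered pair $(v,w)$ a simple static $(v,w)$-path $E^\star$ that will serve as the underlying vertex sequence of a temporal path; (iii) a partition $X_1,\ldots,X_\Delta$ of the internal vertices of this path recording their waiting times; and (iv) a comparison of the induced duration with $\alpha\cdot\dist(v,w)$, where $\dist(v,w)=|E^{\star\star}|$ is materialised through an existentially-quantified shortest static $(v,w)$-path $E^{\star\star}$.

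For the forward direction I would start from a satisfying assignment and read off $\lambda$ from line~\eqref{mso1:edgepartition}. Fix an ordered pair $(v,w)$ and let $E^\star$ be given by line~\eqref{mso1:vw}; since $\ispath(v,w,E^\star)$ holds, $E^\star$ is the edge set of a unique simple path $v=u_0,u_1,\ldots,u_k=w$. For every internal vertex $u_j$, the two edges of $E^\star$ incident to $u_j$ are $\{u_{j-1},u_j\}$ and $\{u_j,u_{j+1}\}$, and exactly one of them (the one on the $w$-side of $u_j$) satisfies the $\conn(v,u_j,E^\star\setminus\{e_1\})$ condition in line~\eqref{mso1:edges}. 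Line~\eqref{mso1:labelcheck} then forces the labels of these two edges to differ by the unique index $i$ with $u_j\in X_i$, which by construction is the waiting time at $u_j$ in the temporal path induced by traversing $E^\star$ under $\lambda$. Summing over $j$, that temporal path has duration $\sum_i i\cdot|X_i|+1$; since $\isspath$ enforces $|E^{\star\star}|=\dist(v,w)$, the inequality in line~\eqref{mso1:lengthcheck} is exactly the stretch bound for the pair $(v,w)$.

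For the backward direction I would fix a labeling $\lambda$ of stretch at most $\alpha$, set $E_i:=\lambda^{-1}(i)$ to witness line~\eqref{mso1:edgepartition}, and for every ordered pair $(v,w)$ pick a fastest temporal $(v,w)$-path $P$ together with any shortest static $(v,w)$-path $Q$; the edge sets of $P$ and $Q$ then witness the existential quantifiers for $E^\star$ and $E^{\star\star}$, respectively. Placing each internal vertex $x$ of $P$ into the class $X_i$ determined by the waiting time of $P$ at $x$ simultaneously witnesses the vertex-partition in~\eqref{mso1:vtxpartition} and the labels-differ-by-$i$ condition in~\eqref{mso1:labelcheck}; the inequality in~\eqref{mso1:lengthcheck} is then the assumed stretch bound for the pair $(v,w)$.

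The one bookkeeping point that requires care is the alignment between the three-case definition of the waiting time in \cref{sec:prelims} (in which equal consecutive labels contribute $\Delta$ rather than $0$) and the modular arithmetic $(i'+i)\bmod\Delta$ appearing in~\eqref{mso1:labelcheck}. This is handled by using $\{1,\ldots,\Delta\}$ as the representative set for residues modulo $\Delta$, so that the residue $0$ is represented by $\Delta$ and the class $X_\Delta$ collects precisely the internal vertices whose two incident path-edges share a label. Once this convention is in place, every conjunct of $\phi_{G,\Delta,\alpha}$ corresponds to exactly one structural ingredient of either a temporal path or a shortest static path, and both implications reduce to routine verifications against the definitions of \cref{sec:prelims}.
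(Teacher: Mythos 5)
Your proposal follows the same route as the paper's proof: read the labeling off the edge partition $E_1,\ldots,E_\Delta$ in one direction, and in the other direction witness the existential quantifiers by the edge set of a fastest temporal path, the waiting-time partition of its internal vertices, and the edge set of a shortest static path. Your explicit treatment of the residue-$0$/$\Delta$ convention in Line~\ref{mso1:labelcheck} is a point the paper's proof glosses over, and you handle it correctly.

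There is, however, one orientation slip in your forward direction that needs repair. You correctly observe that for an internal vertex $u_j$ of the simple path carried by $E^\star$, the guard $\conn(v,u_j,E^\star\setminus\{e_1\})$ is satisfied precisely when $e_1$ is the edge on the $w$-side of $u_j$ (removing the $v$-side edge disconnects $u_j$ from $v$ inside $E^\star$). But Line~\ref{mso1:labelcheck} then asserts $\lambda(e_2)\equiv\lambda(e_1)+i \pmod{\Delta}$ with $e_2$ the $v$-side edge, so the index $i$ of the class containing $u_j$ equals the waiting time of the traversal of $E^\star$ from $w$ to $v$, not from $v$ to $w$ as you claim; since the two directional waiting times at a vertex are not equal in general, your duration computation for the $(v,w)$-traversal does not follow as written. (The paper's own proof silently takes $e_1$ to be the first-used, i.e.\ $v$-side, edge, which contradicts the literal guard; the guard should presumably be negated or applied to $e_2$.) The lemma survives either reading because the formula quantifies over all ordered pairs and $D_G(v,w)=D_G(w,v)$, so bounding the $w\to v$ traversal of some $(v,w)$-path for the pair $(v,w)$ is equivalent to bounding the $v\to w$ traversal of some $(w,v)$-path for the pair $(w,v)$ --- but your write-up needs either this symmetry observation or a corrected identification of $e_1$ to make the waiting-time claim true as stated, and the analogous care is needed when you place internal vertices into the classes $X_i$ in the backward direction.
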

\begin{proof}
    Assume that $\phi_{G,\Delta,\alpha}$ is satisfiable. Then we label edge $e\in E$ with label $i$ if and only if $e\in E_i$. Note that Line~\ref{mso1:edgepartition} of $\phi_{G,\Delta,\alpha}$ guarantees that $E_1,\ldots, E_\Delta$ is a partition of $E$ and hence the labeling is well-defined.
    Now assume for contradiction that here is some $v,w\in V$ such that the duration of a fastest path from $v$ to $w$ is larger than $\alpha\cdot D_G(v,w)$. Consider Lines~\ref{mso1:vw} to~\ref{mso1:lengthcheck} have to hold for $v,w$. It follows from Line~\ref{mso1:vw} that there exists an edge set $E^\star$ such that there is a path $P$ from $v$ to $w$ in $G$ that uses exactly the edges from $E^\star$. Line~\ref{mso1:vtxpartition} implies that there is a partition $X_1,\ldots,X_\Delta$ of the vertices in $V(E^\star)\setminus\{v,w\}$, which are the internal vertices of $P$. Intuitively, a vertex $x\in X_i$ will imply that the waiting time at $x$ is $i$. This is checked in the next two lines of the formula.
    For a vertex $x\in X_i$, Line~\ref{mso1:edges} identifies the two edges $e_1,e_2$ that are incident with $x$ and ensures that $e_1$ is used first. Line~\ref{mso1:labelcheck} ensures that the labels on $e_1$ and $e_2$ imply that the waiting time at $x$ is indeed $i$. Finally, Line~\ref{mso1:lengthcheck} identifies a shortest path from $v$ to $w$ in $G$ and compares the duration of the temporal path along the edges of $E^\star$ with the distance between $v$ and $w$ (times $\alpha$). In particular, it ensures that the duration of the temporal path along the edges $E^\star$ is at most $\alpha\cdot D_G(v,w)$. This is a contradiction to the assumption that the duration of a fastest path from $v$ to $w$ is larger than $\alpha\cdot D_G(v,w)$.

    For the other direction, assume that there exists a labeling $\lambda$ for $G$ such that for all $v,w$ the duration of a fastest path from $v$ to $w$ is at most $\alpha\cdot D_G(v,w)$. We argue that then, the formula $\phi_{G,\Delta,\alpha}$ is satisfiable.
    We define a partition $E_1,\ldots, E_\Delta$ of $E$ as follows. For all $e\in E$ we put $e\in E_i$ if $\lambda(e)=i$. Since every edge receives exactly one label by $\lambda$ this clearly forms a partition of $E$ and hence the predicate in Line~\ref{mso1:edgepartition} of the formula evaluates to true.
    Consider Line~\ref{mso1:vw} of the formula. For every $v,w$, we select $E^\star$ to be the edges of $G$ that are visited by a fastest path $P$ from $v$ to $u$ in the $\Delta$-periodic temporal graph $(G,\lambda)$. Clearly, the edges $E^\star$ form a path from $v$ to $w$ in $G$, and hence the predicate in Line~\ref{mso1:vw} evaluates to true.
    Consider Line~\ref{mso1:vtxpartition} next. We choose $X_1,\ldots,X_\Delta$ to be the following partition internal vertices of the path $P$. If the waiting time at vertex $x$ of the temporal path $P$ is $i$, we put $x\in X_{i+1}$. This clearly forms a partition and hence the predicate in Line~\ref{mso1:vtxpartition} of the formula evaluates to true. Furthermore, by the construction of the partition, we get that Lines~\ref{mso1:edges} and~\ref{mso1:labelcheck} evaluates to true. 
    Finally, consider Line~\ref{mso1:lengthcheck} of the formula. We choose $E^{\star\star}$ to be the edge set of a shortest path from $v$ to $w$ in $G$. Clearly, then the predicate evaluates to true, and the cardinality of $E^{\star\star}$ equals $D_G(v,w)$. The cardinality constraint is fulfilled since by assumption, the duration of $P$ (which equals the sum of the waiting times at the internal vertices) is at most $\alpha\cdot D_G(v,w)$. We can conclude that $\phi_{G,\Delta,\alpha}$ is satisfiable.
\end{proof}

Now we have all the ingredients to prove \cref{thm:mso2}.

\begin{proof}[Proof of~\cref{thm:mso2}]
\cref{thm:mso2} follows directly from \cref{lem:msocorr1}, \cref{thm:courcelle2}, and from the fact that $\phi_{G,\Delta,\alpha}$ is an MSO$^\text{GL}_\text{lin}$-formula with a size in $\Oh(\Delta^2)$ that can be computed in $\Oh(\Delta^2)$ time.
\end{proof}

To obtain \cref{thm:mso1}, we replace all parts of the formula $\phi_{G,\Delta,\alpha}$ that use linear constrains over the cardinality of sets with equivalent ones that only use the predicate $\card_{n,p}$ from CMSO.
This concerns mainly Line~\ref{mso1:lengthcheck} of $\phi_{G,\Delta,\alpha}$.
It is easy to see that using the cardinality extension, we can test whether two vertices have a certain distance.
\begin{itemize}
    \item $\ispath_{i,n+1}(v, w)$ tests whether the two vertices~$v, w \in V$ are connected by a path of length~$i$:
		\[
        \ispath_{i,n+1}(v,w):=\exists E' : \ispath(v,w,E')\land \card_{i,n+1}(E')
        \]
    \item $\dist_{i,n+1}(v, w)$ tests whether the distance between the two vertices~$v, w \in V$ is $i$:
		\[
        \dist_{i,n+1}(v,w):=\ispath_{i,n+1}(v,w)\land  \bigwedge_{1\le i'< i}\neg\ispath_{i',n+1}(v,w)
        \]
\end{itemize}
We remark that the formula $\dist_{i,n+1}(v, w)$ has size $\Oh(i)$. 

Next, we give a formula that, for some fixed $d$, checks whether the duration of a temporal path is exactly $d$. More specifically, the formula checks whether the sum of waiting times according to the partition $X_1,\ldots,X_\Delta$ from Line~\ref{mso1:vtxpartition} equals $d$. To this end we need to consider all vectors $(\ell_1,\ldots,\ell_\Delta)$ such that if $|X_i|=\ell_i$ for all $1\le i\le\Delta$, then $(\sum_i i\cdot |X_i|)+1=d$.
Let~$\mathcal{L}_{d}$ denote the set of all such vectors.
\begin{itemize}
    \item $\duration_{d}(X_1,\ldots,X_\Delta)$ checks whether there exists a $(\ell_1,\ldots,\ell_\Delta)\in\mathcal{L}_{d}$ such that $|X_i|=\ell_i$ for all $1\le i\le\Delta$:
    \[
    \duration_{d}(X_1,\ldots,X_\Delta):=\bigvee_{(\ell_1,\ldots,\ell_\Delta)\in\mathcal{L}_{d}}\left(\bigwedge_{1\le i\le \Delta}\card_{\ell_i,n+1}(X_i)\right)
    \]
\end{itemize}
To obtain the size of the above formula, we need to estimate the number of elements in $\mathcal{L}_{d}$. An straightforward bound is $|\mathcal{L}_{d}|\in \Oh(d^\Delta)$. Hence, the size of $\duration_{d}(X_1,\ldots,X_\Delta)$ is in $\Oh(\Delta\cdot d^\Delta)$. Note that the set $\mathcal{L}_{d}$ (and hence the formula $\duration_{d}(X_1,\ldots,X_\Delta)$) can also be computed in $\Oh(d^\Delta)$ time by enumerating all possible vectors $(\ell_1,\ldots,\ell_\Delta)$ and checking whether $(\sum_i i\cdot \ell_i)+1=d$.

Now we are ready to give a CMSO formula $\psi_{G,\Delta,\alpha}$ that expresses \probnameshort.
\begin{align}
    \psi_{G,\Delta,\alpha}= & \ \exists E_1,\ldots, E_\Delta : \partition_\Delta(E_1,\ldots, E_\Delta,E)\wedge\label{mso2:edgepartition}\\
    & \ \forall v,w \ \exists E^\star :\Biggl( \ispath(v,w,E^\star) \wedge\Biggr. \label{mso2:vw}\\
    & \ \exists X_1,\ldots, X_\Delta : \biggl(\partition_\Delta(X_1,\ldots, X_\Delta,V(E^\star)\setminus\{v,w\})\wedge\biggr.\label{mso2:vtxpartition}\\
    & \ \bigwedge_{1\le i\le \Delta} \Bigl(\forall x\in X_i \ \exists e_1,e_2\in E^\star : \bigl(x=e_1\cap e_2\wedge \conn(v,x,E^\star\setminus \{e_1\})\wedge\bigr.\Bigr.\label{mso2:edges}\\
    & \ \bigvee_{1\le i'\le \Delta}(e_1\in E_{i'}\wedge e_2\in E_{(i'+i)\bmod \Delta})\Bigl.\bigr.\bigr)\Bigr)\wedge\label{mso2:labelcheck}\\
    & \ \bigwedge_{1\le d\le\diam(G)}\Bigl(\dist_{d,n+1}(v,w)\Rightarrow \bigl(\bigvee_{1\le d'\le \alpha\cdot d}\duration_{d'}(X_1,\ldots,X_\Delta)\bigr)\Bigr)\Biggl.\biggl.\biggr)\Biggr)\label{mso2:lengthcheck}
\end{align}
Note that Lines~\ref{mso2:edgepartition} to~\ref{mso2:labelcheck} have length $\Oh(\Delta^2)$. Line~\ref{mso2:lengthcheck} has length $\Oh(\alpha\cdot\diam(G)^2\cdot\Delta\cdot(\alpha\cdot\diam(G))^\Delta)$. Lastly, observe that if $\alpha>\Delta$, we face a trivial yes-instance. Hence, we can estimate the size of $\psi_{G,\Delta,\alpha}$ with $\Oh((\Delta\cdot \diam(G))^{\Delta+2})$. Lastly, we have that $\psi_{G,\Delta,\alpha}$ can be computed in $\Oh((\Delta\cdot \diam(G))^{\Delta+2})$ time. Now we have all the ingredients to prove \cref{thm:mso1}.

\begin{proof}[Proof of~\cref{thm:mso1}]
The correctness of the formula can be shown in an analogous way to \cref{lem:msocorr1}. Hence, \cref{thm:mso1} follows from \cref{thm:courcelle} and from the fact that $\psi_{G,\Delta,\alpha}$ is a CMSO-formula with a size in $\Oh((\Delta\cdot \diam(G))^{\Delta+2})$ that can be computed in $\Oh((\Delta\cdot \diam(G))^{\Delta+2})$ time.
\end{proof}

\section{A Parameterized Local Search Approach}\label{sec:localsearch}
Based on our classical hardness and inapproximability results, it is natural to ask for good polynomial-time heuristic approaches for our problem. 
From this standpoint, we now consider a `parameterized local search' version of our problem, that is, we try to improve the stretch of a given labeling by changing the labels of just a few edges.
In general, in~\emph{parameterized local search} the goal is to improve a given solution by performing a modification that is upper-bounded by $k$ (according to some specified measurement between solutions). 
Here, $k$ is an additional parameter often referred to as the~\emph{search radius}~\cite{M24}.
Marx~\cite{Marx08} first considered parameterized local search for the~\textsc{Traveling Salesperson} problem in 2008 and since then, this kind of local search problems was considered for many optimization problems (see~\cite{Marx08,FFL+12,GHK14,M24} for some collections of problems).

The local search version of our problem we consider in this section generalizes the classical parameterized version of local search, as it does not only ask for any improvement but rather for an improvement to some desired stretch~$\alpha_0$.
It is formally defined as follows.

\problemdef{\textsc{Local Search \STGR} (\LS)}{A graph $G = (V,E)$, $\Delta\in \mathbb{N}$, a labeling~$\lambda\colon E \to [1,\Delta]$, $k\in \mathbb{N}$, and a number $\alpha_0\geq 1$.}{Does there exist a labeling~$\lambda'$ that disagrees with~$\lambda$ on at most~$k$ edges, such that the stretch of~$\lambda'$ is at most $\alpha_0$?}

Note that this problem can also be seen as a generalization of~\probnameshort, as \probnameshort is obtained by setting~$k$ to the number of edges of the input graph.
Generally, our goal is to analyze the parameterized complexity of~\LS with respect to the parameter~$k$.
We present an XP-algorithm for~$k$, showing that we can efficiently decide whether a stretch of~$\alpha_0$ can be achieved from our current labeling by changing only a constant number of edge-labels. Note that by \cref{lem:decisionvsoptimization}, we can also find the optimal stretch in thi search space with an additional polynomial factor in the running time.
A natural question is then to ask for an FPT algorithm for~$k$.
As we show, such an algorithm presumably does not exist, as the problem is~W[2]-hard for~$k$, even when asking for any improvement.

\begin{theorem}
\label{xp-fpt-ls-given-F-thm}
\LS admits an XP algorithm when parameterized with~$k$.
\end{theorem}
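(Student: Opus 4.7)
The plan is a two-stage exhaustive search. In Stage~1 we enumerate all $\binom{m}{k} \in \Oh(m^k)$ subsets $F \subseteq E$ with $|F| \le k$ as candidates for the set of edges whose labels we will change. In Stage~2, for each such $F$, we decide whether there is an assignment $\lambda' : F \to \{1, \ldots, \Delta\}$ such that the combined labeling (which agrees with $\lambda$ on $E \setminus F$ and with $\lambda'$ on $F$) has stretch at most $\alpha_0$. Once a candidate $\lambda'$ is fixed, the stretch can be checked in polynomial time by computing all-pairs fastest temporal path durations using standard techniques, so the bottleneck is Stage~2.

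Since $\Delta$ is encoded in binary, we cannot afford the naive enumeration of $\Delta^k$ labelings for each $F$. The crucial step is therefore to argue that it suffices to consider a polynomial-size candidate set $C_F \subseteq \{1,\ldots,\Delta\}$ of label values. A natural choice is
\[
C_F := \{\lambda(e'') + c \bmod \Delta \mid e'' \in E \setminus F,\ c \in \{-(n-1), \ldots, n-1\}\},
\]
which has cardinality $\Oh(nm)$. Enumerating the $|C_F|^k$ labelings of $F$ drawn from $C_F$ and checking each then yields a total running time of $\Oh(m^k \cdot (nm)^k \cdot \mathrm{poly}(n,m,\log \Delta)) = n^{\Oh(k)}$, as required for XP in~$k$.

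The main obstacle will be proving that restricting to $C_F$ is without loss of generality, i.e., that any feasible $\lambda'$ can be transformed into a feasible $\lambda''$ with $\lambda''(F) \subseteq C_F$ and no larger stretch. The intended tool is a shifting lemma: for fixed values of the remaining variable labels, the duration of each candidate temporal path of length at most $n-1$ is piecewise linear in $\lambda'(e)$, with breakpoints exactly where some waiting time on that path wraps around modulo~$\Delta$; consequently the stretch is piecewise linear in $\lambda'(e)$, and we can monotonically shift $\lambda'(e)$ in a non-increasing direction until it reaches a breakpoint which, after processing the edges of $F$ in a suitable order, will lie in $C_F$. The technical care lies in handling the cyclic wrap-around modulo~$\Delta$, the interdependencies among the $k$ variable labels of $F$ (which force us to allow shifts of up to $n-1$ around fixed labels so that propagation through a chain of $F$-edges stays inside $C_F$), and the corner case where a connected subgraph of $F$-edges is not incident to any edge of $E \setminus F$, in which the variable labels of the whole subgraph must be shifted together.
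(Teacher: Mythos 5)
Your outer loop (enumerating all $\Oh(m^k)$ candidate sets $F$ and then solving a relabeling subproblem on $F$) is exactly the paper's first step, and your observation that the binary encoding of $\Delta$ rules out enumerating $\Delta^k$ labelings correctly identifies the crux. The gap is in your resolution of that crux: the claim that any feasible labeling can be shifted so that every label of $F$ lands in the set $C_F$ of values within $\pm(n-1)$ of some fixed label is false, and the shifting lemma you sketch does not establish it. The stretch, viewed as a function of a single variable label $x=\lambda'(e)$ with the other labels held fixed, is a maximum over vertex pairs of (minima over paths of) duration functions, and these duration functions have slope $-1$ in $x$ when $e$ is the \emph{first} edge of the path and slope $+1$ when $e$ is the \emph{last} edge. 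A maximum of an increasing and a decreasing constituent attains its minimum at their crossing point, which is an interior point of a piece between wrap-around breakpoints and can lie at distance $\Theta(\Delta)$ from every fixed label. Concretely: let $e=\{u,v\}\in F$ and let $u$ and $v$ each have a pendant neighbor via a fixed edge of label $1$ ($\{a,u\}$ and $\{v,b\}$). The pair $(a,v)$ contributes stretch $x/2$ and the pair $(u,b)$ contributes stretch $(\Delta+2-x)/2$, so with $\alpha_0=(\Delta+2)/4$ the only feasible label is $x=(\Delta+2)/2$, which is not in $C_F$ once $\Delta>2n$. Your monotone shift simply gets stuck at this interior local minimum, which is neither a wrap-around breakpoint nor near any fixed label, so restricting to $C_F$ is \emph{not} without loss of generality.

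The paper circumvents precisely this obstacle differently. It partitions $[1,\Delta]$ into $2t=\Oh(k\cdot\deg_{\max})$ zones induced by the labels of the edges of $E\setminus F$ incident with $V(F)$, guesses for each $F$-edge its zone and a full relative-order profile of the $F$-labels within each zone ($(2t)^k\cdot 2^k k!$ guesses). Once the profile is fixed, every waiting time at an \emph{internal} vertex of every path is determined, so the only remaining dependence of any duration on the exact value of $\lambda'(e_i)$ comes from paths in which $e_i$ is the first or last edge; this dependence is monotone in each direction, so the exact value within the zone can be found by binary search over the (exponentially large) zone in $\Oh(\log\Delta)$ steps per edge. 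In other words, the paper replaces your polynomial \emph{candidate set of values} by a polynomial (in $n$, XP in $k$) \emph{set of combinatorial types} plus a search inside each type, and it is the search step that recovers the far-from-every-fixed-label optima that $C_F$ misses. To repair your proof you would have to enlarge $C_F$ to include the crossing points of all relevant pairs of duration functions (which involve $\Delta$ and ratios of distances, not just translates of fixed labels), and control how these interact across the $k$ coupled variables; as written, the argument does not go through.
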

\begin{proof}
Our algorithm iterates over all~$\mathcal{O}(|E|^k)$ subsets~$F\subseteq E$ of size at most~$k$.
For each such subset~$F$, we then ask the question, whether we can achieve a stretch of~$\alpha_0$ by changing only the labels of edges of~$F$.
That is, we iteratively solve the following intermediate problem.

\problemdef{\textsc{Fixed Edges Relabel \probnameshort}}{A graph $G = (V,E)$, $\Delta\in \mathbb{N}$, a set~$F\subseteq E$, a labeling~$\lambda'\colon E\setminus F \to [1,\Delta]$, and a number $\alpha_0\geq 1$.}{Does there exist a labeling~$\lambda_F \colon F \to [1,\Delta]$, such that the stretch of~$\lambda' \cup \lambda_F$ is at most $\alpha_0$?}

Based on the $|E|^k \cdot n^{\Oh(1)}$~time (which is XP time) for the iteration over all candidate sets~$F$, to present our XP algorithm, it suffices to show that~\textsc{Fixed Edges Relabel \probnameshort} admits an XP algorithm for~$k$.

Let $e_1, e_2, \ldots, e_k$ be the edges of the given set $F$, enumerated arbitrarily, and assume that $k\neq |E|$, that is, there exists at least one edge of $E$ that is not in~$F$. 
For every $1\leq i \leq k$ denote by $u_i$ and $v_i$ the endpoints of the edge $e_i$; note that some edges of $F$ may have common endpoints. Denote the set of all endpoints of the edges of $F$ by $V_F = \{u_i,v_i:1\leq i \leq k\}$. We now define the set $N_F = \bigcup_{v\in V_F} \{e \in E\setminus F:v\in e\}$ of all edges in $G$ that are not in $F$ but share at least one common endpoint with some edge in~$F$.
Furthermore, we define the set $L_0 = \{\lambda(e):e\in N_F\}$ of all distinct time-labels that appear in at least one edge of $N_F$. Let $\ell_1 < \ldots < \ell_t$ be the labels of $L_0$ in increasing order. 
For simplicity of the presentation, we assume without loss of generality that $\ell_1=1$; this can be achieved by subtracting $\ell_1-1$ from the time-label of every edge. 
Note that $t=|L_0| \leq |N_F| \leq \sum_{v\in V_F}|N(v)| - |F| \leq k \cdot \deg_{\max}\in \Oh(n^2)$. 
Finally, we define the $2t$ subsets $Z_1, \ldots, Z_{2t}$ of time-labels, called \textit{zones}, as follows. For every $j=1,2,\ldots,t-1$, we define 
$Z_{2j-1}=\{\ell_j\}$ and 
$Z_{2j} = \{\ell_j + 1, \ldots, \ell_{j+1}-1\}$. For $j=t$, we define 
$Z_{2t-1}=\{\ell_t\}$ and 
$Z_{2t} = \{\ell_t +1, \ldots, \Delta \}$. 

For every edge $e_i \in F$, we guess in which zone $Z_j$ the label $\lambda(e_i)$ lies. These are in total $(2t)^k$ different cases, as we have $k$ edges in $F$ and $2t$ potential zones for the label of each edge. 
Furthermore, for every allocation of the edges of $F$ to the $2t$ different zones of time-labels, we also guess a permutation of the time-labels of the edges of $F$ which are allocated to the same zone. These are in total $k!$ different permutations. 
Each of these permutations corresponds to a different relative order of the time-labels of the edges of $F$. 
For each of these permutations, we also guess whether two consecutive time-labels within the same zone are equal or not; these are $2^k$ different choices. 
Summarizing, for every allocation of the edges of $F$ to the $2t$ different zones of time-labels, we guess the relative order of the time-labels of the edges of $F$ in each zone, by distinguishing which time-labels are equal and which are different. We call each of these relative orders a \textit{time-label profile}; there are at most $2^k k!$ different profiles.

Let $P$ be an arbitrary temporal path, and let $e_1,e_2$ be any two consecutive edges in $P$, with $v$ being their common endpoint. Note that, once we have fixed an allocation of the edges of $F$ to the $2t$ different zones and a time-label profile, we know the relative order of the time-labels $\ell_1$ and $\ell_2$ of the edges $e_1$ and $e_2$, respectively. More specifically, if $\ell_2 > \ell_1$, then the waiting time at $v$ is $\ell_2-\ell_1$; 
if $\ell_2 < \ell_1$, then the waiting time at $v$ is $\Delta + \ell_2-\ell_1$; if $\ell_2 = \ell_1$, then the waiting time at $v$ is $\Delta$.

Consider two arbitrary vertices $z$ and $w$ in $G$ and an arbitrary edge $e_i = u_i v_i$ of $F$. Let~$P$ be a fastest temporal path $P$ from $z$ to $w$, and assume that $P$ passes through $e_i$. 
Without loss of generality, let $P$ first visit $u_i$ and then $v_i$. 
Note that, given a fixed allocation of the edges of $F$ to the $2t$ different zones and a fixed time-label profile, if $e_i$ is neither the first nor the last edge of $P$, then the duration of $P$ is \textit{independent} of the exact time-label of the edge $e_i$. 
The reason is that, in this case, the relative order of the time-label of $e_i$, compared to the time-labels of the previous and the next edge on $P$, is fixed in the given time-label profile, regardless of the exact value of the time-label of $e_i$. 

Now suppose that $e_i$ is the \textit{last} (resp.~\textit{first}) edge of $P$, i.e.~$w=v_i$ (resp.~$w=u_i$). Then, the duration of $P$ is smallest when the time-label of $e_i$ is the smallest (resp.~largest) possible, while still respecting the relative order of the time-labels in the given profile. 

Note that, if we fix a \textit{specific} time-label for each edge of $F$, we can trivially compute the stretch $\alpha$ of this specific time-labeling. This can be done by just computing the fastest temporal path from every vertex $z$ to every other vertex $w$, dividing its duration by the length of the shortest path between $z$ and $w$ in the underlying graph $G$, and returning the largest of these ratios as the stretch $\alpha$ of this time-labeling.

Our algorithm proceeds as follows, while examining a fixed time-label profile. 
For each edge $e_i\in F$, we perform binary search for the time-label of $e_i$ in the zone $Z_j$ in which $e_i$ is allocated (indepentently of any other edge $e_{i'}$), until we find a time-labeling (if it exists) that gives a stretch $\alpha$ that is at most the stretch threshold $\alpha_0$. During this procedure of performing multiple binary searches on each edge of $F$ independently, we only consider those time-labelings that conform with the current time-label profile. 

We iterate over all possible $2^k k!$ different profiles and, for each of them, we perform the above binary searches. If, during this procedure, we detect a time-labeling of the edges of~$F$ that gives a stretch $\alpha \leq \alpha_0$, then we return this time-labeling; otherwise, we return that such a labeling does not exist. 
The running time of this algorithm for \textsc{Fixed Edges Relabel \probnameshort} is 
\begin{equation*}
    (2k \cdot \deg_{\max} \cdot \log\Delta)^k 2^k k! \cdot (n+\log\Delta)^{\mathcal{O}(1)} 
=(2n^2 \log\Delta)^k 2^k k! \cdot (n+\log\Delta)^{\mathcal{O}(1)},
\end{equation*}
as there are at most $2t = 2k \cdot \deg_{\max} \leq 2n^2$ different allocations of each of the $k$ edges of~$F$ to a time-label zone, at most $2^k k!$ different profiles, all independent binary searches on the $k$ edges can be performed in $(\log\Delta)^k$ time, which is XP time, since the input size includes $\log\Delta$. Thus this is an XP algorithm for \textsc{Fixed Edges Relabel \probnameshort} with respect to the parameter $k$. By implementing the above algorithm iteratively for each of the $\mathcal{O}(|E|^k)$ subsets $F\subset E$ of $k$ edges, we obtain an XP algorithm for \textsc{LS \probnameshort} with respect to $k$.

Finally note that, if $k= |E|$, that is, every edge of the graph is in the set $F$, then $N_F=\emptyset$ and $t=0$. In this case, we create just one time-label zone $Z_1=\{1,\ldots,\Delta\}$ that contains all possible $\Delta$ time-labels, and then we perform exactly the same algorithm as above.
\end{proof}

We now show that there is presumably no FPT algorithm for~\LS for parameter~$k$.

\begin{theorem}
\LS is \textrm{W[2]}-hard when parameterized by~$k$.
\end{theorem}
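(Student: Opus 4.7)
The plan is to reduce from \textsc{Set Cover} parameterized by the solution size $k$, which is classically W[2]-hard. Given an instance $(U,\mathcal{F},k)$ of \textsc{Set Cover}, I will construct an equivalent \LS instance $(G,\Delta,\lambda,k,\alpha_0)$ in which each set $S\in\mathcal{F}$ is represented by a distinguished \emph{set edge} $e_S$ in $G$, and each element $x\in U$ is represented by a \emph{check gadget} consisting of a vertex pair $(u_x,v_x)$ together with, for every $S\in\mathcal{F}$ with $x\in S$, a short path from $u_x$ to $v_x$ routed through $e_S$. The initial labeling $\lambda$ and the target stretch $\alpha_0$ will be calibrated so that every check pair $(u_x,v_x)$ has stretch strictly larger than $\alpha_0$ under $\lambda$, while changing $\lambda(e_S)$ to a uniquely prescribed value $c_S$ simultaneously produces a fast temporal path in every check gadget corresponding to an element $x\in S$. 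A set of $k$ label changes then corresponds to a size-$k$ subfamily of $\mathcal{F}$ covering $U$.

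First, I would fix $\Delta$ to be polynomial in $|\mathcal{F}|$ and set up the labels on each check path as a chronologically increasing sequence, with the slot occupied by $e_S$ acting as the unique ``gap.'' Under $\lambda$, I would set $\lambda(e_S)$ to a value that forces a cross-period wait on every path through $e_S$, making each check pair slow; relabeling $e_S$ to the intended $c_S$ removes that wait, aligning the labels of the path into a strictly increasing sequence and bringing the duration down to $\alpha_0\cdot\dist(u_x,v_x)$. Second, I would verify the straightforward direction: any size-$k$ set cover yields $k$ relabelings whose resulting labeling has stretch at most $\alpha_0$. Third, I would prove the converse by projecting any witness of $k$ label changes onto the set edges it touches and arguing this forms a valid cover. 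Finally, I would add ``padding'' so that no other vertex pair in $G$ can be broken by the allowed relabelings; this can be done by keeping the set edges essentially pendant with respect to the rest of the graph, so that relabeling them only affects distances inside check gadgets.

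The main obstacle is establishing what I would call \emph{gadget robustness}: any family of at most $k$ label changes achieving stretch $\alpha_0$ must change only set edges, and must change, for every $x\in U$, some $e_S$ with $x\in S$. Concretely, one has to rule out that $(u_x,v_x)$ is repaired by relabeling internal check-gadget edges, by relabeling a set edge $e_{S'}$ with $x\notin S'$, or by some combination of several changes together forging an alternative fast route. I would handle this by designing the check gadgets so that every path from $u_x$ to $v_x$ of length at most $\dist(u_x,v_x)+1$ traverses exactly one set edge~$e_S$ with $x\in S$, by choosing $\alpha_0$ tight enough that any longer detour yields duration exceeding $\alpha_0\cdot\dist(u_x,v_x)$ no matter how relabeled, and by prefilling the non-gap labels of each check path with a specific chronological pattern whose unique completion to a temporal path of minimum duration is $\lambda'(e_S)=c_S$. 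Once gadget robustness is in place, the equivalence with \textsc{Set Cover} follows directly and the W[2]-hardness of \LS for the parameter $k$ is established.
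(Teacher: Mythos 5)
Your high-level strategy is the same as the paper's: a parameterized reduction from a W[2]-hard covering problem (the paper uses \HS via its incidence graph; your \textsc{Set Cover} formulation is the standard dual), where a budget of $k$ relabelings corresponds to a size-$k$ cover and each ``selectable'' edge simultaneously repairs all check pairs of the elements it covers. The forward direction of your reduction is fine. The problem is the converse, and you have correctly identified it as the main obstacle --- but the tools you propose do not actually close it.

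Concretely, ``projecting any witness of $k$ label changes onto the set edges it touches'' is not a valid argument: a witness may repair the pair $(u_x,v_x)$ by relabeling an \emph{internal} edge of a check path (or an edge incident to $u_x$ or $v_x$) rather than the set edge $e_S$, in which case the projection misses $x$ entirely, and a witness mixing a few such internal repairs with a few set-edge repairs projects to a subfamily that is not a cover. Your two safeguards do not exclude this: the ``unique completion'' property constrains only what value may be written \emph{into the gap} $e_S$, not what happens when a non-gap edge of the path is rewritten; and the length bound rules out detours but not relabelings along the intended path itself. What is actually needed here --- and what the paper's proof supplies --- is an explicit \emph{exchange argument}: whenever the witness changes a non-set edge, one shows it can be replaced by a change to an appropriate set edge without increasing the budget and without breaking any other pair, applied exhaustively until the witness lives entirely on set edges. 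The paper additionally attaches $k+1$ pendant probe vertices to its hub so that at least one probe edge is guaranteed to be untouched, which anchors the case analysis; your construction has no analogous device. Finally, \probnameshort constrains \emph{ordered} pairs, so each check gadget must admit a fast temporal path in both directions; a single chronologically increasing path through $e_S$ cannot serve both orientations, so each gadget needs two oppositely oriented paths through $e_S$ whose durations are both controlled by the single value $c_S$ --- this is not addressed in your design and is a second place where the argument would need real work.
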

\begin{proof}
We show that it is W[2]-hard to decide whether a stretch of~$\alpha_0 = \alpha -\epsilon$ can be achieved by changing the labels of at most~$k$ labels.

We reduce from \HS which is W[2]-hard for~$k$~\cite{C+15}.
\problemdef{\HS}{A universe $U$, a set of~\emph{hyperedges}~$\mf \subseteq 2^U$, and~$k\in \mathbb{N}$.}{Is there a \emph{hitting set} of size at most~$k$ for~$\mf$, that is, a set~$S\subseteq U$ of size at most~$k$, such that~$S \cap F \neq \emptyset$ for each~$F\in \mf$?}

Let~$I:= (U,\mf,k)$ be an instance of~\HS and assume without loss of generality that each hyperedge has size at least~$k$.
Let~$G$ be the incidence graph of and take the incidence graph of~$I$, that is, the bipartite graph that (i)~contains for each element~$u\in U$ a vertex~$u$, (ii)~contains for each hyperedge~$F\in \mf$ a vertex~$v_F$, and (iii)~contains an edge~$\{u, v_F\}$ with~$u\in U$ and~$F\in \mf$ if and only if~$u\in F$.
Let~$V_\mf := \{v_F\mid F\in \mf\}$.
We now extends~$G$ as follows:
First, we make~$V_F$ a clique in~$G$.
Next, we add a vertex~$c$ to~$G$ which we make adjacent to all vertices of~$U$.
Finally, we add~$k+1$ degree-1 neighbors for~$c$ to~$G$.
We denote these degree-1 neighbors by~$a_i$ with~$i\in [1,k+1]$.
This completes the definition of~$G$.

Now, let~$\Delta := 2$ and let~$\lambda\colon E(G) \to \{1,2\}$ be the labeling that assigns label~$1$ to each edge.

Note that the diameter of~$G$ is 3 and that the vertex pairs of distance 3 are exactly the pairs of~$A\times V_\mf$ where~$A:= \{a_i\mid i\in [1,k+1]\}$.
Hence, the stretch is~$\Delta - \frac{\Delta-1}{3} = \frac{5}{3}$ and can only be improved if for each pairs of~$A\times V_\mf$, at least one path has duration less than 5.

We now show that there is a hitting set of size at most~$k$ for~$\mf$ if and only if there is a labeling~$\lambda'$ that changes the labels of at most~$k$ edges, such that the stretch of~$\lambda'$ is strictly better than the stretch of~$\lambda$.

$(\Rightarrow)$
Let~$S$ be a hitting set of size at most~$k$ for~$\mf$.
We define a labeling~$\lambda'$ as follows:
For each element~$u\in S$, we set~$\lambda(\{c,u\}) := 2$.
We assign label~$1$ to all other edges.
Hence, only the label of at most~$k$ edges was changes with respect to~$\lambda$.
It remains to show that the stretch of~$\lambda'$ is better than the stretch of~$\lambda$.
Since the stretch of~$\frac{5}{3}$ is only achieved for vertex pairs of~$A\times V_\mf$.
We show that for these vertex pairs there is a path of duration~$3$ under~$\lambda'$, which then implies that the stretch of~$\lambda'$ is better than the stretch of~$\lambda$.

Let~$F\in \mf$ and let~$a_i\in A$.
Since~$S$ is a hitting set for~$\mf$, there is some element~$u\in S \cap F$.
Hence, by definition of~$\lambda'$, $\lambda'(\{c,u\}) = 2$.
Moreover, since~$u\in F$, $\{v_F,u\}$ is an edge of~$G$ that receives label~$1$ under~$\lambda'$.
Finally, $\{c,a_i\}$ also receives label~$1$ under~$\lambda$.
This implies that the path~$(v_F,u,c,a_i)$(and its reverse) has labels~$(1,2,1)$ and thus a duration of~$3$.
Consequently, $\lambda'$ has a stretch strictly better than~$\frac{5}{3}$.

$(\Leftarrow)$
Let~$\lambda'\colon E(G) \to \{1,2\}$ be a labeling that achieves a stretch strictly better than~$\frac{5}{3}$ while only changing the labels of at most~$k$.
Let~$X$ denote these edges.
This implies that for each distance-3 vertex pair~$(a_i,v_F)\in A \times V_F$, there are paths of duration at most~$4$ between~$a_i$ and~$v_F$ under~$\lambda'$.
Since~$|A| = k+1$, there is at least one edge of~$\{\{a_i,c\}\mid i\in [1,k+1]\}$ still receives label~$1$ under~$\lambda'$.
Hence, we can assume without loss of generality that~$X \cap \{\{a_i,c\}\mid i\in [1,k+1]\} = \emptyset$.

Now, we analyze the structure of the labeling~$\lambda'$.
First, we show that we can assume without loss of generality that~$X$ is a subset of the edges of~$\{\{c,u\}\mid u\in U\}$.

Let~$F\in \mf$.
Since there is a path under~$\lambda'$ from~$v_F$ to~$a_1$ of duration at most~$4$, we only have to consider two cases.
\begin{itemize}
\item Each path~$P$ of duration at most~$4$ between~$v_F$ and~$a_1$ has length at least~$4$ (and thus exactly~$4$), or 
\item there is a path~$P$ of duration at most~$4$ and length~$3$ between~$v_F$ and~$a_1$.
\end{itemize}
In the first case, the path~$P$ is of the form~$(v_F, v_{F'},u,c,a_1)$ for a hyperedge~$F' \in \mf$ distinct from~$F$ and an element~$u\in F'$.
Since~$P$ has duration~$4$ and~$\lambda'(\{c,a_1\}) = 1$, $P$ has the labels~$(2,1,2,1)$, which implies that~$\{v_F,v_{F'}\}\in X$.
Let~$u'\in F$. 
Consider the labeling~$\lambda''$ obtained from~$\lambda'$ by labeling~$\{v_F,v_{F'}\}$ to~$1$ and labeling~$\{c,u'\}$ to~$2$.
Hence, the number of changes with respect to~$\lambda$ is at most~$k$ and each pair of~$A\times V_\mf$ still admits paths of duration at most~$4$.
This is due to the fact, that (i)~for~$v_{F'}$, the path~$(v_{F'},u,c,a_1)$ still has labels~$(1,2,1)$ and thus a duration of~$3 < 4$, (ii)~$(v_F,u',c,a_1)$ has now labels~$(1,2,1)$ or~$(2,2,1)$, which implies a path of duration at most~$4$, and (iii)~only the vertices~$v_F$ and~$v_{F'}$ can traverse the edge~$\{v_F,v_{F'}\}$ on any path of length at most~$4$ towards a vertex of~$A$.

Hence, after applying this exchange operation exhaustively, we can assume that for each~$F\in \mf$, there is a path of length~$3$ and duration at most~$4$ from~$v_F$ to~$a_1$.
That is, there is a path~$P=(v_F,u,c,a_1)$ for some~$u\in F$.
Since this path has duration at most~$4$, $X$ contains at least one of~$\{v_F,u\}$ and~$\{u,c\}$ (since~$\{c,a_1\}\notin X$).
If~$\{u,c\} \notin X$, then~$\{v_F,u\}\in X$.
Consider the labeling~$\lambda''$ obtained from~$\lambda'$ by labeling~$\{v_F,u\}$ with~$1$ and labeling~$\{u,c\}$ with~$2$.
Hence, the number of changes with respect to~$\lambda$ is at most~$k$ and each pair of~$A\times V_\mf$ still admits paths of duration at most~$4$ and length at most~$3$.

Hence, after exhaustively applying this operation, for each~$F\in \mf$, there is some~$u\in F$ such that~$\{c,u\}\in X$.
Since~$X$ has size at most~$k$, this implies that there is a hitting set of size at most~$k$ for~$\mf$.
\end{proof}

\section{Conclusion}
In this paper, we investigated a natural temporal graph realization problem, where the durations of the fastest connections in the produced (periodic) temporal graph shall be at most a multiplicative factor times the corresponding distances in the static graph.
Among other results, showed that the problem is hard to solve exactly and also hard to approximate within a constant factor on general instances.
We also designed a polynomial time algorithm for general graphs and that achieves a factor-2 approximation on trees, whereas there are NP-hard instances for which the guaranteed stretch is tight.
Our work leaves several natural future work directions:
\begin{itemize}
    \item Are there instances where the optimal stretch is strictly larger than $\frac{\Delta+1}{2}$?
    \item What is the complexity of \probnameshort for $\Delta=2$?
    \item Can we identify larger graph classes than trees, where we can achieve a constant-factor approximation?
    \item What is the parameterized complexity of \probnameshort with respect to structural parameters of the input graph (independent of $\Delta$, e.g.\ treewidth by itself)?
    \item What types of results can we achieve when we allow an additive stretch (instead of or in addition to the multiplicative stretch), or if we want to minimize the \emph{average} stretch?
\end{itemize}

\bibliographystyle{plain}
\bibliography{bibliography}
\end{document}